\pgfplotsset{compat=1.18}
\DeclareMathOperator*{\argmin}{argmin}
\DeclareMathOperator*{\argmax}{argmax}
\DeclareMathOperator{\Range}{Range}
\DeclareMathOperator{\Diam}{Diam}
\newcommand{\grad}{\nabla}
\DeclarePairedDelimiter{\ceil}{\lceil}{\rceil}
\newcommand{\1}{\mathbbm{1}}
\newcommand{\R}{\mathbb{R}}
\newcommand{\RNN}{\R_{0+}}
\newcommand{\R}{\mathbb{R}}
\newcommand{\RNN}{\R_{\mathsmaller{0+}}}
\DeclareMathOperator{\sgn}{sgn}
\DeclareMathOperator*{\Expect}{\mathbb{E}}
\DeclareMathOperator*{\Prob}{\mathbb{P}}
\newcommand{\x}{\vec{x}}
\newcommand{\lv}{\mathcal{S}}
\newcommand{\epsv}{\bm{\varepsilon}}
\newcommand{\wv}{\bm{w}}
\newcommand{\distributed}{\thicksim}
\newcommand{\Mean}{\mathrm{M}}
\newcommand{\ProbDist}{\mathcal{D}}
\newcommand{\frange}{r}
\DeclareFontShape{OMX}{cmex}{m}{b}{<-> cmexb10}{}
\newcommand{\todo}[1]{\textcolor{red!50!black}{ToDo: #1}}
\newcommand{\cyrus}[1]{\textcolor{green!50!black}{Cyrus: #1}}
\newcommand{\Input}{\textbf{input}: }
\newcommand{\Output}{\textbf{output}: }
\newcommand{\wrt}{w.r.t.}
\newcommand{\NGroups}{g}
\providecommand{\leftsquigarrow}{%
  \mathrel{\mathpalette\reflect@squig\relax}%
}
\newcommand{\reflect@squig}[2]{%
  \reflectbox{$\m@th#1\rightsquigarrow$}%
}
\DeclarePairedDelimiter{\abs}{\lvert}{\rvert}
\DeclarePairedDelimiter{\norm}{\lVert}{\rVert}
\tikzset{
  declare function={
      sgn(\x) = (and(\x<0, 1) * -1) + (and(\x>0, 1) * 1);
      gm2(\va,\vb) = (sqrt(\va * \vb ));
      pm2(\va,\vb,\p) = ((\va ^ \p + \vb ^ \p) / 2) ^ (1 / \p);
      pmw2(\va,\vb,\wa,\wb,\p) = (\wa * \va ^ \p + \wb * \vb ^ \p) ^ (1 / \p);
      amean3(\va,\vb,\vc) = ((\va + \vb + \vc) / 3);
      as3(\va,\vb,\vc,\pp) = ((\va ^ \pp + \vb ^ \pp + \vc ^ \pp) / 3);
      pm3(\va,\vb,\vc,\pp) = as3(\va,\vb,\vc,\pp) ^ (1 / \pp);
      asw3(\va,\vb,\vc,\wa,\wb,\wc,\pp) = (\wa * \va ^ \pp + \wb * \vb ^ \pp + \wc * \vc ^ \pp);
      pmw3(\va,\vb,\vc,\wa,\wb,\wc,\pp) = asw3(\va,\vb,\vc,\wa,\wb,\wc,\pp) ^ (1 / \pp);
    },
}
\edef\PSum{\pgfmathresult}
\edef\PSum{\pgfmathresult}
\edef\ALim{\pgfmathresult}
\edef\P{\pgfmathresult}
  \edeef\PSum{\pgfmathresult}
\xdef\PSum{0}
\xdef\ALen{dim(#1)}
\xdef\P{array(#1,0)}
  \xdef\PSum{\PSum+array(#1,\i)^\P}
\xdef\ALen{dim(#1)}
\xdef\P{array(#1,0)}
\xdef\PSum{psum(#1)}
\xdef\Result{(\PSum/(\ALen-1))^(1/p)}
\theoremstyle{definition}
\newtheorem{definition}{Definition}[section]
\newtheorem{lemma}[definition]{Lemma}
\newtheorem{theorem}[definition]{Theorem}
\newtheorem{example}[definition]{Example}
\title{{Algorithms and Analysis for Optimizing Robust Objectives \protect\\ in Fair Machine Learning}}
\author{Cyrus Cousins}
\affil{University of Massachusetts Amherst
\\ Columbia Workshop on Fairness in Operations and AI}
\date{December 2023}
\newcommand{\Wv}{\mathcal{W}}
\newcommand{\Lv}{\mathcal{S}}
\newcommand{\M}{\mathcal{M}}
\renewcommand{\lv}{\bm{s}}
\DeclareMathOperator*{\minmax}{\substack{\min \\ \max}}
\DeclareMathOperator*{\argminmax}{arg\minmax}
\DeclareMathOperator*{\maxmin}{\substack{\max \\ \min}}
\DeclareMathOperator*{\argmaxmin}{arg\maxmin}
\DeclareMathOperator*{\infsup}{\substack{\inf \\ \sup}}
\DeclareMathOperator*{\supinf}{\substack{\sup \\ \inf}}
\DeclareMathOperator{\CH}{CH}
\DeclareMathOperator{\Payoff}{P}
\newcommand{\titlefont}{\it\bfseries}
\colorlet{titlecol}{black!80!red}
\setlist{wide,labelwidth=0pt,labelindent=0pt,labelsep=3pt,nosep,topsep=0pt}
\setlist[enumerate,1]{label=\arabic*),ref=\arabic*}
\setlist[enumerate,2]{label=\Alph*),ref=\Alph*}
\newif\ifextensions
\newif\ifsoft
\newif\ifforc
\newif\ifdraft
\newcommand{\Dae}{\mathrm{D\text{\ae}}}
\newcommand{\Ang}{\mathrm{Ang}}
\newcommand{\question}[1]{\textcolor{purple}{\textbf{Question:} \emph{#1}}}
\renewcommand{\todo}[1]{}
\renewcommand{\cyrus}[1]{}
\renewcommand{\question}[1]{}
\begin{document}

\maketitle

\begin{abstract}

The \emph{original position} or \emph{veil of ignorance} argument of John Rawls, 
perhaps the most famous argument for egalitarianism, 
states
that our concept of fairness, justice, or welfare should be decided from behind a veil of ignorance, and thus
must
consider everyone impartially (invariant to our identity). 
This can be posed as a
zero-sum game, where a D\ae mon constructs a world, and an adversarial Angel then places the D\ae mon
into the
world.
This
game
incentivizes the D\ae mon to maximize the minimum utility over all people (i.e., to maximize \emph{egalitarian welfare}).
In some sense, this 
is the most extreme
form of \emph{risk aversion} or \emph{robustness}, and we show that by weakening the
Angel, milder
robust objectives arise, which we argue are effective \emph{robust proxies} for \emph{fair} learning or allocation tasks.
In particular, the utilitarian, Gini, and power-mean welfare concepts
arise from special cases of the
adversarial game, which has philosophical implications for the understanding of each of these concepts.
We also motivate a new fairness concept that essentially fuses the nonlinearity of the power-mean with the piecewise nature of the Gini class.
\ifextensions
Then, exploiting the 
relationship
between fairness and robustness,
we show that these robust fairness concepts can all be efficiently optimized under mild conditions via standard maximin optimization techniques.
Finally, we show that such methods apply in machine learning contexts, and moreover we show generalization bounds for robust fair machine learning tasks.
\else
Finally, exploiting the
relationship
between fairness and robustness,
we show that these robust fairness concepts can all be efficiently optimized under mild conditions via standard maximin optimization techniques.
\fi

\bigskip

\centering
\textbf{Keywords:} \\
\medskip
\it
Fair Machine Learning | Rawlsian Ethics | Adversarial Learning | Convex Optimization | Robust Fair Learning
\end{abstract}

\section{Introduction}

\todo{Terrible intro paragraph!}

Fairness and robustness are crucial aspects of machine learning and allocation systems, both of which are generally addressed through modelling, data collection, and objective selection.
This work extends
ideas and objectives in welfare-centric fair machine learning and optimization
introduced by \citet{cousins2021axiomatic,cousins2021bounds,cousins2022uncertainty,cousins2023revisiting}.
We derive robust variants of fair objectives, and explore mathematical and philosophical connections between robustness and fairness.
In particular, we consider robust \emph{welfare functions}, which aggregate \emph{utility} across a population, and robust \emph{malfare functions}, which aggregate \emph{disutility}, both serving as fairness metrics and as optimization targets.
We then combine these robust objectives with adversarial optimization theory and techniques, which expands on the relationship between fairness, robustness, and uncertainty in machine learning and allocation problems \citep{mazzetto2021adversarial,dong2022decentering,cousins2023into
}.

The core of this paper is the construction of a hierarchy of Rawlsian games, where a D\ae mon is tasked with creating a world, and an Angel places them within it. We consider various modifications and restrictions of this basic setup by adjusting the action space of the agents, as well as the payoff function, and show that various game theoretic solution concepts, including adversarial play for constant sum games and Nash equilibria for general sum games, give rise to various welfare and malfare concepts.
Of course, this game is a metaphor, but it is strongly motivated by the grounded social planner's problem, wherein a social planner seeks to organize society in a way that is favorable to all, and from these games we derive insight as to how the social planner should behave.
The goals of this paper and the purpose of constructing this game are manifold.

\begin{enumerate}
\item
We provide philosophical insight into a large class of welfare and malfare functions.
\Cref{sec:phil:mixture} draws connections between fairness and robustness, finding that many classical welfare
functions can be understood as robust utilitarian welfare in our game. We also show in \cref{sec:phil:altruistic-daemon} that some welfare (malfare) concepts arise from a class of concave utility transforms (convex disutility transforms).
These derivations complement direct fairness-based understandings of these welfare concepts from cardinal welfare theory. 
Viewing the prism of fairness from these three angles yields deeper philosophical, mathematical, and algorithmic understanding.

\item
We argue that utilitarian and egalitarian welfare or malfare are two ends of a spectrum, and derive a novel class of welfare (malfare) functions, which we term the Gini power-mean class, that falls between these extremes.
In particular, both the Gini and power-mean classes also have this property, and our Gini power-mean class \emph{strictly contains} utilitarian, egalitarian, and the entire Gini and power-mean families.
Furthermore, our generic robustness analysis allows us to define and motivate robust variants of this class that are still more general.

\item
Leveraging the connections between fairness, robustness, and robust fairness, we show in \cref{sec:adv-opt} that for various applications in allocation and machine learning, our objectives can be efficiently optimized.
In particular, we consider optimization either through standard maximin optimization techniques, or in some cases by reduction to a simpler maximization problem.
We show mild conditions under which the robust optimization problem has convex-concave structure amenable to first-order (gradient descent-ascent) methods.
We also show certain special cases in continuous allocation problems that reduce to linear programming or quadratic programming.\todo{geometric?}

\end{enumerate}

\todo{
Dual motivation: philosophical and mathematical.
\\
Cover p?
For any pmean 
Don't know weights.
\\
TODO $\displaystyle\maxmin_{\theta \in \Theta}$ $\argmaxmin$
\\
TODO We allow degenerate simplices, repair pmean as expected, define all weighted aggregator functions with the ``weightlessness axiom''
\[
\forall c \geq 0: \ \Mean(\lv; \wv) = \Mean(c \bm{1} \circ \lv; \bm{0} \circ \lv)  
\]
NB power-means are not continuous in $\wv$ around the boundary, i.e., for $\lv_{i} = 0$, $\wv_{i} \to 0$ can produce discontinuity.
}

%
It is important to note that robustness and fairness are not the same thing, but they are deeply related. In \cref{sec:phil:mixture}, we derive the Gini social welfare family as the solution to a robust utilitarian optimization problem, but it's worth noting that the Gini family on its own also arises as the unique solution set to a set of cardinal welfare axioms based on fairness that really have nothing to do with robustness.
Similarly, egalitarian or ``worst case over groups'' objectives inherently have some robustness, the form of optimizing them resembles robust objectives, and their derivation via Rawls' original position argument (\cref{sec:phil}) resembles robustness against an adversary, they can also be axiomatically derived from cardinal welfare theory, if we start with Gini axioms, then strengthen them to require that transferring utility from any group $i$ to any group $j$ with lower utility is beneficial, even if the transfer is \emph{arbitrarily inefficient}, i.e., for all $i,j$ such that $\lv_{i} < \lv_{j}$, for any \emph{transfer efficiency} $\gamma > 0$, there exists some \emph{transfer magnitude} $\alpha$ such that
\[
\Mean(\lv; \wv) < \Mean(\lv + \alpha\gamma\1_{i} - \alpha\1_{j})
\enspace,
\]
i.e., any infinitesimal equitable transfer of utility is beneficial, no matter how inefficient.
\todo{Explain this.}

\paragraph{Overview of Contributions}%
We describe in \cref{sec:phil} John Rawls' original position argument, followed by several generalizations in which the adversary is weakened, which give rise to various robust fairness concepts, including 
the utilitarian, Gini, and power-mean welfare concepts, and robust variants thereof.
In \cref{sec:math}, we show that our robust proxies of the standard fairness concepts yield probabilistic or adversarial guarantees in terms of their non-robust counterparts.
This mathematical motivation complements the philosophical motivation of the previous section.
\ifextensions
Then, \cref{sec:adv-opt} shows that we can efficiently optimize these fair and robust fair objectives in a variety of allocation and machine learning settings.
Finally, in \cref{sec:stat-fml} we analyze the continuity properties of robust fair objectives, and we show 
generalization bounds for robust fair machine learning tasks.
\else
Finally, \cref{sec:adv-opt} shows that we can efficiently optimize for these fair and robust fair objectives in a variety of allocation and machine learning settings.
\todo{Finally, we show that such methods apply in machine learning contexts, and moreover we show generalization bounds for robust fair machine learning tasks.}
\fi

\bigskip

\todo{
\emph{All proofs are derived in the appendix, and all images are AI generated and thus not subject to copyright.
The authors place the blame for any intellectual property rights infringement solely at the hands of the creators of the AI.}
}

\todo{non game-theoretic: soft robust? other solution concepts? Percentile criteria, VaR?}

\section{Related Work}
\label{sec:related}

In his seminal work, \citet{rawls1971atheory,rawls2001justice} connects fairness, justice, social welfare, and robustness to uncertainty
.
His \emph{original position} or \emph{veil of ignorance} arguments apply Wald's maximin principle to derive the \emph{egalitarian welfare}, i.e., the principal that we should measure the overall wellbeing of society in terms of its least well-off member, and the social planner should seek to maximize this minimum utility.
The Rawlsian school of thought contrasts the earlier prevailing utilitarian theory: \emph{Utilitarian welfare} \citep{bentham1789introduction,mill1863utilitarianism} instead measures overall wellbeing as the \emph{sum} or \emph{average} 
utility across a population.

However, these are not the only justice criteria of interest.
Alternative characterizations of welfare lead to the power-mean class or the Gini class (discussed in \cref{sec:prelim}, both of which contain the egalitarian and utilitarian welfare as special cases.
Indeed, the wellbeing of society overall and of disadvantaged or minority groups
is well-studied
in welfare economics \citep{pigou1912wealth,dalton1920measurement,debreu1959topological,gorman1968structure} and moral philosophy \citep{parfit1997equality}. 
Generally speaking, utilitarian and egalitarian welfare stand at two extremes of a spectrum, and \emph{prioritarian} concepts lie somewhere in between \citep{parfit1997equality,arneson2000luck}.
Utilitarianism is criticized for not incentivizing \emph{equitable redistribution} of (dis)utility, and egalitarianism is criticized for 
ignoring
all but the 
\emph{most disadvantaged} groups
in society.
In contrast,
\emph{prioritarianism}
encompasses
various justice criteria that \emph{prioritize} the wellbeing of
the impoverished, without ignoring 
others,
making tradeoffs between 
them
in various ways.

\Citet{amadae2003rationalizing,galivsanka2017just} discuss the game-theoretic implication of Rawlsian philosophy, and this work considers modifications of a game-theoretic statement of Rawls' original position argument.
\Citet{nozick1974anarchy} criticizes Rawlsian theory as overly risk-averse, and
this work addresses 
this point by introducing less risk-averse variants of the original position argument.
Rawlsian theory is also criticized as unsuitable as a basis for morality \citep{harsanyi1975can}; this work makes progress in this direction, by explicitly grounding the application of Wald's \citeyearpar{wald1939contributions,wald1945statistical} maximin principle in epistemic uncertainty, and by contrasting Rawlsian welfare with other fairness concepts.

Rawlsian theory has been applied to fair machine learning and algorithmic justice \citep{ashrafian2023engineering}, often termed \emph{minimax fair learning} \citep{diana2021minimax,shekhar2021adaptive,abernethy2022active}, and in particular \citet{lokhande2022towards,dong2022decentering} 
optimize Rawlsian objectives under uncertainty.
More general concepts of fair machine learning in terms of other welfare or malfare concepts are also explored in the literature; \citet{thomas2019preventing} introduces the \emph{Seldonian learner} framework, 
and \citet{cousins2021axiomatic,cousins2021bounds,cousins2022uncertainty,cousins2023revisiting} defines \emph{fair-PAC learning}, which both deal with computational and statistical issues arising from optimizing nonlinear fairness objectives.

Other work seeks to optimize welfare-concepts in more specific instances; e.g., in supervised classification \citep{hu2020fair,rolf2020balancing}, in contextual bandits \citep{metevier2019offline}, or in reinforcement learning \citep{siddique2020learning,cousins2022faire3}, generally finding the resulting optimization problems to be tractable.
Some authors also seek to apply \emph{fairness constraints} based on welfare \citep{hu2020fair,heidari2018fairness,speicher2018unified}; \citet{hu2020fair} finds that, in contrast to demographic parity constraints, these are usually at least \emph{convex} (assuming appropriate utility and welfare function choice).

\todo{cite \citep{} {kasy2021fairness} }

\todo{Discuss bilevel.}

\todo{More seldonian citations.}

\section{Preliminaries}
\label{sec:prelim}

\todo{\paragraph{
On wefare-centric fair ML}}

The \emph{Pigou-Dalton transfer principle} \citep{pigou1912wealth,dalton1920measurement} and
the \emph{Debreu-Gorman axioms} \citep{debreu1959topological,gorman1968structure} 
lead all welfare functions to 
concord with
sums of \emph{logarithms} or \emph{powers} of utilities, i.e.,
for $\NGroups$ groups and
utility vectors $\lv \in \RNN^{\NGroups}$,
for some $p \in \R$,
all fairness concepts $\Mean(\lv)$ 
define a \emph{partial ordering} over utility vectors that agrees with
\begin{equation}
\label{eq:dg}
\Mean(\lv) = \sgn(p) \sum_{i=1}^{\NGroups} \lv_{i}^{p}
\enspace, \quad \text{or} \quad
\Mean(\lv) = \sum_{i=1}^{\NGroups} \ln( \lv_{i} )
\enspace.
\end{equation}

Weights vectors $\wv \in \triangle_{\NGroups}$, where $\triangle_{\NGroups}$
denotes
the unit probability simplex over $\NGroups$ values (excluding $0$ values\todo{check}), are essential to this work. 
\citet{cousins2021axiomatic,cousins2023revisiting} introduces
weighted variants of the Debreu-Gorman axioms, as well as \emph{multiplicative linearity} and \emph{unit scale} axioms, which essentially standardize the cardinal values of aggregator functions.
Utility and disutility are generically referred to as \emph{sentiment}, and cardinal welfare theory applies equally well to aggregation of disutility (malfare functions).
These novel axioms, when combined with the Debreu-Gorman axioms, characterize the 
\emph{weighted power-mean family} of aggregator functions, defined below. 
%
\begin{definition}[Weighted Power-Mean Family]
Suppose some power parameter $p \in \R$ and weights vector $\wv \in \triangle_{\NGroups}$. For any $\lv \in \RNN^{\NGroups}$, we define 
\begin{equation}
\label{eq:pmean}
\Mean_{p}(\lv; \wv) = \sqrt[p]{ \vphantom{\sum^{.}_{}} \smash{ \sum_{i=1}^{\substack{\NGroups\\[-0.55ex]}} } \wv_{i} \lv_{i}^{p} } \ \text{for $p \neq 0$}
\enspace, \quad 
\Mean_{0}(\lv; \wv)
  = \exp \left( \sum_{i=1}^{\NGroups} \wv_{i} \ln( \lv_{i} ) \right)
\enspace, \quad \text{or} \quad
\Mean_{\pm\infty}(\lv; \wv) = \maxmin_{\mathclap{i \in 1, \dots, \NGroups}} \lv_{i}
\enspace.
\end{equation}
The taking the limit as $p \to 0$ yields the $p = 0$ case, 
known as the \emph{Nash social welfare} or \emph{geometric mean}, and the limits as $p \to \pm \infty$ yield the \emph{egalitarian} welfare or malfare.
\end{definition}

Power-means with $p \in (-\infty, 1)$ are valid welfare functions, as maximizing them strictly incentivizes equitable redistribution of utility (except around $0$).
Similarly, power-means with $p \in (1, \infty)$ are valid malfare functions, as minimizing them strictly incentivizes equitable redistribution of disutility.
The utilitarian and egalitarian endpoints of these open intervals ($p \in \{-\infty, 1, \infty\}$) are generally also considered valid, as they arise as limiting sequences of valid welfare or malfare functions, they still satisfy most of the same cardinal welfare axioms as the power-mean family, and they at least weakly incentivize equitable redistribution.

\ifextensions
Power-means in general require \emph{nonnegative sentiment} to remain well-defined, real-valued, and preserve their curvature, so when working with them we restrict the sentiment vector space to $\lv \in \RNN^{\NGroups}$, but for other aggregator function classes, we may relax this assumption to $\lv \in \R^{\NGroups}$.
\todo{Alternatively, we can extend the power-mean to the entire real domain $\R^{\NGroups}$ 
as
\[
\Mean_{p}(\lv; \wv) \doteq
\begin{cases}
p \in \pm\infty & \maxmin_{i \in 1, \dots, \NGroups} \lv_{i} \\
p = 1 & \lv \cdot \wv \\
p \in (-\infty, 1) & -\infty \\
p \in (1, \infty) & \Mean_{p}(\bm{0} \vee, \lv; \wv) 
\end{cases}
\]
\todo{Handle $p < 0$ better?}
for any $\lv$ 
containing negative sentiment values.
These are the most crucial properties for our purposes, however there are some undesirable artifacts introduced here; for example, the codomain now includes $-\infty$ to preserve concavity of welfare functions, discontinuities in $p$ are introduced at $-\infty, 1, \infty\}$, and due to these discontinuities, the function no longer matches its limits as $p$ approaches these values.
\todo{Differentiability in $p$, however, is lost.}
\\
TODO BETTER EXTENSION:
\[
\Mean_{p}(\lv; \wv) \doteq \Mean_{p}(\bm{0} \vee \lv; \wv) + (\bm{0} \wedge \lv) \cdot \grad^{+} \Mean(\bm{0} \wedge \lv; \wv)
\]
where $\grad^{+} \Mean(\bm{0} \wedge \lv; \wv)$ refers to the vector of \emph{right derivatives} evaluated at the point $\bm{0} \wedge \lv$.
NB: 
only problem is for all negative or all $\bm{0}$.
\[
\begin{cases}
p \in \pm\infty & \maxmin_{i \in 1, \dots, \NGroups} \lv_{i} \\
p = 1 & \lv \cdot \wv \\
p \in (-\infty, 1) & -\infty \\
p \in (1, \infty) & \Mean_{p}(\bm{0} \vee, \lv; \wv) 
\end{cases}
\]
}
\fi

\todo{Sentiment word!}

A slightly different set of axioms 
yields the Gini class \citep{weymark1981generalized,gajdos2005multidimensional}.
\begin{definition}[Gini Welfare and Malfare]
Suppose a \emph{ascending sequence} $\wv^{\uparrow} \in \triangle_{\NGroups}$ or \emph{descending sequence} of \emph{Gini weights} $\wv^{\downarrow} \in \triangle_{\NGroups}$, \emph{
risk vector} $\lv \in \R^{\NGroups}$, and let $\lv^{\downarrow}$ denote $\lv$ in descending 
order.
The generalized Gini social welfare function (GGSWF) 
is then
\begin{equation}
\label{eq:gini-welfare}
\Mean_{\wv^{\uparrow}}(\lv) \doteq \vphantom{\sum}\smash{\sum_{i=1}^{\NGroups}} \wv^{\uparrow}_{i} \lv^{\downarrow}_{i} = \wv^{\uparrow} \cdot \lv^{\downarrow}
\enspace,
\end{equation}
and similarly, the Gini social malfare function is
\begin{equation}
\label{eq:gini-malfare}
\Mean_{\wv^{\downarrow}}(\lv) \doteq \vphantom{\sum}\smash{\sum_{i=1}^{\NGroups}} \wv^{\downarrow}_{i} \lv^{\downarrow}_{i} = \wv^{\downarrow} \cdot \lv^{\downarrow}
\enspace.
\end{equation}
\todo{Combine lines.}
\end{definition}

Notably, while the power-mean family is not closed under convex combination, the Gini family is. Furthermore, restricting to convex combinations of \emph{utilitarian} and \emph{egalitarian} welfare or malfare yields the \emph{utilitarian-maximin} social welfare function (UMSWF) family.
Several axiomatizations for the UMSWF class exist in the literature \citep{deschamps1978leximin,bossert2020axiomatization,schneider2020utilitarian}, each essentially strengthening Gini axioms in some way.

\todo{
Both
utilitarian and egalitarian malfare arise as 
power-mean special-cases, 
namely $p=1$ and 
$p=\infty$, respectively.
They also arise in the Gini family, for \scalebox{0.9}[0.95]{$\wv^{\uparrow} \! = \! \langle \mathsmaller{\frac{1}{\NGroups}}, \dots, \mathsmaller{\frac{1}{\NGroups}} \rangle$} and \scalebox{0.9}[0.95]{$\wv^{\uparrow} \! = \! \langle 1, 0, \dots, 0 \rangle$}, respectively.
}
\todo{swf term.}

\section{A Philosophy of Robust Fair Objectives}
\label{sec:phil}

\begin{figure}
\centering
\begin{minipage}{0.495\textwidth}
\noindent\includegraphics[width=\textwidth]{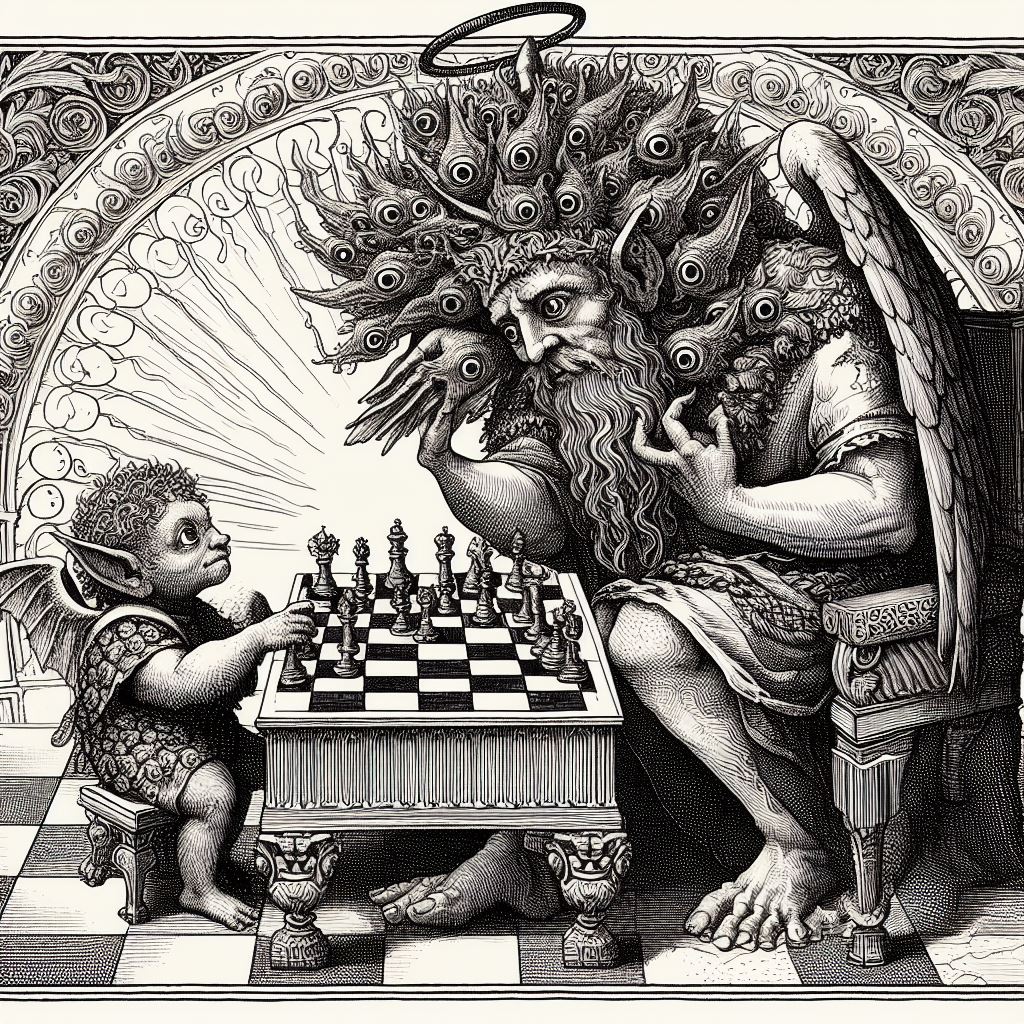}
\end{minipage}%
\hfill%
\begin{minipage}{0.495\textwidth}

\ifforc
\small
\renewcommand{\Large}{\large}
\renewcommand{\medskip}{\smallskip}
\fi

\centering 
{ \Large\titlefont The Rawlsian Game }

\medskip

$\NGroups$: \emph{Number of inhabitants} in the game world, labeled $1, \dots, \NGroups$.

$\Theta$: Set of \emph{feasible parameters} for worlds the D\ae mon can create.


$\lv(\theta): \Theta \to \R^{\NGroups}$:
Sentiment vector of the
inhabitants of
some world parameterized by $\theta$.

$\Lv \doteq \left\{ \lv(\theta) \middle| \, \theta \in \Theta \right\} \subseteq \R^{\NGroups}$: Set of \emph{feasible
sentiment
vectors}
the D\ae mon can create. 

$\mathcal{A}_{\Dae} \doteq \Lv$: D\ae mon action space. 

$\mathcal{A}_{\Ang} \doteq \{1, \dots, \NGroups\}$: Angel action space. 

$\Payoff(\lv, i) \doteq \langle \lv_{i}, -\lv_{i} \rangle$:
Zero-sum
payoff function.

\medskip

Strategic gameplay (D\ae mon goes first):
\vspace{-0.25cm}
\[
\argmaxmin_{\lv \in \mathcal{A}_{\Dae}} \minmax_{i \in \mathcal{A}_{\Ang}} \Payoff_{1}(\lv, i)
= \argmaxmin_{\lv \in \Lv} \minmax_{i \in 1, \dots, \NGroups} \lv_{i} \enspace.
\]
\end{minipage}

\caption{
Metaphoric depiction and game-theoretic description of the Rawlsian original position game.
A
weak
D\ae mon (left) plays against an all-seeing Angel (right).
}
\label{fig:rawlsian-game}
\end{figure}

Inspired by 
the original position argument and the veil of ignorance of 
\citet{rawls1971atheory,rawls2001justice},\footnote{Note that Rawls' original position argument is generally phrased in terms of Wald's maximin principle and robustness to uncertainty, rather than explicitly as a
zero-sum game. 
These characterizations are equivalent, and for our purposes, it is often convenient to characterize uncertainty as the action space of an explicit adversary. In general, this is to simplify intuition and the use of standard tools from game theory; it is not meant to be interpreted as a literal adversary in a literal game.}
we pose a series of adversarial games, where a D\ae mon is tasked with creating a world, and an Angel then punishes the D\ae mon by choosing whom to reincarnate them as in their world.
In many ways, this is an unfair game, as the D\ae mon is given an impossibly difficult task, and the Angel lazily stands by until it is their turn to inflict maximal suffering upon their opponent%
\iffalse\else, but perhaps it is an allegory for the responsibility of political leaders, and were they to face harsher rebuke from the citizenry, perhaps we would live in a more equitable society\fi.
We stress that this metaphor does not represent a conflict between good and evil, but rather a cosmic struggle between the
freedom of the people (as represented by the D\ae mon) and dictatorial power (as represented by the Angel).

Because we wish to treat both utility and disutility (with welfare and malfare), we generically refer to these concepts as \emph{sentiment}, and we adopt neutral notation $\lv$ to represent
sentiment
vectors. 
In general, we use stacked operators, e.g., $\pm, \mp, \maxmin, \infsup$, etc.\ to represent both cases simultaneously.
Unless otherwise noted, the upper operator describes the utility branch and the lower operator is for the disutility branch.

Furthermore, to distinguish between the the utility values of the inhabitants of the game world $\lv$, and those of the D\ae mon and Angel playing the game, we refer to the latter as a payoff function $\Payoff(\lv; \wv)$, representing either positive payoff in the utility case or a negative payoff in the disutility case.
In this adversarial zero-sum game, the D\ae mon's payoff is the utility of the person they become, and the Angel's payoff is of course its negation.

The Angel's adversarial response to any D\ae mon strategy is obvious:
\emph{Select the individual with the lowest utility (or highest disutility)}.
Playing against this Angel, the D\ae mon must confront the question, \emph{``How should we construct a world without knowing our place in it?''}
Against an adversarial Angel, a strategic D\ae mon must maximize the minimum utility (or minimize the maximum disutility).
From this interaction, strategic gameplay results in the solution concept
\[
\argmaxmin_{\lv \in \mathcal{A}_{\Dae}} \minmax_{i \in \mathcal{A}_{\Ang}} \Payoff_{1}(\lv, i) =
\argmaxmin_{\lv \in \Lv} \minmax_{i \in 1, \dots, \NGroups} \lv_{i} = \begin{cases}
\text{Utility} & \displaystyle \max_{\lv \in \Lv} \min_{1 \in 1, \dots, \NGroups} \lv_{i} = \argmax_{\lv \in \Lv} \Mean_{-\infty}(\lv) \\
\text{Disutility} & \displaystyle \min_{\lv \in \Lv} \max_{1 \in 1, \dots, \NGroups} \lv_{i} = \argmin_{\lv \in \Lv} \Mean_{\infty}(\lv) \enspace. 
\end{cases}
\]
The game is illustrated and further described in \cref{fig:rawlsian-game}.

The power of the D\ae mon in this game is directly modeled by the scope of worlds that they are capable of creating, and the Angel's power is directly impacted by the number of people that inhabit
the world.
We then consider variations of this game where the Angel is weakened in various ways, and show that they give rise to other standard notions of welfare, in the sense that the D\ae mon's optimal strategy is to maximize some welfare concept.
We also observe that in many cases, the original game is equivalent to one where the Angel must move first, but is allowed to employ a randomized strategy.
We show that our modifications to the game can be formulated in several equivalent ways, and should not be taken too literally, as one motivation or another may be more or less suitable depending on the context of applications or philosophical stance of the reader.

\subsection{On Mixed Strategies and Weak (Constrained) Adversaries}
\label{sec:phil:mixture}

\begin{figure}
    \centering

    \begin{minipage}{0.495\textwidth}
\includegraphics[width=-\textwidth,height=\textwidth]{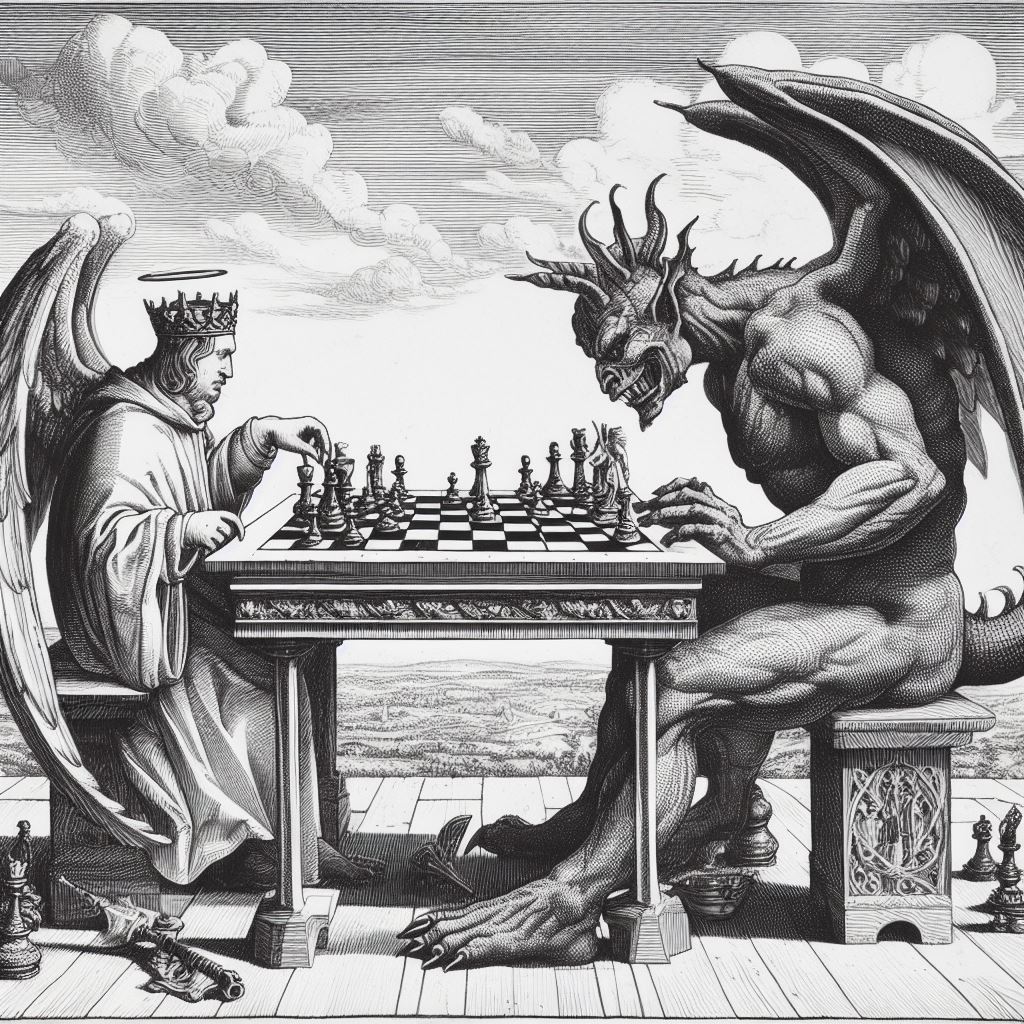}
    \end{minipage}%
    \hfill
    \begin{minipage}{0.495\textwidth}

\ifforc
\small
\renewcommand{\Large}{\large}
\renewcommand{\medskip}{\smallskip}
\fi

\centering 
{ \Large\titlefont The Constrained Rawlsian Game }

\medskip

$\NGroups, \Theta, \lv(\theta), \Lv$: Group count, D\ae mon parameter space, sentiment function,
sentiment
space (as in \cref{fig:rawlsian-game}). 

$\Wv \subseteq \triangle_{\NGroups}$: Constrained weights space.

$\mathcal{A}_{\Dae} \doteq \Lv$: D\ae mon action space. 

$\mathcal{A}_{\Ang} \doteq \Wv$: Angel action space \\ 
(\emph{inhabitant} $i$ becomes \emph{distribution
} $\wv$).

$\Payoff(\lv; \wv): \R^{\NGroups} \times \triangle_{\NGroups} \to \R^{2}$: 
Zero-sum payoff function (expected sentiment).
\[
\Payoff_{1} (\lv, \wv) \doteq \langle \wv \cdot \lv, - \wv \cdot \lv \rangle 
\]

\medskip

Strategic gameplay (D\ae mon goes first):
\vspace{-0.25cm}
\[
\argmaxmin_{\lv \in \mathcal{A}_{\Dae}} \minmax_{\wv \in \mathcal{A}_{\Ang}} \Payoff_{1}(\lv, i)
= \argmaxmin_{\lv \in \Lv} \minmax_{\wv \in \Wv} \wv \cdot \lv \enspace.
\]
    \end{minipage}

    \caption{%
Metaphoric depiction and game-theoretic description of the modified Rawlsian original position game, with restricted Angel action space.\todo{terminology!}
A D\ae mon (left) plays against a comparably powerful Angel (right).
    }
    \label{fig:constrained-rawlsian-game}
\end{figure}

The power of the D\ae mon in this game is directly controlled by the space $\Lv$ of feasible utility vectors (or some generating parameter space $\Theta$), and allowing the D\ae mon to play mixed strategies simply expands their action space to the convex hull $\CH(\Lv)$ (assuming the payoff function is defined as their \emph{expected} utility). 
At times, we may require $\Lv$ to be a convex set, for which mixed D\ae mon strategies are sufficient but not necessary.
We explicitly assume as such when necessary, thus we make no further mention of mixed strategies on the D\ae mon's part.

In our game, the Angel is allowed to condition their action on the D\ae mon's \emph{mixture action}, but not on the actual D\ae mon action randomly selected from this mixture, thus when $\Lv \neq \CH(\Lv)$, playing mixed strategies may increase the D\ae mon's power in this game.
However, because the Angel plays second, mixed strategies do not actually increase the Angel's power.
In particular, the expected value of the payoff function given a mixed D\ae mon strategy (distribution over (dis)utility vectors) $\ProbDist \in \mathcal{A}_{\Dae}$ and an (independent) mixed Angel strategy (distribution over group indices, i.e., $\wv \in \triangle_{\NGroups}$) is
\begin{equation}
\label{eq:expected-payoff}
\Payoff_{1} (\ProbDist, \wv) = \Expect_{\lv \distributed \ProbDist \bot i \distributed \wv} \left[ \Payoff_{1} (\lv, i) \right] = \Expect_{\lv \distributed \ProbDist \bot i \distributed \wv} \left[ \lv_{i} \right] = \Expect_{\lv \distributed \ProbDist \bot i \distributed \wv} \left[ \1_{i} \cdot \lv \right] = \wv \cdot \Expect_{\lv \distributed \ProbDist} \bigl[ \lv \bigr] \enspace,
\end{equation}
where $A \bot B$ denotes \emph{independence of random variables}.
Strategic 
gameplay then
results in the objective value (expected D\ae mon payoff)
\begin{equation}
\label{eq:mixed-strategic-play}
\supinf_{\ProbDist \in \mathcal{A}_{\Dae}} \maxmin_{\wv \in \mathcal{A}_{\Ang}} \Payoff_{1}(\ProbDist, \wv) = \supinf_{\ProbDist \in \mathcal{A}_{\Dae}} \maxmin_{\wv \in \mathcal{A}_{\Ang}} \wv \cdot \Expect_{\lv \distributed \ProbDist} \bigl[ \lv \bigr] = \supinf_{\lv \in \CH(\lv)} \maxmin_{\wv \in \Wv} \wv \cdot \lv \enspace.
\end{equation}

\paragraph{Action Order Interchangeability}
It is worth noting that, while the order of play matters for specific strategies, under mild conditions, assuming strategic play, the order is interchangeable (and thus the game may be played with simultaneous actions).
In particular, if we allow for mixed actions (or at least a convex action set for the D\ae mon), we have the following result.
\begin{lemma}[
Maximin
Interchangeability]
Suppose that both $\mathcal{A}_{\Dae}$ and $\mathcal{A}_{\Ang}$ are convex sets and $\mathcal{A}_{\Ang}$ is closed (thus also compact).
Then 
\begin{equation}
\label{eq:maximin-interchangeability}
\supinf_{\ProbDist \in \mathcal{A}_{\Dae}} \maxmin_{\wv \in \mathcal{A}_{\Ang}} \Payoff_{1}(\ProbDist, \wv) = \maxmin_{\wv \in \mathcal{A}_{\Ang}} \supinf_{\ProbDist \in \mathcal{A}_{\Dae}} \Payoff_{1} (\ProbDist, \wv) \enspace.
\end{equation}

\end{lemma}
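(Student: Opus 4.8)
The plan is to invoke a minimax theorem. The payoff function $\Payoff_{1}(\ProbDist,\wv) = \wv \cdot \Expect_{\lv \distributed \ProbDist}[\lv]$, as established in \eqref{eq:expected-payoff}, is \emph{bilinear} once we pass to the D\ae mon's expected sentiment vector $\bar{\lv} \doteq \Expect_{\lv \distributed \ProbDist}[\lv]$: it is linear in $\bar{\lv}$ for fixed $\wv$ and linear in $\wv$ for fixed $\bar{\lv}$. Moreover, as noted in \eqref{eq:mixed-strategic-play}, ranging over mixed D\ae mon strategies $\ProbDist \in \mathcal{A}_{\Dae}$ is the same as ranging over $\bar{\lv} \in \CH(\lv)$ (or simply over $\mathcal{A}_{\Dae}$ itself when it is already convex). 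So the left-hand side of \eqref{eq:maximin-interchangeability} is $\supinf_{\bar{\lv}} \maxmin_{\wv} \wv \cdot \bar{\lv}$ and the right-hand side is $\maxmin_{\wv} \supinf_{\bar{\lv}} \wv \cdot \bar{\lv}$, a textbook minimax swap for a bilinear form over convex domains.

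First I would reduce to the effective domains: replace $\mathcal{A}_{\Dae}$ by $\CH(\mathcal{A}_{\Dae})$ (a convex set, and the one over which the expected-sentiment vector ranges) and keep $\mathcal{A}_{\Ang} = \Wv$, which is assumed convex, closed, and — being a subset of the simplex $\triangle_{\NGroups}$ — bounded, hence compact. Second, I would verify the hypotheses of a minimax theorem (Sion's theorem is the cleanest fit): for each fixed $\wv$, $\bar{\lv} \mapsto \wv\cdot\bar{\lv}$ is affine, hence quasiconcave and continuous (in the utility branch we take $\sup/\max$ over $\bar{\lv}$, in the disutility branch $\inf/\min$, matching the stacked operators); for each fixed $\bar{\lv}$, $\wv \mapsto \wv \cdot \bar{\lv}$ is affine, hence quasiconvex and continuous; and $\mathcal{A}_{\Ang}$ is compact, which supplies the one-sided compactness Sion's theorem requires. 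Sion's theorem then yields $\sup_{\bar\lv}\min_{\wv} = \min_{\wv}\sup_{\bar\lv}$ (and symmetrically with the roles of $\sup/\inf$ and $\max/\min$ exchanged for the malfare branch), which is exactly \eqref{eq:maximin-interchangeability}. Third, I would note that the passage from $\mathcal{A}_{\Dae}$ to its convex hull is harmless on both sides: on the left because of \eqref{eq:mixed-strategic-play}, and on the right because $\sup_{\ProbDist} \Payoff_1(\ProbDist,\wv) = \sup_{\bar\lv \in \CH(\mathcal{A}_{\Dae})} \wv\cdot\bar\lv$ pointwise in $\wv$.

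The main obstacle is not the minimax step itself but checking that the stated hypotheses genuinely suffice — in particular that no \emph{additional} compactness or boundedness assumption on $\mathcal{A}_{\Dae}$ is needed. Sion's theorem is asymmetric: it needs compactness of only \emph{one} of the two sets, and here that is $\mathcal{A}_{\Ang}$, which is automatically compact as a closed subset of the bounded simplex — so the lemma's hypotheses are exactly right, but I would want to state this explicitly rather than silently appeal to the symmetric von Neumann theorem. A secondary subtlety is the $\pm\infty$ valued edge cases: if $\mathcal{A}_{\Dae}$ is unbounded the inner value $\sup_{\bar\lv}\wv\cdot\bar\lv$ can be $+\infty$ for some $\wv$, but since the same value appears on both sides of the identity for that $\wv$, the equality still holds in the extended reals; I would remark on this so the statement is not vacuously restricted. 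Finally I would double-check that the interchange is compatible with the stacked-operator convention ($\supinf$ over $\ProbDist$, $\maxmin$ over $\wv$) by treating the welfare and malfare branches separately, each being an instance of the same bilinear minimax equality with signs flipped.
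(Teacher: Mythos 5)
Your proposal is correct and follows essentially the same route as the paper's proof, which simply reduces via the bilinear expected-payoff form of \eqref{eq:mixed-strategic-play} and then applies Sion's minimax theorem, with compactness supplied by $\mathcal{A}_{\Ang}$ as a closed subset of the simplex. Your additional checks (one-sided compactness, harmlessness of passing to the convex hull, and the extended-real edge case) are sensible elaborations of the same argument rather than a different approach.
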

\begin{proof}
This result follows from \eqref{eq:mixed-strategic-play}, then application of Sion's \citeyearpar{sion1958general} minimax theorem.
\end{proof}

\paragraph{On Special Cases}
We now show that many of the most commonly employed welfare and malfare functions 
arise as special cases of this constrained-Angel Rawlsian game.
This intuitively motivates these aggregator functions from the perspective of robustness, with no robustness representing utilitarianism, establishing a spectrum of various degrees and types of robustness, with egalitarianism at the opposite end of the spectrum (as in the power-mean and Gini classes).
Depending on how natural the choice of constrained weight space is, it may lend credence to the use of particular aggregator functions, and this analysis is also relevant to
their optimization or analysis,
as they can be treated with standard maximinimization\todo{maximin or minimax? Be consistent!} tools, discussed further in \cref{sec:adv-opt}.
We also note that this robustness interpretation merely \emph{complements} (but does not replace or invalidate) existing fairness interpretations of these aggregator functions, and in \cref{sec:phil:altruistic-daemon,sec:phil:altruistic-angel}, we show other categories of fair and fair robust objective that don't seem to arise from just the robustness aspect of this particular game.

For the purposes of this characterization, we consider any fixed D\ae mon strategy (or mixture of strategies), as represented by some (expected) (dis)utility vector $\lv$.
The following result characterizes the adversarial (worst case) payoff of the D\ae mon against the Angel, i.e., we show that $\minmax_{\wv \in \Wv} \Payoff_{1} (\lv; \wv) = \Mean(\lv)$ for some classical aggregator function $\Mean(\lv)$, thus the D\ae mon's optimal strategy in this game is to optimize these aggregator functions.
This generalizes the maximum principle of Rawlsian theory, wherein from behind the veil of ignorance, the social planner chooses to \emph{maximize} the \emph{minimum} utility, i.e., egalitarian welfare.

\begin{theorem}[Classical Welfare and Malfare Functions as Constrained Angel Solution Concepts]
\label{thm:special-cases-classical}

For any $\lv \in \R^{\NGroups}$, the 
Angel's best responses under the following special cases
of Angel action spaces $\mathcal{A}_{\Ang}$, as represented by
weight action spaces $\Wv$, give rise to 
standard aggregator functions.
Some cases assume fixed ``true weights'' $\wv^{*} \in \triangle_{\NGroups}$ or
Gini weights sequence $\wv^{\downarrow} \in \triangle_{\NGroups}$ or $\wv^{\uparrow} \in \triangle_{\NGroups}$, and obtain special cases in terms of these parameters.

\begin{enumerate}
\item\label{thm:special-cases-classical:egal} \textbf{Egalitarian}:
Suppose $\Wv = \triangle_{\NGroups}$.
Then
$
\displaystyle
\minmax_{\wv \in \Wv} \Payoff_{1} (\lv; \wv) =
 \Mean_{\mp \infty}(\lv)
\enspace.
$


\item\label{thm:special-cases-classical:util} \textbf{Utilitarian}: Suppose $\Wv = \{\wv^{*}\}$.
Then
$
\displaystyle
\minmax_{\wv \in \Wv} \Payoff_{1} (\lv; \wv) = \Mean_{1}(\lv; \wv^{*})
\enspace.
$

\item\label{thm:special-cases-classical:wumswf} \textbf{Weighted Utilitarian-Maximin
}: Suppose $\Wv = \left\{ \wv \, \middle| \, \wv \succeq \gamma \wv^{*} \right\} = \gamma \{ \wv^{*} \} + (1 - \gamma) \triangle_{\NGroups}$.
\\
Then
\linebreak[3]$
\displaystyle
\minmax_{\wv \in \Wv} \Payoff_{1} (\lv; \wv) = \gamma \Mean_{1}(\lv; \wv^{*}) + (1 - \gamma) \Mean_{\mp\infty}(\lv)
\enspace.
$

\item\label{thm:special-cases-classical:ggswf} \textbf{Generalized Gini%
}: Suppose
$\Wv = \left\{ \pi(\wv^{\downarrow}) \middle| \pi \in \Pi_{\NGroups} \right\}$, 
where $\Pi_{\NGroups}$ is the set of all permutations on $\NGroups$ items.
Then
\linebreak[3]$
\displaystyle
\min_{\wv \in \Wv} \Payoff_{1} (\lv; \wv) = \Mean_{\wv^{\uparrow}}(\lv)
$ or $
\displaystyle
\max_{\wv \in \Wv} \Payoff_{1} (\lv; \wv) = \Mean_{\wv^{\downarrow}}(\lv)
\enspace.
$

\end{enumerate}
Furthermore, for each of the above items, the RHS follows for any Angel action space $\Wv'$,
$\Wv$ and $\Wv'$ have the same convex hull,
e.g., for \cref{thm:special-cases-classical:egal}, we may use $\left\{ \1_{i} \, \middle| \, i \in 1, \dots, \NGroups \right\}$ in place of $\triangle_{\NGroups}$.
With this expansion, a sort of converse follows for each result.
For each of the above items, if 
the conclusion holds
\emph{for all $\lv \in \R^{\NGroups}$}, then $\Wv$ is some such $\Wv'$.
For example, from item~1 we have
\linebreak[3]
$\displaystyle
\minmax_{\wv \in \Wv} \Payoff_{1}(\lv; \wv) =
 \Mean_{\mp \infty}(\lv) \implies \left\{ \1_{i} \, \middle| \, i \in 1, \dots, \NGroups \right\} \subseteq \Wv \subseteq \triangle_{\NGroups}
\enspace.
$
%
%
\end{theorem}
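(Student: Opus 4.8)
The structural fact driving everything is that the payoff $\Payoff_{1}(\lv,\wv) = \wv\cdot\lv$ is \emph{linear} in the Angel's action $\wv$. Hence for every fixed $\lv$ the inner optimum $\minmax_{\wv\in\Wv}\wv\cdot\lv$ depends on $\Wv$ only through its convex hull: any convex combination of points $\wv_{1},\dots,\wv_{k}\in\Wv$ gives a value of $\wv\cdot\lv$ lying between $\min_{j}\wv_{j}\cdot\lv$ and $\max_{j}\wv_{j}\cdot\lv$, so $\min_{\wv\in\Wv}\wv\cdot\lv=\min_{\wv\in\CH(\Wv)}\wv\cdot\lv$ and likewise for $\max$. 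This is exactly the ``same convex hull'' clause (and licenses replacing $\triangle_{\NGroups}$ by $\{\1_{i}\}$, since $\CH\{\1_{i}\}=\triangle_{\NGroups}$), and it reduces each forward claim to computing the min/max of a linear functional over an explicit polytope.

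For the forward direction I would just check the four cases. \emph{Egalitarian}: $\wv\cdot\lv$ with $\wv\in\triangle_{\NGroups}$ is a weighted average of the $\lv_{i}$, so its min over $\triangle_{\NGroups}$ is $\min_{i}\lv_{i}=\Mean_{-\infty}(\lv)$ and its max is $\Mean_{\infty}(\lv)$, i.e.\ $\Mean_{\mp\infty}(\lv)$. \emph{Utilitarian}: immediate, since $\Wv=\{\wv^{*}\}$ and $\Mean_{1}(\lv;\wv^{*})=\wv^{*}\cdot\lv$. \emph{Weighted utilitarian-maximin}: first verify the set identity $\{\wv\in\triangle_{\NGroups}:\wv\succeq\gamma\wv^{*}\}=\gamma\{\wv^{*}\}+(1-\gamma)\triangle_{\NGroups}$ (given $\bm{v}\in\triangle_{\NGroups}$, $\gamma\wv^{*}+(1-\gamma)\bm{v}$ lies in the left side; conversely $\bm{v}=(\wv-\gamma\wv^{*})/(1-\gamma)$ is a probability vector), then split by linearity: $\min_{\wv\in\Wv}\wv\cdot\lv=\gamma\wv^{*}\cdot\lv+(1-\gamma)\min_{\bm{v}\in\triangle_{\NGroups}}\bm{v}\cdot\lv=\gamma\Mean_{1}(\lv;\wv^{*})+(1-\gamma)\Mean_{-\infty}(\lv)$, and symmetrically for $\max$. \emph{Generalized Gini}: here the one nontrivial ingredient is the rearrangement inequality: over permutations $\pi$, $\pi(\wv^{\downarrow})\cdot\lv$ is minimized when the weight order opposes the order of $\lv$ and maximized when it agrees, giving $\min_{\pi}\pi(\wv^{\downarrow})\cdot\lv=\wv^{\downarrow}\cdot\lv^{\uparrow}=\wv^{\uparrow}\cdot\lv^{\downarrow}=\Mean_{\wv^{\uparrow}}(\lv)$ (the middle step just reverses both sorted vectors) and $\max_{\pi}\pi(\wv^{\downarrow})\cdot\lv=\wv^{\downarrow}\cdot\lv^{\downarrow}=\Mean_{\wv^{\downarrow}}(\lv)$.

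For the converse, fix one case, let $K_{0}$ be the corresponding target polytope ($\triangle_{\NGroups}$, $\{\wv^{*}\}$, $\gamma\{\wv^{*}\}+(1-\gamma)\triangle_{\NGroups}$, or the permutohedron $\CH\{\pi(\wv^{\downarrow})\}$), and suppose $\min_{\wv\in\Wv}\wv\cdot\lv=\min_{\wv\in K_{0}}\wv\cdot\lv$ for all $\lv\in\R^{\NGroups}$ (the malfare branch is identical with $\max$). The map $\lv\mapsto\min_{\wv\in K}\wv\cdot\lv$ is, after reflecting $\lv\mapsto-\lv$, the support function of $K$, and a nonempty compact convex set is determined uniquely by its support function; applied to $\CH(\Wv)$ (over which the inf agrees with the inf over $\Wv$) and to $K_{0}$ this forces $\CH(\Wv)=K_{0}$, hence $\Wv\subseteq K_{0}$. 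To put each vertex $v$ of $K_{0}$ into $\Wv$, choose $\lv$ for which $v$ is the \emph{unique} minimizer of $\wv\mapsto\wv\cdot\lv$ over $K_{0}$ (every vertex of a polytope is exposed); then any attained minimizer over $\Wv\subseteq K_{0}$ equals $v$, so $v\in\Wv$. For item~1 this recovers $\{\1_{i}\mid i\in1,\dots,\NGroups\}\subseteq\Wv\subseteq\triangle_{\NGroups}$.

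The main obstacle is the final step of the converse: pinning the \emph{vertices} of $K_{0}$ inside $\Wv$ itself rather than merely inside its closure. This uses that $\minmax_{\wv\in\Wv}$ denotes an attained optimum (equivalently, that $\Wv$ may be taken closed, hence compact in $\triangle_{\NGroups}$); dropping attainment one gets only $\overline{\CH}(\Wv)=K_{0}$. Apart from the rearrangement inequality in the Gini case, the remaining content is linearity plus the uniqueness of support functions, so no further machinery is needed.
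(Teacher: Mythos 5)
Your proposal is correct and follows essentially the same route the paper relies on: linearity of $\Payoff_{1}$ in $\wv$ reduces each case to optimizing a linear functional over a polytope (simplex vertices, the Minkowski-sum identity for the utilitarian-maximin set, and the rearrangement inequality for the Gini case), with the convex-hull invariance and support-function/exposed-vertex argument giving the $\Wv'$ clause and the converse. Your closing remark about attainment being needed to pin the vertices inside $\Wv$ itself (rather than its closed convex hull) is exactly the right caveat and consistent with the paper's use of attained $\min$/$\max$.
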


\Cref{thm:special-cases-classical:egal} is rather obvious, as this special case is just the unconstrained game.
\Cref{thm:special-cases-classical:util} is also unsurprising, as this special case replaces the robust or worst case perspective of the Rawlsian game with an \emph{average case} or expected perspective,
which
yields \emph{weighted sums} of (dis)utility,
i.e.,
utilitarian malfare or welfare.
Of course, uniform
individual weights $\wv = \langle \frac{1}{\NGroups}, \frac{1}{\NGroups}, \dots, \frac{1}{\NGroups} \rangle$ correspond to \emph{uniformly randomly} selecting among all living individuals, which very much concords with the utilitarian perspective.\footnote{Furthermore, assuming nonuniform \emph{group weights} $\wv$ correspond to the \emph{population frequencies} of each group, this perspective still replaces the risk-aversion of Wald's maximin principle with a \emph{uniform average} over all individuals.} 
Despite the mathematical simplicity of these results, it is encouraging to see that the two most popular aggregator functions do arise as special cases of this adversarial game, and in some sense they are the extreme cases, as the Angel action space is \emph{maximal} (complete) in \cref{thm:special-cases-classical:egal} and \emph{minimal}
(singleton)
in \cref{thm:special-cases-classical:util}.

In contrast, \cref{thm:special-cases-classical:wumswf} is rather surprising, as we see that a simple lower-bound constraint on weights values produce the classical (weighted) utilitarian maximin social welfare function (UMSWF).
The statement of the result gives some intuition: this Angel action space is a convex combination of the egalitarian and utilitarian action spaces, and so too is the weighted UMSWF a convex combination of the egalitarian and utilitarian aggregator functions.
Also of note is that
the unweighted UMSWF is a more restrictive class then the GGSWF, and is theoretically justified by a rather heavy-handed (strong) set of axioms.
This result gives an alternative characterization of UMSWF as a robust variant of utilitarian welfare, where $\gamma \wv_{i}^{*}$ is a lower bound on the 
weight of each group $i$ (note that
WLOG any such set of feasible lower-bounds can be represented for some $\gamma \in [0, 1]$, $\wv^{*} \in \triangle_{\NGroups}$).
Finally, \cref{thm:special-cases-classical:ggswf} is perhaps the most sophisticated result here, as the class of Angel actions has quite a bit more structure (though it is still a bounded polytope).
This characterization also provides an alternative characterization of the Gini social welfare as a robust utilitarian objective, where the weights relative population sizes of all groups are known, but the identities of the group associated with each weight is not known.

\todo{leximin?}

We now show that expanding the classes of \cref{thm:special-cases-classical} (
except for egalitarian, which is already maximal) results in novel nontrivial robust aggregator function concepts.
Each of these robust aggregators essentially optimizes a classical aggregator function subject to a worst case assumption \wrt\ some type of uncertainty.
We illustrate in \cref{fig:simplex-constraints} robustness sets defined by the $\mathcal{L}_{\infty}$, $\mathcal{L}_{2}$, and $\mathcal{L}_{1}$ norms around a point $\wv^{*}$.

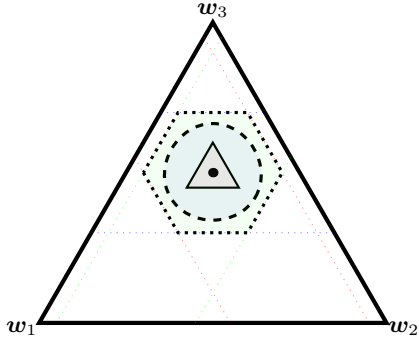
\begin{figure}

\begin{minipage}{0.349\textwidth}

\centering


    \ifforc
    \vspace{-0.45cm}
    \newcommand{\thisscale}{2.47}
    \else
    \newcommand{\thisscale}{2.666}
    \fi

\null\hspace{-0.3cm}%
\begin{tikzpicture}[
    scale=\thisscale,
    x={(-0.866025404cm,-0.5cm)},y={(0.866025404cm,-0.5cm)},z={(0cm,1cm)},
    cline/.style={very thick},
]

\draw[ultra thick] (1, 0, 0) -- (0, 1, 0) -- (0, 0, 1) -- cycle;


\ifforc
\newcommand{\thisrad}{0.121}
\else
\newcommand{\thisrad}{0.1}
\fi

\node (x) at (1+\thisrad, 0, 0+0.25*\thisrad) {\small$\wv_{1}$};
\node (y) at (0, 1+\thisrad, 0+0.25*\thisrad) {\small$\wv_{2}$};
\node (z) at (0, 0, 1+0.6*\thisrad) {\small$\wv_{3}$};

\renewcommand{\thisrad}{0.2}


\draw[line width=0.125cm,line cap=round] (0.25, 0.25, 0.5) -- (0.25, 0.25, 0.5); 


\begin{scope}[blend group=multiply]

\draw[thick,fill=red,fill opacity=0.05] (0.25+0.5*\thisrad, 0.25-0.25*\thisrad, 0.5-0.25*\thisrad) -- (0.25-0.25*\thisrad, 0.25+0.5*\thisrad, 0.5-0.25*\thisrad) -- (0.25-0.25*\thisrad, 0.25-0.25*\thisrad, 0.5+0.5*\thisrad) -- cycle;



\newcommand{\sphtr}{0.816496581} 
\draw[very thick,dashed,fill=blue,fill opacity=0.05] plot[,smooth cycle,tension=1.5] coordinates {(0.25+\sphtr*\thisrad, 0.25-0.5*\sphtr*\thisrad, 0.5-0.5*\sphtr*\thisrad) (0.25-0.5*\sphtr*\thisrad, 0.25+\sphtr*\thisrad, 0.5-0.5*\sphtr*\thisrad) (0.25-0.5*\sphtr*\thisrad, 0.25-0.5*\sphtr*\thisrad, 0.5+\sphtr*\thisrad)};

\end{scope}


\iftrue

\begin{scope}[blend group=multiply]

\draw[red,dotted,,opacity=0.5] (0.25+\thisrad, 0, 0.75-\thisrad) -- (0.25+\thisrad, 0.75-\thisrad, 0);
\draw[red,dotted,,opacity=0.5] (0.25-\thisrad, 0, 0.75+\thisrad) -- (0.25-\thisrad, 0.75+\thisrad, 0);

\draw[green,dotted,,opacity=0.5] (0, 0.25+\thisrad, 0.75-\thisrad) -- (0.75-\thisrad, 0.25+\thisrad, 0);
\draw[green,dotted,,opacity=0.5] (0, 0.25-\thisrad, 0.75+\thisrad) -- (0.75+\thisrad, 0.25-\thisrad, 0);

\draw[blue,dotted,,opacity=0.5] (0.5-\thisrad, 0, 0.5+\thisrad) -- (0, 0.5-\thisrad, 0.5+\thisrad);
\draw[blue,dotted,,opacity=0.5] (0.5+\thisrad, 0, 0.5-\thisrad) -- (0, 0.5+\thisrad, 0.5-\thisrad);

\end{scope}

\fi


\begin{scope}[blend group=multiply]

\draw[very thick,dotted,fill=green,fill opacity=0.05] (0.25+\thisrad, 0.25-\thisrad, 0.5) -- (0.25+\thisrad, 0.25, 0.5-\thisrad) -- (0.25, 0.25+\thisrad, 0.5-\thisrad) -- (0.25-\thisrad, 0.25+\thisrad, 0.5) -- (0.25-\thisrad, 0.25, 0.5+\thisrad) -- (0.25, 0.25-\thisrad, 0.5+\thisrad) -- cycle;

\end{scope}

\end{tikzpicture}

\end{minipage}%
\hfill
\begin{minipage}{0.619\textwidth}

\caption{%
A simplicial plot over $\triangle_{3}$ of the robustness sets defined by intersection with the $\mathcal{L}_{\infty}$, $\mathcal{L}_{2}$, and $\mathcal{L}_{1}$ norm balls of radius
$\frac{1}{5}$
around the point $\wv^{*} = \left\langle \frac{1}{4}, \frac{1}{4}, \frac{1}{2} \right\rangle$.
The boundaries of the $\mathcal{L}_{\infty}$, $\mathcal{L}_{2}$, and $\mathcal{L}_{1}$ balls are plotted in solid, dashed, and dotted lines, respectively.
Assuming positive radius $r$ such that 
each $\wv^{*}$-centered norm ball is
contained
by the unit hypercube,
i.e., $\norm{\wv^{*}}_{\infty} \leq \norm{\wv^{*}}_{2} \leq \norm{\wv^{*}}_{1} \leq r$,
intersection with the unit simplex yields an equilateral triangular, circular, or hexagonal region, respectively, with $\NGroups=3$.
In higher dimensions, the regions become simplicial, hyperspherical, or regular-polytopal\todo{n choose n/2 for even n?}, respectively.
}
\label{fig:simplex-constraints}
\end{minipage}

\end{figure}

\begin{theorem}[Robust Welfare and Malfare Functions as Constrained Angel Solution Concepts]
\label{thm:special-cases-robust}
Suppose as in
\cref{thm:special-cases-robust}.
Suppose also some closed convex \emph{robustness set} $\mathcal{R}$ such that $\bm{0} \in \mathcal{R}$ (usually some type of norm-ball).
Then for any $\lv \in \R^{\NGroups}$, the 
Angel's best responses under the following special cases
of Angel action spaces $\mathcal{A}_{\Ang}$, as represented by
weight action spaces $\Wv$, give rise to 
\emph{robust variants} of standard aggregator functions.

\begin{enumerate}
\item\label{thm:special-cases-robust:util} \textbf{Utilitarian}: Suppose $\Wv = (\wv^{*} + \mathcal{R}) \cap \triangle_{\NGroups}$.
Then
$
\displaystyle
\minmax_{\wv \in \Wv} \Payoff_{1} (\lv; \wv) = \minmax_{\wv' \in \Wv} \Mean_{1}(\lv; \wv')
\enspace.
$

\item\label{thm:special-cases-robust:wumswf} \textbf{Weighted Utilitarian-Maximin
}: Suppose $\Wv = (\left\{ \wv \, \middle| \, \wv \succeq \gamma \wv^{*} \right\} + \gamma\mathcal{R}) \cap \triangle_{\NGroups} = \linebreak[3] (\gamma \wv^{*} + \mathcal{R}) \cap \triangle_{\NGroups} + (1 - \gamma) \triangle_{\NGroups}$.
Then
$
\displaystyle
\minmax_{\wv \in \Wv} \Payoff_{1} (\lv; \wv) = \ \ \minmax_{ \mathclap{\wv' \in (\wv^{*} + \mathcal{R}) \cap \triangle_{\NGroups}} } \ \ \gamma \Mean_{1}(\lv; \wv^{*}) + (1 - \gamma) \Mean_{\mp\infty}(\lv)
\enspace.
$

\item\label{thm:special-cases-robust:ggswf} \textbf{Generalized Gini%
}: Suppose
$\Wv = (\left\{ \pi(\wv^{\downarrow}) \middle| \pi \in \Pi_{\NGroups} \right\} + \mathcal{R}) \cap \triangle_{\NGroups}$, 
where $\Pi_{\NGroups}$ is the set of all permutations on $\NGroups$ items.
Then
$
\displaystyle
\minmax_{\wv \in \Wv} \Payoff_{1} (\lv; \wv) = \ \ \minmax_{\mathclap{\smash{\wv^{\downarrow}}' \in (\wv^{\downarrow} + \mathcal{R}) \cap \triangle_{\NGroups}} } \ \ \Mean_{{\wv^{\downarrow}}'}(\lv)
\enspace.
$
\todo{Sorting very unclear!}

\end{enumerate}

Furthermore,\todo{ generally speaking (and in particular for $\mathcal{B}^{\ell_{2}}_{\NGroups}(\gamma)$ with $0 < \gamma \leq ?$ [todo diameter condition]),}
in general none of the above are equivalent to \emph{any} egalitarian, utilitarian, 
weighted utilitarian-maximin, or generalized Gini
welfare or malfare function.
\end{theorem}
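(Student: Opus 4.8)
The plan is to show that, while every egalitarian, utilitarian, weighted utilitarian--maximin, and generalized Gini welfare or malfare function is \emph{piecewise linear} in the sentiment vector $\lv$, the robust aggregators of \cref{thm:special-cases-robust} fail to be piecewise linear for natural choices of robustness set --- concretely, for a Euclidean ball $\mathcal{R} = \Ball^{\ell_2}_{\NGroups}(r)$ of small positive radius $r$ (the choice hinted at in the statement). Since equality of functions transports piecewise linearity, no robust aggregator arising from such an $\mathcal{R}$ can coincide with any of those classical functions. The qualifier ``in general'' is genuinely needed: for $\mathcal{R} = \{\bm{0}\}$, or $\gamma = 0$ in the weighted utilitarian--maximin case, or $\NGroups = 1$, the robust aggregators \emph{do} degenerate to classical ones, so the claim is that coincidence fails for an open (indeed generic) family of robustness sets rather than for all of them.

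\textbf{Step 1: the classical targets are piecewise linear.}
Utilitarian $\Mean_1(\lv;\wv^{*}) = \wv^{*}\cdot\lv$ is affine; egalitarian $\Mean_{\mp\infty}(\lv)$ is $\min_i\lv_i$ (resp.\ $\max_i\lv_i$), a min (resp.\ max) of $\NGroups$ linear forms; by the rearrangement inequality the generalized Gini welfare satisfies $\Mean_{\wv^{\uparrow}}(\lv) = \wv^{\uparrow}\cdot\lv^{\downarrow} = \min_{\pi\in\Pi_{\NGroups}}(\pi\wv^{\downarrow})\cdot\lv$, a min of $\NGroups!$ linear forms (and the Gini malfare the corresponding $\max$); and the weighted utilitarian--maximin function is a convex combination of one egalitarian and one utilitarian function. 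Hence each classical target admits a finite polyhedral subdivision of $\R^{\NGroups}$ on each cell of which it is affine; in particular, its restriction to any affine line $t \mapsto \lv_0 + t\bm{v}$ is a continuous piecewise-linear function of $t$ with finitely many breakpoints, so it is affine on some interval of positive length.

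\textbf{Step 2: the robust aggregators are not piecewise linear.}
Fix $\NGroups \ge 2$, take the relevant centres ($\wv^{*}$, resp.\ the vertices $\pi(\wv^{\downarrow})$) in the relative interior of $\triangle_{\NGroups}$, and $r>0$ small enough that each $(\text{centre} + \Ball^{\ell_2}_{\NGroups}(r))\cap\triangle_{\NGroups}$ equals $\text{centre} + \bigl(\Ball^{\ell_2}_{\NGroups}(r)\cap T\bigr)$, where $T \doteq \{\bm{\delta} : \bm{1}\cdot\bm{\delta} = 0\}$ is the tangent space of the simplex. With $\Proj_T\lv = \lv - \bar\lv\,\bm{1}$ the orthogonal projection onto $T$ (so $\norm{\Proj_T\lv}_2$ is $\sqrt{\NGroups}$ times the population standard deviation of $\lv$), the support-function identity $\max_{\bm{\delta}\in\Ball^{\ell_2}_{\NGroups}(r)\cap T}\bm{\delta}\cdot\lv = r\norm{\Proj_T\lv}_2$ collapses the formulas of \cref{thm:special-cases-robust}: the robust utilitarian aggregator becomes $\Mean_1(\lv;\wv^{*})\mp r\norm{\Proj_T\lv}_2$; the robust weighted utilitarian--maximin aggregator becomes $\gamma\Mean_1(\lv;\wv^{*}) + (1-\gamma)\Mean_{\mp\infty}(\lv)\mp\gamma r\norm{\Proj_T\lv}_2$ (routing through the convex-combination form $\gamma\cdot(\text{robust utilitarian}) + (1-\gamma)\cdot(\text{egalitarian})$); and the robust Gini aggregator becomes $\Mean_{\wv^{\uparrow}}(\lv)\mp r\norm{\Proj_T\lv}_2$ for welfare (resp.\ $\Mean_{\wv^{\downarrow}}(\lv) + r\norm{\Proj_T\lv}_2$ for malfare), using additionally the rearrangement identity of Step~1 to collapse the minimum over permutations together with the fact that $T$, $\Proj_T$, and $\norm{\cdot}_2$ are permutation-invariant so that sorting $\lv$ leaves the projection term unchanged. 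In each case the robust aggregator is (classical piecewise-linear function) $\mp$ (nonnegative multiple of $\norm{\Proj_T\lv}_2$), with the multiple strictly positive whenever $r>0$ (and, in the weighted utilitarian--maximin case, $\gamma>0$). Now $\norm{\Proj_T\lv}_2$ is not piecewise linear for $\NGroups\ge2$: restricting it to a line $t\mapsto\lv_0 + t\bm{v}$ with $\Proj_T\lv_0$ and $\Proj_T\bm{v}$ linearly independent (possible since $\dim T = \NGroups-1\ge1$) gives $t\mapsto\sqrt{at^2+bt+c}$ with $a = \norm{\Proj_T\bm{v}}_2^2>0$ and $b^2<4ac$ by strict Cauchy--Schwarz, which is smooth with everywhere-positive second derivative and hence affine on no interval. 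Therefore, if some robust aggregator equalled a classical welfare or malfare function $\Mean$, then a strictly positive multiple of $\norm{\Proj_T\lv}_2$ would equal a difference of two piecewise-linear functions --- hence be piecewise linear --- a contradiction. Since this works for an open range of radii $r$ (and, more generally, for any closed convex $\mathcal{R}\ni\bm{0}$ whose support function restricted to $T$ is not piecewise linear), the ``in general'' claim follows.

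\textbf{Main obstacle.}
I expect the fiddly part to be the bookkeeping around ``intersect the Minkowski sum with $\triangle_{\NGroups}$'': one must verify that for small $r$ this reduces cleanly to translating $\Ball^{\ell_2}_{\NGroups}(r)\cap T$, so that $\minmax_{\wv\in\Wv}\wv\cdot\lv$ factors into ``classical part $\mp$ norm part''; the weighted utilitarian--maximin case needs slightly more care because $\gamma\wv^{*}$ does not lie on the simplex, which is why one should work from the convex-combination form given in the statement. Making ``not piecewise linear'' airtight via the line-restriction/strict-convexity argument (rather than an appeal to curvature), and explicitly flagging the degenerate cases where equality \emph{does} hold, are the remaining points that need attention, though both are routine.
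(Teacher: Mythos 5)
Your piecewise-linearity separation is a sensible device for the final non-equivalence clause, and your closed forms for the small-$\ell_2$-ball instances are correct, but two concrete problems remain. The first is the dimension count: you fix $\NGroups \geq 2$ and justify picking a line with $\Proj_T \lv_0$ and $\Proj_T \bm{v}$ linearly independent by ``$\dim T = \NGroups - 1 \geq 1$,'' but two linearly independent vectors require $\dim T \geq 2$, i.e.\ $\NGroups \geq 3$. This is not merely cosmetic: for $\NGroups = 2$ one has $\norm{\Proj_T \lv}_2 = \abs{\lv_1 - \lv_2}/\sqrt{2}$, which \emph{is} piecewise linear, and in fact the $\ell_2$-robust utilitarian $\wv^{*} \cdot \lv - r\norm{\Proj_T\lv}_2$ coincides exactly with a weighted utilitarian--maximin function (match coefficients on the two orderings of $\lv$: one gets $\gamma = 1 - r\sqrt{2}$ and $\wv'_i = (\wv^{*}_i - r/\sqrt{2})/\gamma$), so the non-equivalence conclusion you assert is false at $\NGroups = 2$. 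The fix is easy --- restrict to $\NGroups \geq 3$, which the theorem's ``in general'' hedge permits --- but as written the key step of Step~2 fails, and the degenerate-case discussion in your strategy paragraph should list $\NGroups = 2$ (with Euclidean balls) alongside $\mathcal{R} = \{\bm{0}\}$ and $\gamma = 0$.

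The second problem is coverage: the theorem's three enumerated identities are asserted for \emph{every} closed convex $\mathcal{R} \ni \bm{0}$, while your argument verifies them only for small Euclidean balls centered in the relative interior of $\triangle_{\NGroups}$ (that is exactly what the support-function collapse does), and then spends all its effort on the ``furthermore'' clause. Item~1 is indeed a tautology, since $\Payoff_{1}(\lv;\wv) = \wv \cdot \lv = \Mean_{1}(\lv;\wv)$, and item~2 follows for general $\mathcal{R}$ from the decomposition of a linear minimum over a Minkowski sum, $\minmax_{\wv \in \gamma A + (1-\gamma)\triangle_{\NGroups}} \wv\cdot\lv = \gamma \minmax_{\wv' \in A} \wv'\cdot\lv + (1-\gamma)\Mean_{\mp\infty}(\lv)$, which you use implicitly for the ball but never state at the claimed generality. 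Item~3 is more delicate: for a robustness set that is not permutation-invariant (e.g.\ a one-sided segment such as $\{(t,-t) : t \in [0,\epsilon]\}$), the minimum over $\bigcup_{\pi} \bigl(\pi(\wv^{\downarrow}) + \mathcal{R}\bigr) \cap \triangle_{\NGroups}$ does not reduce to perturbations of the single center $\wv^{\downarrow}$ under either natural reading of the paper's (admittedly ambiguous) sorting notation, so you need either a permutation-symmetry hypothesis on $\mathcal{R}$ --- satisfied by the $\ell_p$ balls you actually use --- or an explicit statement that you establish the identities only for the instances entering the non-equivalence argument. Flagging which of these you intend would make the proof's scope match the theorem's.
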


\todo{regularized Angel? Modify payoff function?}

\todo{L2 case analysis}
\ifdraft
TODO
L2 case: $\Wv = \triangle_{\NGroups} \cap \mathcal{B}^{2}_{\NGroups}(\gamma)$ for $\gamma \geq \frac{1}{\sqrt{\NGroups}}$
\[
\min_{\wv \in \Wv} \wv \cdot \lv = ?
\]

NB: $\Wv$ is a simplex in degenerate case, or when the intersection is nondegenerate, it's equivalent to projecting a hypersphere onto an ellipse.
For large enough $\gamma$, it just ``cuts off the corners,'' and no group gets all weight (but some can have $0$ weight), but if $\frac{1}{\sqrt{\NGroups}} \leq \gamma < \frac{1}{\sqrt{\NGroups - 1}}$, then all groups must have nonzero weight.
In this case,
TODO $\wv_{\NGroups} = ...$
\[
\Wv = \{ \wv \succeq 0 | \norm{\wv}_{2}^{2} \leq \gamma^{2}, \norm{\wv}_{1} = 1 \}
 = \{ \wv \succeq 0 | \sum_{i=1}^{\NGroups-1} \wv_{i}^{2} + (1 - \sum_{i=1}^{\NGroups-1} \wv_{i})^{2} \leq \gamma^{2} \}
\]
\[
 = \{ \wv \succeq 0 | 1 + 2\sum_{i=1}^{\NGroups-1} \wv_{i}^{2} - 2 \sum_{i=1}^{\NGroups-1} \wv_{i} + 2\sum_{i=1}^{\NGroups-1}\sum_{j=1}^{\NGroups-1} \wv_{i}\wv_{j} \leq \gamma^{2} \}
 = \{ \wv \succeq 0 | 1 - 2\sum_{i=1}^{\NGroups-1} \wv_{i}(1 - \wv_{i}) + 2\sum_{i=1}^{\NGroups-1}\sum_{j=1}^{\NGroups-1} \wv_{i}\wv_{j} \leq \gamma^{2} \}
\]
TODO CROSS TERMS NO I=J

TODO: it should be simple.

For 3 groups:
$1 + 2xy - 2x(1-x) - 2y(1-y) = x^{2} + y^{2} + (1 - x - y)^{2} = \gamma^{2}$.
Equivalent to
\[
\frac{1 - \gamma^{2}}{2} = x(1-x) + y(1-y) - xy
\]
This is an ellipsoid?

TODO: it should always be a linear function on an ellipsoid in these cases?

Wolfram fails: \url{https://www.wolframalpha.com/input?i=Minimize%5B+%7Ba+*+x+%2B+b+*+y%2C+x+%3E%3D+0%2C+y+%3E%3D+0%2C++x%5E2+%2B+y%5E2+%3C%3D+0.99%2C+x+%2B+y+%3D+1%7D%2C+%7Bx%2C+y%7D+%5D} 

Cyrus' new spherical attempt: Coordinate change. Basis vectors $\langle 1, \frac{1}{\NGroups - 1}, \dots, \frac{1}{\NGroups - 1} \rangle$? Has L2 norm $\sqrt{1 + \frac{1}{\NGroups-1}} = \sqrt{\NGroups}{\NGroups - 1}$.

NB: l1 ball and linf ball are same after intersecting with simplex? Or are they? No.

\fi

\subsection{From Egocentric to Altruistic Agents}
\label{sec:phil:altruistic-daemon}

\begin{figure}
    \centering

    \begin{minipage}{0.495\textwidth}
\includegraphics[width=\textwidth]{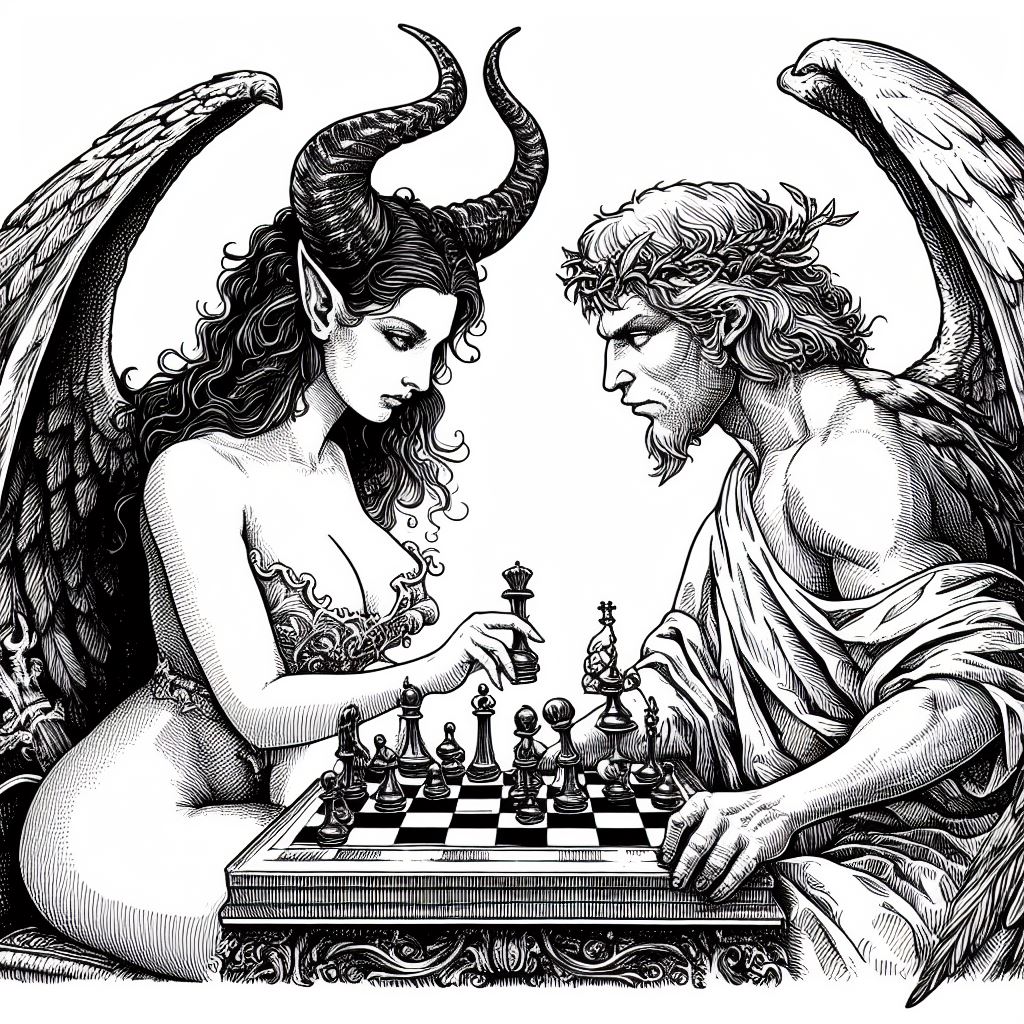}
    \end{minipage}%
    \hfill
    \begin{minipage}{0.495\textwidth}

\ifforc
\footnotesize
\renewcommand{\Large}{\large}
\renewcommand{\medskip}{\smallskip}
\else
\small
\renewcommand{\Large}{\large}
\fi

    \centering 
{ \Large\titlefont The Altruistic D\ae mon Rawlsian Game }

\medskip


$\NGroups, \Lv, \Wv$: Group count, sentiment space, weights space (as in \cref{fig:rawlsian-game,fig:constrained-rawlsian-game}).




$\mathcal{A}_{\Dae} \subseteq \R^{\NGroups}, \mathcal{A}_{\Ang} \subseteq \triangle_{\NGroups}$: Action spaces.

$\Mean(\lv; \wv)$: 
D\ae mon 
aggregator function.

$\Payoff(\lv; \wv): \R^{\NGroups} \times \triangle_{\NGroups} \to \R^{2}$: 
Zero-sum payoff function representing the D\ae mon's aggregate
\vspace{-0.1cm}
\[
\Payoff (\lv, \wv) \doteq \langle \Mean(\lv; \wv), -\Mean(\lv; \wv) \rangle \enspace. 
\]

Strategic gameplay (D\ae mon goes first):
\vspace{-0.1cm}
\[
\argmaxmin_{\lv \in \mathcal{A}_{\Dae}} \minmax_{\wv \in \mathcal{A}_{\Ang}} \Payoff_{1}(\lv, \wv)
= \argmaxmin_{\lv \in \Lv} \minmax_{\wv \in \Wv} \Mean(\lv; \wv) \enspace.
\]

{ \Large\titlefont Or with Utility Transforms}

\medskip

$T(u): \R \to \R$: 
D\ae mon sentiment transform.

$\Payoff (\lv; \wv) \doteq \langle \wv \cdot T(\lv), -\wv \cdot T(\lv) \rangle$:
Payoff function.

Strategic gameplay (D\ae mon goes first):
\vspace{-0.1cm}
\[
\argmaxmin_{\lv \in \mathcal{A}_{\Dae}} \minmax_{\!\wv \in \mathcal{A}_{\Ang}\!} \!\!\! \Payoff_{1}(\lv, \wv)
= \argmaxmin_{\lv \in \Lv} \minmax_{\mathclap{\wv \in \Wv}} T^{-1\!}\left(\wv \cdot T(\lv) \right) \enspace.
\]
    \end{minipage}

    \caption{%
    Metaphoric depiction and game-theoretic description of the 
    altruistic D\ae mon original position game.
    A social-planner D\ae mon (left) plays a zero-sum game against an adversarial Angel (right).
    Both the aggregator-function and the utility-transform formulations of the game are presented. 
%
%
    \todo{Sion equals figure?}
    }
    \label{fig:rawlsian-game-altruistic-daemon}
\end{figure}

\todo{While the prior sections are concerned with adversarial gameplay in 0-sum games, 
[todo: 2 formulations]}

We now show that under our randomized game, if the D\ae mon plays pure strategies and the mixed Angel strategy space is constrained to a compact set, then any power mean welfare function arises as a solution concept when the D\ae mon's payoff is a concave utility transform (or convex disutility transform) of their \emph{ex ante} (dis)utility.
Alternatively, we can think of this as a D\ae mon that is altruistically concerned with the wellbeing of groups of the people in their world, where the Angel is allowed to reweight the sizes of these groups. As a third interpretation, we can think of the game as a metaphysical construct where the D\ae mon is not reincarnated once, but lives all lives within their world, and thus wants to ensure a just and equitable society.

In this game,
the D\ae mon's (dis)utility transform, or their
welfare or malfare function,
determine the power-mean $p$, and the weights $\wv$
and robustness
are determined by the Angel's action space $\Wv$.
Finally, we develop a novel class of
aggregator functions
that combines
the power-mean and the Gini classes, and show that it arises as the solution concept for particular parameterizations of this game.
These games are depicted and described in \cref{fig:rawlsian-game-altruistic-daemon}.

\todo{TODO define payoff function, explain $\Mean_{1}(\lv; \wv) = \wv \cdot \lv$, thus for this choice, it's the same.}

\todo{\paragraph{On Robust Power-Mean Solution Concepts}
[todo pure D\ae mon actions, known utilities?]}

\begin{restatable}[Strategic Gameplay from Nonlinear Objectives]{theorem}{thmstratnonlin}
\label{thm:strat-nonlin}

Suppose payoff function
\\
$\Payoff(\lv; \wv) = \left\langle \Mean_{p}(\lv; \wv), -\Mean_{p}(\lv; \wv) \right\rangle$. 
Then strategic gameplay yields
\[
\argmaxmin_{\lv \in \mathcal{A}_{\Dae}} \minmax_{\wv \in \mathcal{A}_{\Ang}} \!\! \Payoff_{1} (\lv, \wv)
= \argmaxmin_{\lv \in \Lv} \minmax_{\wv \in \Wv} \Mean_{p}( \lv; \wv )
\enspace.
\]
\todo{why arg here?}
Furthermore, if $\Lv = \CH(\Lv)$ and $\Mean(\lv; \wv)$ 
exhibits concave curvature in $\lv$ (or convex for disutility), then a pure D\ae mon strategy is always optimal.
Furthermore, if the curvature is \emph{strictly concave}, then
a pure D\ae mon strategy is \emph{strictly optimal} (over all other pure and mixed D\ae mon strategies).
\end{restatable}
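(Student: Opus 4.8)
\textbf{The first identity} is a matter of unwinding notation: by hypothesis the payoff satisfies $\Payoff_1(\lv,\wv)=\Mean_p(\lv;\wv)$, and the action spaces are $\mathcal{A}_{\Dae}=\Lv$ and $\mathcal{A}_{\Ang}=\Wv$, so substituting into $\argmaxmin_{\lv\in\mathcal{A}_{\Dae}}\minmax_{\wv\in\mathcal{A}_{\Ang}}\Payoff_1(\lv,\wv)$ reproduces the right-hand side verbatim (the ``arg'' being simply the D\ae mon's optimal action set). I would dispatch this in a single sentence, noting that nothing beyond the definitions is used.

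\textbf{Sufficiency of pure D\ae mon strategies.} The plan is a barycenter-plus-Jensen argument, and it only uses concavity of the payoff in $\lv$ (not the specific form $\Mean_p$). Fix an arbitrary mixed D\ae mon strategy, i.e.\ a distribution $\ProbDist$ supported on $\Lv$, and set $\lv^{\circ}\doteq\Expect_{\lv\distributed\ProbDist}[\lv]$. Since $\Lv=\CH(\Lv)$ is convex (and closed, for mixtures that are not finitely supported), $\lv^{\circ}\in\Lv$, so $\lv^{\circ}$ is a legal pure action. For each fixed $\wv\in\Wv$, concavity of $\lv\mapsto\Mean_p(\lv;\wv)$ and Jensen's inequality give $\Mean_p(\lv^{\circ};\wv)\geq\Expect_{\lv\distributed\ProbDist}[\Mean_p(\lv;\wv)]=\Payoff_1(\ProbDist,\wv)$. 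As this holds for every $\wv$, taking $\min$ over $\wv\in\Wv$ on both sides (the Angel minimizes in the welfare branch, using that $f\geq g$ pointwise forces $\min f\geq\min g$) yields $\min_{\wv\in\Wv}\Mean_p(\lv^{\circ};\wv)\geq\min_{\wv\in\Wv}\Payoff_1(\ProbDist,\wv)$; that is, the pure action $\lv^{\circ}$ weakly dominates $\ProbDist$ once the Angel best-responds, so the supremum over D\ae mon mixtures is already attained at a pure action. The malfare branch is the mirror image: $\Mean_p$ convex in $\lv$, reversed Jensen, $\max$ over $\Wv$.

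\textbf{Strict optimality under strictly concave curvature.} I would upgrade both inequalities. If $\ProbDist$ is not a point mass, strict Jensen gives $\Mean_p(\lv^{\circ};\wv)>\Payoff_1(\ProbDist,\wv)$ for every $\wv$; picking $\hat\wv$ attaining $\min_{\wv\in\Wv}\Mean_p(\lv^{\circ};\wv)$ (attained since $\Wv$ is compact and $\Mean_p$ continuous) gives $\min_{\wv}\Mean_p(\lv^{\circ};\wv)=\Mean_p(\lv^{\circ};\hat\wv)>\Payoff_1(\ProbDist,\hat\wv)\geq\min_{\wv}\Payoff_1(\ProbDist,\wv)$, so every genuinely mixed strategy is strictly worse than its own barycenter. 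Moreover $\lv\mapsto\min_{\wv\in\Wv}\Mean_p(\lv;\wv)$ is itself strictly concave --- a pointwise minimum over a compact family of strictly concave functions is strictly concave --- hence has at most one maximizer over the convex set $\Lv$, which is therefore pure; combining the two facts, the unique D\ae mon optimum is this pure action, and it strictly beats every other pure or mixed strategy.

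\textbf{Expected main obstacle.} The delicate point is the strict case: making sure the strict Jensen gap survives the $\min$ over $\wv$ (handled by attaining that min via compactness of $\Wv$), and pinning down the meaning of ``strictly concave curvature'' --- power-means are positively homogeneous, hence linear along rays through the origin and so never strictly concave on all of $\RNN^{\NGroups}$, so the hypothesis must be read in whatever curvature sense the paper has fixed (or as applying to strictly concave sentiment transforms rather than to $\Mean_p$ literally). The remaining ingredients --- barycenter membership from convexity of $\Lv$, Jensen for the nonlinear payoff, and monotonicity of $\min/\max$ under pointwise inequalities --- are routine.
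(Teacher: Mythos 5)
Your proposal is correct and follows the natural route the paper intends: the first identity is pure notation-unwinding, and pure-strategy (strict) optimality comes from replacing a mixed strategy by its barycenter in the convex set $\Lv=\CH(\Lv)$ and applying (strict) Jensen's inequality pointwise in $\wv$ before taking the Angel's $\min$/$\max$, with compactness of $\Wv$ securing attainment so the strict gap survives. Your closing caveat is also well taken: since $\Mean_{p}$ is positively homogeneous and hence never strictly concave along rays, the strict-curvature hypothesis must be read in the paper's restricted sense rather than literally on all of $\RNN^{\NGroups}$.
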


\todo{ similar for any kind of weighted aggregator? Generalized utility transforms?}

\begin{restatable}[Power-Means as 
Utility Transforms]{theorem}{thmpmeanutiltrans}
\label{thm:pmean-util-trans}


Suppose some $p \leq 1$ for utility or $p \geq 1$ for disutility, and a
(dis)utility transform $T(u) = \sgn(p) u^{p}$ 
for $p \neq 0$, or $T(u) = \ln(u)$ for $p=0$, and take $\Payoff_{1} (\lv; \wv) = \wv \cdot T(\lv)$.
Then
\[
\minmax_{\wv \in \mathcal{A}_{\Ang}} \!\! \Payoff_{1} (\lv, \wv)
= \minmax_{\wv \in \Wv} \begin{cases} p \neq 0 & \sgn(p) \Mean_{p}^{p}( \lv; \wv ) \\
    p = 0 & \exp \bigl( \Mean_{0}( \lv; \wv ) \bigr) \enspace.
    \end{cases}
\]
Consequently, as both of the above cases are strict monotonic
functions
of the power-mean $\Mean_{p}(\lv; \wv)$, it holds that
\[
\argmaxmin_{\lv \in \mathcal{A}_{\Dae}} \minmax_{\wv \in \mathcal{A}_{\Ang}} \!\! \Payoff_{1} (\lv, \wv)
= \argmaxmin_{\lv \in \Lv} \minmax_{\wv \in \Wv} \Mean_{p}( \lv; \wv )
  \enspace.
\]

Furthermore, if $\Lv = \CH(\Lv)$ and $T(\cdot)$ 
exhibits concave curvature in $\lv$ (or convex for disutility), then a pure D\ae mon strategy is always optimal.
Furthermore, if the curvature is \emph{strictly concave}, then
a pure D\ae mon strategy is \emph{strictly optimal} (over all other pure and mixed D\ae mon strategies).
\end{restatable}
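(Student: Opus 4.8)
The plan is to treat the stated identity as essentially a definitional unfolding, and then to bootstrap the ``consequently'' and ``furthermore'' clauses from it. Since the Angel's action space is $\mathcal{A}_{\Ang} = \Wv$, it suffices to check that the payoff $\Payoff_{1}(\lv;\wv) = \wv \cdot T(\lv)$ agrees, pointwise in $\wv \in \Wv$, with the claimed expression, and then to apply $\minmax_{\wv \in \Wv}$ to both sides. For $p \neq 0$, the transform $T(u) = \sgn(p) u^{p}$ applied coordinatewise gives $\wv \cdot T(\lv) = \sgn(p) \sum_{i} \wv_{i} \lv_{i}^{p} = \sgn(p) \Mean_{p}^{p}(\lv;\wv)$ straight from \eqref{eq:pmean}; for $p = 0$, $T(u) = \ln u$ gives $\wv \cdot T(\lv) = \sum_{i} \wv_{i} \ln \lv_{i}$, which is the argument of the outer exponential in the definition of $\Mean_{0}$, so composing with the appropriate monotone bijection recovers the stated form. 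The only subtlety here is boundary behaviour --- degenerate simplices, and zero sentiment values where $\ln$ and negative powers diverge --- which I would dispatch with exactly the $\RNN^{\NGroups}$ restriction and limiting conventions already in force for the power mean.

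For the ``consequently'' clause I would note that $m \mapsto \sgn(p) m^{p}$ is strictly increasing on $\RNN$ for every $p \neq 0$ (it is $m^{p}$ when $p > 0$ and $-m^{-\abs{p}}$, still increasing in $m$, when $p < 0$), and $t \mapsto \exp t$ is strictly increasing. Thus in every case the objective is a strictly increasing function $f$ of $\Mean_{p}(\lv;\wv)$. Because a strictly increasing $f$ commutes with $\min$, with $\max$, and hence with the nested stacked operators, and leaves the set of maximizers/minimizers unchanged, applying $f$ to the inner $\minmax_{\wv}$ and then the outer $\argmaxmin_{\lv}$ yields $\argmaxmin_{\lv \in \Lv} \minmax_{\wv \in \Wv} \Payoff_{1}(\lv,\wv) = \argmaxmin_{\lv \in \Lv} \minmax_{\wv \in \Wv} \Mean_{p}(\lv;\wv)$.

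For the final ``furthermore'' I would give the standard Jensen argument. Assume $\Lv = \CH(\Lv)$ and $T$ concave (the utility branch, $p \leq 1$; the disutility branch is obtained by swapping $\min \leftrightarrow \max$ and ``concave'' $\leftrightarrow$ ``convex''). For any mixed D\ae mon strategy $\ProbDist$ on $\Lv$ with barycenter $\bar{\lv} = \Expect_{\lv \distributed \ProbDist}[\lv] \in \CH(\Lv) = \Lv$, coordinatewise Jensen gives $\Expect_{\lv \distributed \ProbDist}[T(\lv)] \preceq T(\bar{\lv})$, and since $\wv \succeq \bm{0}$ this yields $\Payoff_{1}(\ProbDist,\wv) = \wv \cdot \Expect_{\lv \distributed \ProbDist}[T(\lv)] \leq \wv \cdot T(\bar{\lv}) = \Payoff_{1}(\bar{\lv},\wv)$ for every $\wv \in \Wv$; taking the minimum over $\wv$ preserves the inequality, so the pure strategy $\bar{\lv}$ weakly dominates $\ProbDist$ against a best-responding Angel, and a pure D\ae mon strategy is optimal. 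Strict concavity of $T$ makes the Jensen step strict for every non-degenerate $\ProbDist$ (with the usual care at weight-zero coordinates), giving strict optimality of pure strategies. Equivalently, one could invoke \Cref{thm:strat-nonlin} directly, since $\Mean_{p}(\cdot;\wv)$ is concave for $p \leq 1$ and convex for $p \geq 1$.

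I expect no deep obstacle: the whole argument is ``definition, then monotone reparametrization, then Jensen.'' The real work is bookkeeping --- keeping the $\sgn(p)$ convention, the $p = 0$ branch, and the utility-versus-disutility ($\min \leftrightarrow \max$) branches mutually consistent, and checking that the strict-monotonicity reparametrization passes correctly through the nested extrema and the $\argmaxmin$. The genuinely delicate points are the discontinuity of the power mean in $\wv$ near the simplex boundary and the divergence of $T$ at zero sentiment, both handled by the conventions the paper already adopts for $\Mean_{p}$.
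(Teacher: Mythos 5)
Your proposal is correct and follows essentially the same route the paper's argument takes: unfold the payoff definitionally so that $\Payoff_{1}(\lv,\wv) = \wv \cdot T(\lv)$ is a strictly increasing function of $\Mean_{p}(\lv;\wv)$, pass that monotone reparametrization through the nested extrema to equate the $\argmaxmin$ sets, and use coordinatewise Jensen together with $\Lv = \CH(\Lv)$ (i.e., the barycenter of any mixed strategy is itself feasible) for weak, and under strict concavity strict, optimality of pure D\ae mon strategies, with the caveat you already flag about weight-zero coordinates. One small point worth making explicit: for $p = 0$ the payoff is $\ln \Mean_{0}(\lv;\wv)$ rather than $\exp(\Mean_{0}(\lv;\wv))$, so the displayed value identity in the statement should be read (as you implicitly do) only up to a strictly increasing transform, which is all the ``consequently'' clause requires.
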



\paragraph{The $\wv^{\uparrow}$-$p$ Gini Power-Mean Class}
A natural instinct when confronted with the Gini and power-mean classes is to ``combine them'' into something like the following.
\begin{definition}[The $\wv^{\uparrow}$-$p$ Gini Power-Mean Class]
\label{def:gini-pmean}
Suppose some $p \leq 1$ and decreasing weights sequence $\wv^{\downarrow} \in \triangle_{\NGroups}$ for utility, or some $p \geq 1$ and increasing weights sequence $\wv^{\downarrow} \in \triangle_{\NGroups}$ for disutility.
Then, letting $\lv^{\uparrow}$ denote some $\lv \in \RNN^{\NGroups}$ in ascending order,
we define
\begin{align*}
& \Mean_{\wv^{\downarrow}\!,p}(\lv) \doteq \Mean_{p}(\lv^{\uparrow}\!; \wv^{\downarrow}) = \sqrt[p]{ \sum_{i=1}^{\NGroups} \wv^{\downarrow}_{i} (\lv^{\uparrow}_{i})^{p} }
\enspace \text{for 
welfare $(p \leq 1)$} \enspace, 
\enspace \ \  \text{or} \notag \\
& \Mean_{\wv^{\uparrow}\!,p}(\lv) \doteq \Mean_{p}(\lv^{\uparrow}\!; \wv^{\uparrow}) = \sqrt[p]{ \sum_{i=1}^{\NGroups} \wv^{\uparrow}_{i} (\lv^{\uparrow}_{i})^{p} } \enspace \text{for 
malfare $(p \geq 1)$} \enspace,
\end{align*}
i.e., we sort (dis)utilities, 
assign weights in ascending or descending order, and take a weighted power-mean.
\end{definition}

This class clearly generalizes both the unweighted power-mean and Gini families (for $p=1$ and $\wv^{\uparrow} = \frac{1}{\NGroups}\bm{1}$ or $\wv^{\downarrow} = \frac{1}{\NGroups}\bm{1}$), but now combines the piecewise-differentiable 
nature and ordinal boundaries of the Gini family with the continuously-differentiable nonlinear nature of the power-mean family.

Unfortunately, there is no known axiomatic characterization of \cref{def:gini-pmean}, and although the Gini axioms and power-mean axioms overlap heavily, combining them yields only their \emph{intersection}, i.e., the family consisting only of utilitarian and egalitarian welfare or malfare.
However, from either of the above power-mean characterizations (\cref{thm:strat-nonlin,thm:pmean-util-trans}), with the appropriate constrained Angel (selected as in \cref{thm:special-cases-classical}~\cref{thm:special-cases-classical:ggswf}),
\cref{def:gini-pmean} arises as a solution concept to our game.

\todo{\cref{def:gini-pmean} should be referenced later, prove it has curvature, etc.}

\todo{
argue we axiomatically characterize it?
First, it's an alternative Gini, then it generalizes?
D\ae mon accepts power-mean axioms, Angel represents uncertainty, we ``know something about group sizes.''
Not totally satisfying, but it's something.
However, the above 
}

\todo{
Letting $\mathcal{P} \doteq ...$, $\mathcal{G} \doteq ...$, and $\mathcal{GP} \doteq ...$, Clearly $\mathcal{P} \cup \mathcal{G} \subset \mathcal{GP}$, thus ...
}

\subsection{Coercing Altruistic Play from Egocentric D\ae mons}
\label{sec:phil:altruistic-angel}

\begin{figure}
    \centering

    \begin{minipage}{0.495\textwidth}
\includegraphics[width=\textwidth]{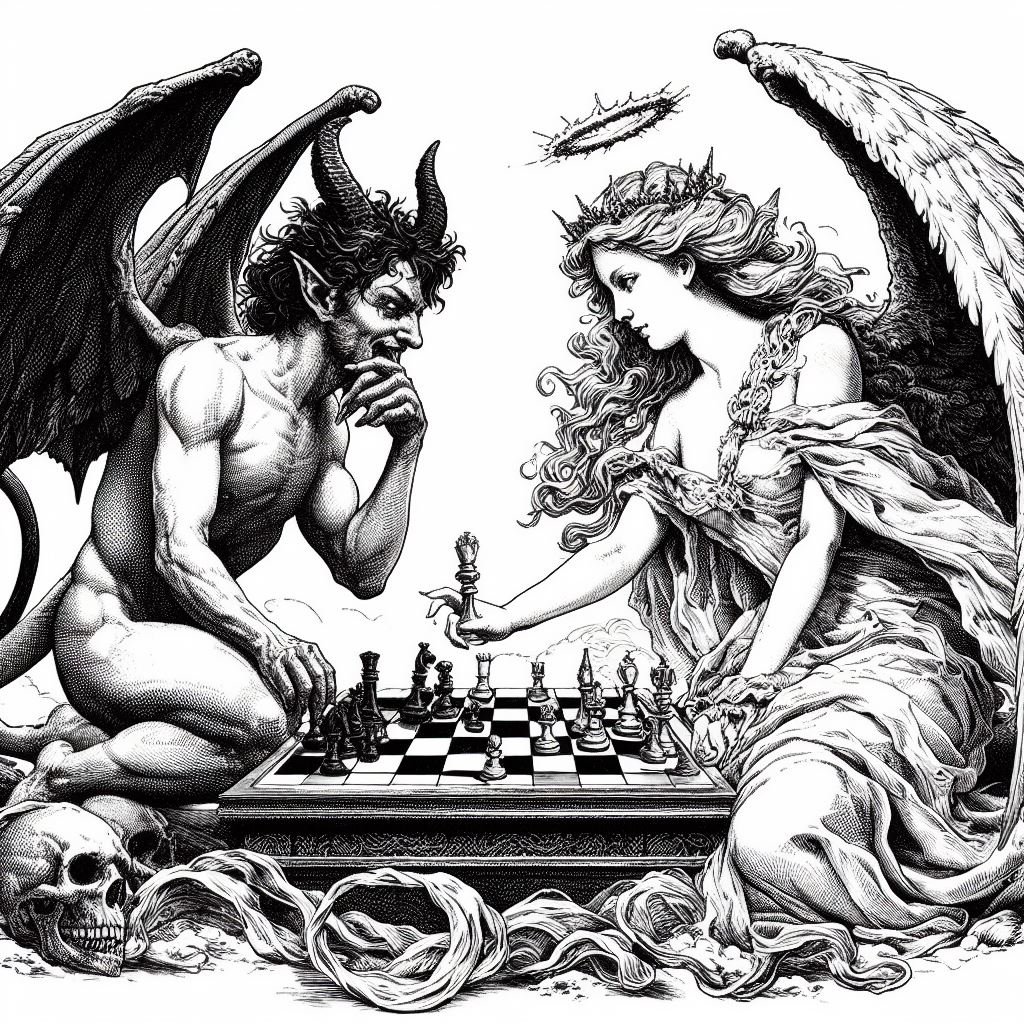}
    \end{minipage}%
    \hfill
    \begin{minipage}{0.495\textwidth}

\ifforc
\footnotesize
\renewcommand{\Large}{\large}
\renewcommand{\medskip}{\smallskip}
\fi

    \centering 
{ \Large\titlefont The Altruistic Angel Rawlsian Game }

\medskip


$\NGroups, \Lv, \Wv$: Group count, sentiment space, weights space 
(as in \cref{fig:rawlsian-game,fig:constrained-rawlsian-game}).




$\mathcal{A}_{\Dae} \subseteq \R^{\NGroups}, \mathcal{A}_{\Ang} = \triangle_{\NGroups}$: Action spaces.

$\Mean(\lv; \wv)$:
Angel 
aggregator function.

$\Payoff(\lv; \wv): \R^{\NGroups} \times \triangle_{\NGroups} \to \R^{2}$: 
Payoff function 
(D\ae mon is self-interested, Angel is altruistic)\todo{leftright langle}%
\vspace{-0.25cm}
\[
\Payoff(\lv, \wv) \doteq \langle \wv \cdot \lv, \Mean(\lv; \wv^{*}) \rangle \enspace. 
\]

Angel action \emph{does not impact} Angel payoff: Any strategy is a ``best response.''

Angel has NFG strategies for which the D\ae mon's best response is to select $\argmaxmin_{\lv \in \Lv} \Mean_{p}(\lv; \wv^{*})$. 

Neither D\ae mon nor Angel has incentive to deviate: This is a Nash equilibrium.

Higher Angel utility is \emph{not possible}, thus this Nash equilibrium is optimal (from Angel's perspective).

    \end{minipage}

    \caption{%
Metaphoric depiction and game-theoretic description of the 
    altruistic Angel original position game.%
    A self-interested D\ae mon (left) 
    is coerced into altruistic play by a social-planner Angel (right).
    }
    \label{fig:enter-label}
\end{figure}

We now show that an altruistic Angel can coerce altruistic play from an egocentric D\ae mon.
Metaphorically, this game is a bit more abstract than those previously discussed, as the D\ae mon still serves as the social planner, but the solution concept we seek optimizes the Angel's aggregator function.

\ifsoft
There is an interesting parallel to representative government here, where the populace (Angel) elects 
leaders (D\ae mon) that perform social planning, but the voting process itself creates incentives for the 
leaders, though we don't see a direct technical connection to our results.
Similarly, we wonder whether altruistic behavior on behalf of corporations, 
such as highly-visible campaigns of corporate ``greenwashing'' or ``rainbow capitalism,'' may arise from similar interactions with consumers.
\fi%

Suppose the Angel has a weighted power-mean aggregator function $\Mean_{p}(\lv; \wv^{*})$.
We construct a payoff function to model the self-centered D\ae mon and altruistic Angel, obtaining%
\begin{equation}
\label{eq:alt-angel-payoff}    
\Payoff(\lv; \wv): \R^{\NGroups} \times \triangle_{\NGroups} \to \R^{2} \doteq \left\langle \wv \cdot \lv, \Mean(\lv; \wv^{*}) \right\rangle \enspace. 
\end{equation}
In this game, the Angel's action space does not represent robustness, but is rather used to influence the actions of the D\ae mon, so we take $\Wv = \triangle_{\NGroups}$.
Until now, we have considered turn-based games, but to analyze Nash equilibria, we must convert the game to normal form.
Here the D\ae mon and Angel act simultaneously, but to preserve the original turn-based game dynamics, the Angel's strategy is conditional on the D\ae mon's action.
In other words, in NFG form, the Angel's strategy space becomes $\Lv \to \triangle_{\NGroups}$, and an Angel strategy $S_{\Ang}(\cdot): \Lv \to \triangle_{\NGroups}$ given any D\ae mon action $\lv \in \Lv$ is to play $S_{\Ang}(\lv)$.

In this game, the Angel action \emph{does not impact} the Angel's payoff, thus any strategy is a ``best response.''
Consequently, the D\ae mon's best response to any Angel strategy is a Nash equilibrium.
The Angel may seem powerless here, but we now show that for a particular choice of Angel strategy, we obtain a Nash equilibrium in which the Angel receives the greatest possible payoff.
\begin{restatable}[Strategic Gameplay in Altruistic Angel Games]{theorem}{thmstrataltangel}
\label{thm:strat-alt-angel}
Suppose the payoff function of \eqref{eq:alt-angel-payoff} for some 
$p > 0$.
If the Angel adopts the strategy $S_{\Ang}(\lv) = \wv_{i} \propto \wv^{*}_{i}\lv_{i}^{p-1}\!$, then the D\ae mon's best response is to select\todo{Explain maxmin?}
\[
\argmaxmin_{\lv \in \Lv} \Payoff_{1}\left(\lv; S_{\Ang}(\lv) \right)
 = \argmaxmin_{\lv \in \Lv} S_{\Ang}(\lv) \cdot \lv
 = \argmaxmin_{\lv \in \Lv} \wv^{*} \cdot \lv^{p}
 = \argmaxmin_{\lv \in \Lv} \Mean_{p}(\lv; \wv^{*})
 \enspace.
\]

Similarly, for $p = 0$, suppose the D\ae mon is limited to utility values at least $\lv_{\min} > 0$, i.e., $\lv \succeq \bm{1} \lv_{\min}$.
If the Angel adopts the strategy $S_{\Ang}(\lv) = \wv_{i} \propto \wv^{*}_{i}\frac{\ln(\lv_{i}/\lv_{\min})}{\lv_{i}/\lv_{\min}}$, then the D\ae mon's best response is to select
\[
\argmax_{\lv \in \Lv} \Payoff_{1}\left(\lv; S_{\Ang}(\lv) \right)
 = \argmax_{\lv \in \Lv} S_{\Ang}(\lv) \cdot \lv
 = \argmax_{\lv \in \Lv} \wv^{*} \cdot \ln \lv
 = \argmax_{\lv \in \Lv} \Mean_{0}(\lv; \wv^{*})
 \enspace.
\]

Finally, for $p < 0$, again suppose the D\ae mon is limited to $\lv \succeq \bm{1} \lv_{\min}$.
If the Angel adopts the strategy $S_{\Ang}(\lv) = \wv_{i} \propto \wv^{*}_{i}
\left(\frac{\lv_{\min}}{\lv_{i}} - (\frac{\lv_{\min}}{\lv_{i}})^{1-p} \right)
$, then the D\ae mon's best response is to select
\[
\argmax_{\lv \in \Lv} \Payoff_{1}\left(\lv; S_{\Ang}(\lv) \right)
 = \argmax_{\lv \in \Lv} S_{\Ang}(\lv) \cdot \lv
 = \argmax_{\lv \in \Lv} 1 - \wv^{*} \cdot \lv^{p}
 = \argmax_{\lv \in \Lv} \Mean_{p}(\lv; \wv^{*})
 \enspace.
\]

Furthermore,
in each case, 
this is a Nash equilibrium, and no strategy profile yields higher Angel payoff (or lower payoff for disutility).
\end{restatable}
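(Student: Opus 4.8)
The plan is to verify the three displayed equality chains one power-regime at a time, and then read off the Nash-equilibrium and Angel-optimality statements as short corollaries. Throughout, write $\wv = S_{\Ang}(\lv)$ for the Angel's prescribed response, and recall that for a \emph{fixed} Angel strategy $S_{\Ang}(\cdot)$ in the normal-form game the D\ae mon solves $\argmaxmin_{\lv \in \Lv} \Payoff_{1}(\lv; S_{\Ang}(\lv)) = \argmaxmin_{\lv \in \Lv} S_{\Ang}(\lv)\cdot\lv$ by the definition of $\Payoff_{1}$ in \eqref{eq:alt-angel-payoff}; this is the first equality in each chain.

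The substance is the second equality, which I would obtain by a termwise simplification of $S_{\Ang}(\lv)\cdot\lv$. For $p>0$, coordinate $i$ contributes $S_{\Ang}(\lv)_{i}\,\lv_{i} \propto \wv^{*}_{i}\lv_{i}^{p-1}\cdot\lv_{i} = \wv^{*}_{i}\lv_{i}^{p}$, so $S_{\Ang}(\lv)\cdot\lv$ is a positive multiple of $\wv^{*}\cdot\lv^{p}$. For $p=0$, the hypothesis $\lv \succeq \bm{1}\lv_{\min}$ makes each $\ln(\lv_{i}/\lv_{\min})\ge 0$, so the induced weights are well-defined and nonnegative, and coordinate $i$ contributes $S_{\Ang}(\lv)_{i}\,\lv_{i}\propto\wv^{*}_{i}\lv_{\min}\ln(\lv_{i}/\lv_{\min})$, whence $S_{\Ang}(\lv)\cdot\lv$ is an increasing affine function of $\wv^{*}\cdot\ln\lv$ (the $\ln\lv_{\min}$ shift cancels since $\wv^{*}\in\triangle_{\NGroups}$). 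For $p<0$, again under $\lv\succeq\bm{1}\lv_{\min}$ — which is exactly what forces $\lv_{\min}/\lv_{i} - (\lv_{\min}/\lv_{i})^{1-p}\ge 0$, since $1-p>1$ — coordinate $i$ simplifies to $S_{\Ang}(\lv)_{i}\,\lv_{i}\propto\wv^{*}_{i}\bigl(\lv_{\min}-\lv_{\min}^{1-p}\lv_{i}^{p}\bigr)$, so $S_{\Ang}(\lv)\cdot\lv$ is a decreasing affine function of $\wv^{*}\cdot\lv^{p}$, equivalently an increasing affine function of $1-\wv^{*}\cdot\lv^{p}$. In every case the positive scalar hidden in ``$\propto$'' does not move the extremizer.

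The third equality then follows from strict monotonicity of the outer transforms: $x\mapsto x^{1/p}$ is strictly increasing for $p>0$ and strictly decreasing for $p<0$ (composed there with the already reversed objective $1-\wv^{*}\cdot\lv^{p}$, so the two sign flips cancel), and $x\mapsto\exp(x)$ is strictly increasing for $p=0$; hence $\argmaxmin$ (specializing to $\argmax$ when $p\le 0$) is preserved, and by the definition of $\Mean_{p}$ in \eqref{eq:pmean} the right-hand side is $\argmaxmin_{\lv\in\Lv}\Mean_{p}(\lv;\wv^{*})$. This establishes that the D\ae mon's best response to $S_{\Ang}$ is any $\lv^{\star}\in\argmaxmin_{\lv\in\Lv}\Mean_{p}(\lv;\wv^{*})$.

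For the closing claims: the same computation shows $\Payoff_{1}(\lv;S_{\Ang}(\lv))$ is a strictly monotone transform of $\Mean_{p}(\lv;\wv^{*})$, so $\lv^{\star}$ is a genuine best response to $S_{\Ang}$ and the D\ae mon has no profitable deviation; and the Angel's payoff component equals $\Mean_{p}(\lv;\wv^{*})$, which is independent of the Angel's own action, so $S_{\Ang}$ is trivially a best response to $\lv^{\star}$, making $(\lv^{\star},S_{\Ang})$ a Nash equilibrium. Finally, in any strategy profile the D\ae mon plays some $\lv\in\Lv$, giving Angel payoff $\Mean_{p}(\lv;\wv^{*})\le\Mean_{p}(\lv^{\star};\wv^{*})$ (and $\ge$ for the malfare/disutility branch, via the stacked-operator convention), so no profile gives the Angel a strictly better outcome. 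I expect the only real obstacle to be bookkeeping: getting the three per-coordinate cancellations right — the $p<0$ response is the fussiest — while keeping explicit track of why $\lv\succeq\bm{1}\lv_{\min}$ is needed for the induced weights to lie in $\triangle_{\NGroups}$, and making sure the ``$\propto$''-normalization together with the sign of $1/p$ interact so that the entire chain of extremizers is honestly preserved (the degenerate point $\lv\equiv\bm{1}\lv_{\min}$, at which the induced weights vanish, should be flagged but is immaterial, as it cannot be the D\ae mon's optimum unless it is the only feasible point).
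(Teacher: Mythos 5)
Your computational skeleton is the same one the paper relies on (the chains of equalities are essentially the proof: per-coordinate cancellation of the Angel's weights against $\lv_i$, then a strictly monotone outer transform to reach $\Mean_p(\cdot;\wv^*)$, then the Nash/optimality remarks, which you handle correctly). The gap is in the sentence ``the positive scalar hidden in $\propto$ does not move the extremizer.'' You elsewhere insist, correctly, that the induced weights should lie in $\triangle_{\NGroups}$ (the Angel's action space is $\mathcal{A}_{\Ang}=\triangle_{\NGroups}$); but then the hidden scalar is the normalizer $1/\sum_j \wv^*_j \lv_j^{p-1}$ (resp.\ its analogues for $p\le 0$), which \emph{depends on $\lv$}, and a $\lv$-dependent positive factor can reorder the objective. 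Concretely, take $p=2$, $\wv^*=(\tfrac12,\tfrac12)$, $\Lv=\{(1,1),(0.1,1.3)\}$: with simplex-normalized weights, $S_{\Ang}(\lv)\cdot\lv$ equals $1$ at $(1,1)$ and $\approx 1.214$ at $(0.1,1.3)$, while $\wv^*\cdot\lv^2$ equals $1$ and $0.85$ respectively, so the two objectives pick \emph{opposite} extremizers (under either $\max$ or $\min$). Hence the second equality in the chain, as you justify it, fails.

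The argument does go through if the constant of proportionality is taken \emph{independent of $\lv$} (say equal to $1$): then $S_{\Ang}(\lv)\cdot\lv=\wv^*\cdot\lv^{p}$ exactly for $p>0$, and for $p=0$ and $p<0$ one gets $\lv$-independent increasing (resp.\ decreasing) affine functions of $\wv^*\cdot\ln\lv$ (resp.\ $\wv^*\cdot\lv^{p}$), exactly as in your per-coordinate computations --- this is the Euler-homogeneity reading, since the prescribed weights are proportional to the gradient of the D\ae mon-facing aggregate. But then the weights are generally not a probability vector, and you must say explicitly how this squares with $\mathcal{A}_{\Ang}=\triangle_{\NGroups}$ (allow subnormalized weights, rescale the payoff, or prove that normalization preserves the argmax under extra conditions). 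This normalization-versus-exactness tension is precisely the ``scaling'' issue the paper itself flags as needing explanation; your write-up papers over it with a claim that is false in the normalized reading, so it needs to be resolved one way or the other rather than asserted. The remaining ingredients --- nonnegativity of the induced weights under $\lv\succeq\bm{1}\lv_{\min}$, the strict monotonicity of $x\mapsto x^{1/p}$, $\exp$, and the sign flip for $p<0$, and the observation that the Angel's payoff depends only on $\lv$ so the profile is a Nash equilibrium maximizing Angel payoff --- are all fine.
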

\todo{Above is proof, combine cases, explain scaling?!}

This result
tells us that power-mean fairness concepts can arise even from a straightforward linear-utility egocentric D\ae mon.
This is
surprising,
as the results of \cref{sec:phil:altruistic-daemon} need imbue the D\ae mon with a payoff function that already closely 
the power-mean in some way, but
\cref{thm:strat-alt-angel} shows
that we can instead modify the Angel's payoff in a 
sequential
Rawlsian game.
While 
no longer a simple zero-sum normal form game, we still obtain a
Nash equilibrium
in which the power-mean arises as the D\ae mon's robust solution concept.

\todo{Corollary for arbitrary $f$-means? Harmonic welfare?}

\todo{Extend to Gini, robustness.}

\todo{Ancient draft commented.}
\ifdraft

\subsection{Games and Rawls' Original Position}
\label{sec:phil:games}

\citet{nozick1974anarchy}

\begin{enumerate}

\item Modification 1: Angel is weaker; they select the \emph{probability distribution} $\wv$ over people (or groups) from some class $\Wv$
\[
\argmaxmin_{\lv \in \Lv} \minmax_{\wv \in \Wv} \Expect_{i \distributed \Wv}[\lv_{i}] = \argmaxmin_{\lv \in \Lv} \minmax_{\wv \in \Wv} \lv \cdot \wv = \argmaxmin_{\lv \in \Lv} \minmax_{\wv \in \Wv} \Mean_{1}(\lv; \wv) \enspace.
\]
\begin{enumerate}
\item $\Wv = \triangle_{\NGroups}$ is equivalent, to Rawls' original position [TODO if $\{\1_{i} | i \in 1, \dots \NGroups \} \subseteq \Wv $]; Angel can select \emph{exactly} who the D\ae mon becomes.
\[
\hspace{-2cm}
\argmaxmin_{\lv \in \Lv} \minmax_{\wv \in \Wv} \Expect_{i \distributed \Wv}[\lv_{i}] = \argmaxmin_{\lv \in \Lv} \minmax_{\wv \in \Wv} \Mean_{1}(\lv; \wv) = \argmaxmin_{\lv \in \Lv} \minmax_{i \in 1, \dots, \NGroups} \Mean_{1}(\lv; \1_{i}) = \argmaxmin_{\lv \in \Lv} \minmax_{i \in 1, \dots, \NGroups} \lv_{i} = \argmaxmin_{\lv \in \Lv} \Mean_{\mp\infty}(\lv) \enspace.
\]
\item A 
weaker Angel selects from a singleton $\Wv= \{ \wv \}$.
D\ae mon now selects $\lv$ to optimize $\Mean_{1}(\lv; \wv)$, i.e., $\wv$-weighted utilitarian welfare
\[
\hspace{-2cm}
\argmaxmin_{\lv \in \Lv} \minmax_{\wv \in \Wv} \Expect_{i \distributed \Wv}[\lv_{i}] = \argmaxmin_{\lv \in \Lv} \minmax_{\wv \in \Wv} \Mean_{1}(\lv; \wv) = \argmaxmin_{\lv \in \Lv} \Mean_{1}(\lv; \wv)
\enspace.
\]
\item Weakest possible Angel selects from a singleton $\Wv= \{ \1_{i} \}$ for some $i$.

\item Weaker Angel allows the D\ae mon to improve expected utility; if 
\end{enumerate}
\item Modification 2: D\ae mon does not \emph{inhabit} the world, they are a \emph{social planner}, and seek to optimize some \emph{welfare or malfare function} $\Mean(\cdot; \wv)$.
\begin{enumerate}
\item If welfare is egalitarian, equivalent to Rawls' original position
\item If welfare is any power-mean [todo], \emph{but} Angel controls $\wv \in \triangle_{\NGroups}$ [todo], equivalent to Rawls' original position
\end{enumerate}

\item D\ae mon is a ``cheater,'' gets to make move \emph{after} Angel

\begin{enumerate}
\item In case 1: D\ae mon becomes dictator, Angel selects $i$ that would be the ``weakest dictator.''
[assuming constant sum game, Angel could optimize other objectives?]

For constant total utility, i.e., $\Mean_{1}(\lv; \wv)$ ??? games, this helps the non-dictator people?

\item In case 2 (after mod 1): if $\Wv$ is a compact (it is) convex set, then by Sion's minimax theorem, 

... 

same after mod 2? Ish? How is $\Mean$ space encoded? I think it works trivially only for 1 p at a time.

\end{enumerate}

\end{enumerate}

After making each of these modifications, the D\ae mon and Angel play their adversarial game to select 
\[
\argmaxmin_{\lv \in \Lv} \minmax_{\Mean \in \M, \wv \in \Wv} \Mean(\lv; \wv) \enspace.
\]

Cherubs and Archangels

Rawls original position: playing a game.
Benthamite: no adversary
Us: Weaker adversary: still random, but we don't know distribution.
More general: not a player, but a planner.

This is robust optimization of malfare objectives.

\subsection{General Analysis}
\label{sec:phil:gen}

TODO Sion minimax, convex closure of $\wv$ for convex/concave $\Mean$ is OK?

\subsection{An Alternative Egocentric Justification for Welfare}
\label{sec:phil:welfare}

Alternative formulation: given some weights vector $\wv \in \triangle_{\NGroups}$, and some $p \in \R$, suppose that the Angel's strategy is fixed as selecting $i$ with probability $\propto \wv_{i} \lv_{i}^{p}$ or $\propto \wv_{i} \ln \lv_{i}$ for $p = 0$.

[todo: can be convex closure of this for all $p' \in [1, p]$ or $p' \in [p, 1]$]

Then for $p \geq 1$ disutility-based games, the D\ae mon minimizes $\Mean_{p}(\lv; \wv)$.

Similarly, for $p \leq 1$ utility-based games, the D\ae mon maximizes $\Mean_{p}(\lv; \wv)$ (NB the argument extends to allow $0$-utilities if the Angel picks arbitarily among these when available with certainty, and if $\max_{\lv \in \Lv} \min_{i \in 1,\dots,\NGroups} \lv_{i} = 0$, any action is equally bad for the D\ae mon).

Modify the payoff function to
\[
\Payoff(\lv; \tilde{\wv}) \doteq \left\langle \Expect_{ i \distributed \tilde{\wv} } [ \lv_{i} ], \Mean_{p}(\lv; \wv) \right\rangle
\]
TODO game is no longer 0-sum, it's in some sense ``cooperative.''

\subsection{Exotic Classes of Aggregator Functions}
\label{sec:phil:exotic}

[todo moveme] A weaker Angel selects weights \emph{bounded from below}; for some $\gamma \in [0, 1]$, we take $\Wv \doteq \{ \wv \in \triangle_{\NGroups} | \frac{\gamma}{g} \}$.

This is UMSWF, i.e.,
\[
\minmax_{\wv \in \Wv} \Mean_{1}(\lv) = \gamma \Mean_{1}(\lv) + (1 - \gamma) \Mean_{\mp\infty}(\lv)
\]

\paragraph{Gini weights}

Angel selects weights $\Wv \doteq \{ \pi(\wv^{\downarrow}) | \pi \in \Pi_{\NGroups} \}$, where $\Pi_{\NGroups}$ is the set of all permutations on $\NGroups$ items.

Then
\[
\minmax_{\wv \in \Wv} \Mean_{1}(\lv) = \Mean_{\wv^{\downarrow}}(\lv) \enspace.
\]

TODO weighted Gini.

\fi

\section{Mathematical Properties of Robust Fair Objectives}
\label{sec:math}

We now argue that the objectives of \cref{sec:phil} also arise naturally as robust proxies for unknown information about the relative weights of groups (Angel actions).
In particular, we show theoretical guarantees for the optimization of said robust proxies.
\Cref{sec:math:util} leads with a utilitarian perspective, and \cref{sec:math:prior} generalizes this analysis to a broader class of prioritarian objectives.

\todo{We state results in this section for welfare only for clarity of presentation, because the sense of inequalities would need to flip for malfare.
Unless otherwise noted, all results hold in reverse for disutility and malfare functions. For reference, ``otherwise'' is never noted.\todo{no?}}

We consider two philosophical perspectives on the nature of uncertainty. First, we assume there exists some ground truth weights $\wv^{*}$, but due to epistemic uncertainty about these weights, we 
only have knowledge of some \emph{feasible set} of weights $\Wv$ in which the true weights $\wv^{*}$ are known to be contained.
In the second setting, we do not assume that there exists a single ground truth set of weights, and instead argue that our guarantees hold for any weights vector $\wv \in \Wv$.
The second model feels rather abstract, but is actually quite useful in machine learning contexts:
A model may be trained and then deployed in multiple regions with varying demographics, or demographics may change over time in a single region.
Our 
robust objectives then yield 
model guarantees that
hold so long as demographic shift does not take group frequencies outside of the feasible weight space $\Wv$.

\subsection{A Utilitarian Perspective}
\label{sec:math:util}

Note that, by nature, if $\wv^{*} \in \Wv$, it holds that
\begin{equation}
\label{eq:robust-util}
\inf_{\wv \in \Wv} \wv \cdot \lv \leq \wv^{*} \cdot \lv \leq \sup_{\wv \in \Wv} \wv \cdot \lv \enspace. 
\end{equation}
\todo{
With a probability distribution:
\[
\delta \Mean_{-\infty}(\lv) + (1 - \delta) \inf_{\wv \in \Wv(\delta)} \wv \cdot \lv \leq \Expect_{\wv \distributed \ProbDist_{\Wv}} \left[ \wv \cdot \lv \right] \cdot \lv \leq \delta \Mean_{\infty}(\lv) + (1 - \delta)\sup_{\wv \in \Wv} \wv \cdot \lv \enspace. 
\]
For any ``true objective'' $\Mean^{*}(\cdot; \wv^{*})$, if ... in ..., then the robust objective lower-bounds ...
\[
???
\]
}
Moreover,
this is in some sense the optimal such lower-bound, which holds over adversarial choice of $\wv^{*}$. 

Consequently, optimizing $\argminmax_{\theta \in \Theta} \maxmin_{\wv \in \Wv} \wv \cdot \lv(\theta)$ is a safe proxy for optimizing $\argminmax_{\theta \in \Theta} \wv^{*} \cdot \lv(\theta)$, and the gap between the robust proxy objective and the true objective value can be bounded as
\begin{equation}
\label{eq:robust-util-bounds}
\abs*{ \minmax_{\theta \in \Theta} \wv^{*} \cdot \lv(\theta) - \minmax_{\theta \in \Theta} \maxmin_{\wv \in \Wv} \wv \cdot \lv(\theta) }
\leq
\abs*{ \minmax_{\theta \in \Theta} \minmax_{\wv \in \Wv} \wv \cdot \lv(\theta) - \minmax_{\theta \in \Theta} \maxmin_{\wv \in \Wv} \wv \cdot \lv(\theta) }
\leq
\Range(\lv) \Diam_{1}(\Wv)
\enspace,
\end{equation}
where $\Range(\lv)$ is the sentiment range, and $ \Diam_{1}(\Wv)$ is the $\mathcal{L}_{1}$ diameter of the feasible weights space.
Thus while the Rawlsian game gives us an elegant theoretical model of robust fair objectives, in a practical sense they are also relevant as objectives for operating fairly under adversarial uncertainty, and the magnitude of uncertainty, as measured by $\Diam_{1}(\Wv)$, characterizes the cost of operating under uncertainty via \eqref{eq:robust-util-bounds}.

\todo{Uncertainty sets, upper/lower bounds? say something?}

\subsection{A Generalized Prioritarian Perspective}
\label{sec:math:prior}

The analysis of \cref{sec:math:util} considers \emph{robustness} in the sense of adversarial uncertainty over the weights $\wv$, but not \emph{fairness}, in the sense of nonlinear aggregator functions that incentivize equitable redistribution of (dis)utility.
We now show nonlinear variants of the above results for objectives that can be decomposed as\todo{connect to new notation}
\[
\min_{\wv \in \Wv} \Mean(\lv; \wv)
\enspace. \todo{or max. inf or sup?}
\]
Note that here $\Wv$ is not necessarily the Angel's action space, but rather it is only the robustness parameters that are not incorporated into the objective function itself (see~\cref{thm:special-cases-robust}).
In terms of usage, we may generally assume that a fair objective $\Mean(\cdot; \wv)$ is known (selected) in advance, and epistemic uncertainty about weights is also known and provided
as $\Wv$. 
We thus have
\todo{keep going here.}
\todo{abrupt equation.}
\begin{equation}
\label{eq:robust-welfare}    
\inf_{\wv \in \Wv} \Mean(\lv; \wv) \leq \Mean(\lv; \wv^{*}) \leq \sup_{\wv \in \Wv} \Mean(\lv; \wv) \enspace.
\end{equation}

\todo{Corresponding result to \eqref{eq:robust-util-bounds}.}

\section{Adversarial Optimization of Robust Fair Objectives}
\label{sec:adv-opt}

This work
centers the motivation for and properties of robust fair objectives, but we now briefly discuss
their optimization.
We then discuss modeling and applications in fair allocation and machine learning.

In this section, we assume an objective of the form\todo{number and ref eq. also connect to \cref{def:gini-pmean}.}
\[
\argmaxmin_{\lv \in \Lv} \minmax_{\wv \in \Wv} \Mean_{p}( \lv; \wv )    \enspace,
\]\todo{Introduce shorthand notation here!}
where either $\Mean(\lv; \wv)$ exhibits concavity in $\lv$ and convexity and $\wv$ for outer maximization, or convexity in $\lv$ and concavity in $\wv$ for outer minimization.
It is well known from convex optimization theory that we can efficiently maximize concave functions or minimize convex functions. Moreover, inner maximization preserves outer concavity and inner minimization preserves outer convexity.
Thus the cards seem to be in our favor, 
and adversarial optimization exhibiting this concave-convex maximization convex-concave minimization structure is generally tractable \citep{nemirovski2004prox,lin2020near}, e.g., via a variety of gradient ascent-descent methods.

\begin{lemma}[Power-Mean Curvature]
\label{lemma:pmc}
Power-mean welfare and malfare functions exhibit the following curvature.
\begin{enumerate}
\item For any $p \geq 1$, $\Mean_{p}(\cdot; \wv)$ is convex, (strictly, but never strongly, for $p > 1$), and $\Mean_{p}(\lv; \cdot)$ is concave (non-strictly).

\item For any $p \leq 1$, $\Mean_{p}(\lv; \cdot)$ is concave, (non-strictly), and $\Mean_{p}(\lv; \cdot)$ is convex (non-strictly).

\todo{sup / inf preserves curvature}
\end{enumerate}
\end{lemma}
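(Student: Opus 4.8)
The plan is to analyze the two arguments of $\Mean_p(\lv;\wv)$ separately, since each carries clean structure. Write $\lv^{\circ p}$ for the entrywise $p$-th power, so that for finite $p\neq0$ we have $\Mean_p(\lv;\wv)=(\wv\cdot\lv^{\circ p})^{1/p}$, while $\Mean_0(\lv;\wv)=\exp(\wv\cdot\ln\lv)$ and $\Mean_{\pm\infty}(\lv;\wv)=\maxmin_i\lv_i$. For the curvature in $\wv$ (the easy direction): with $\lv$ fixed, the map $\wv\mapsto\wv\cdot\lv^{\circ p}$ (and $\wv\mapsto\wv\cdot\ln\lv$ for $p=0$) is linear, and $\Mean_p(\lv;\cdot)$ is this linear functional post-composed with $t\mapsto t^{1/p}$ (with $\exp$ for $p=0$). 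Since composing a convex (resp.\ concave) scalar function with an affine map preserves convexity (resp.\ concavity), it suffices to observe that $t\mapsto t^{1/p}$ is concave on $\RNN$ when $p\in[1,\infty)$ and convex on $\RNN$ (resp.\ $\RPos$) when $p\in(0,1]$ (resp.\ $p<0$), that $\exp$ is convex (case $p=0$), and that $\Mean_{\pm\infty}$ does not depend on $\wv$; this gives concavity in $\wv$ for all $p\ge1$ and convexity for all $p\le1$. Non-strictness follows because the inner linear functional is constant along any simplex-tangent direction orthogonal to $\lv^{\circ p}$ (such a direction exists whenever $\NGroups\ge2$, and $\Mean_p(\lv;\cdot)$ is outright constant when $\lv^{\circ p}$ is parallel to $\bm1$), so $\Mean_p(\lv;\cdot)$ is affine along that direction.

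For the curvature in $\lv$ (the harder direction) I would use two facts: $\lv\mapsto\Mean_p(\lv;\wv)$ is positively homogeneous of degree $1$ (immediate from the closed forms, including $p\in\{0,\pm\infty\}$), and it is subadditive for $p\ge1$ and superadditive for $p\le1$ on $\RNN^{\NGroups}$ (on $\RPos^{\NGroups}$ for $p\le0$). Subadditivity for $1\le p<\infty$ is Minkowski's inequality on the nonnegative orthant, and for $p=\infty$ it is $\max_i(\lv_i+\lv'_i)\le\max_i\lv_i+\max_i\lv'_i$; superadditivity is a reverse form of Minkowski's inequality for $0<p<1$ and for $p<0$, the superadditivity of the weighted geometric mean for $p=0$, and $\min_i(\lv_i+\lv'_i)\ge\min_i\lv_i+\min_i\lv'_i$ for $p=-\infty$. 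A positively $1$-homogeneous function that is subadditive is convex (apply subadditivity to $\lambda\lv$ and $(1-\lambda)\lv'$ and rescale by homogeneity), and likewise superadditive plus $1$-homogeneous implies concave; this yields both curvature claims in $\lv$.

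For strictness and the optimization corollary: because $\Mean_p(\cdot;\wv)$ is $1$-homogeneous it is linear along every ray from the origin, hence never strongly convex, and $\Mean_1$ (linear) and $\Mean_{\pm\infty}$ (piecewise linear) are not strictly convex/concave; for $1<p<\infty$ the strict form of Minkowski's inequality (equality only for proportional vectors) upgrades convexity to strict convexity along every segment $[\lv,\lv']$ whose endpoints are not positive multiples of one another, which is the strongest possible for a $1$-homogeneous map, and the strict reverse inequalities give the analogous statement for finite $p<1$. Finally, since a pointwise supremum of convex functions is convex and a pointwise infimum of concave functions is concave, the inner $\maxmin_{\wv\in\Wv}\Mean_p(\lv;\wv)$ inherits the $\lv$-curvature of $\Mean_p$, which is the sup/inf remark noted in the statement and is exactly what makes the min-max/max-min problems of \cref{sec:adv-opt} tractable. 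The one genuinely nontrivial ingredient is the reverse Minkowski / geometric-mean superadditivity inequality for $p\le1$, which I would cite as classical (e.g.\ Hardy--Littlewood--P\'olya) rather than reprove; the remaining steps are bookkeeping, the only real care being the precise phrasing of ``strictly,'' since $1$-homogeneity rules out the usual global notion of strict convexity.
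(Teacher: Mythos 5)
Your proposal is correct, and it follows what is essentially the standard (and, absent an explicit proof in the paper, the intended) route: curvature in $\wv$ by viewing $\Mean_p(\lv;\cdot)$ as a scalar convex/concave power (or $\exp$) composed with the linear map $\wv \mapsto \wv\cdot\lv^{\circ p}$, and curvature in $\lv$ from positive $1$-homogeneity together with Minkowski's inequality for $p\ge 1$ and its reverse (plus geometric-mean superadditivity at $p=0$ and the min/max cases at $p=\pm\infty$) for $p\le 1$; your handling of ``strictly, but never strongly'' via strict Minkowski off common rays is also the right reading of the paper's parenthetical, and you correctly note the typo whereby item~2 should assert concavity in $\lv$ and convexity in $\wv$. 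One small slip: your non-strictness argument in $\wv$ needs a simplex-tangent direction orthogonal to $\lv^{\circ p}$, which exists for $\NGroups\ge 3$ (or when $\lv^{\circ p}\parallel\bm{1}$) but not for generic $\lv$ with $\NGroups=2$, where e.g.\ $w\mapsto\sqrt{w\lv_1^2+(1-w)\lv_2^2}$ is in fact strictly concave; this is harmless because the parenthetical ``(non-strictly)'' is best read as declining to claim strictness rather than asserting its failure, but you should not claim the affine direction exists for all $\NGroups\ge 2$. Also take the usual care on the domain for $p\le 0$ (work on $\RPos^{\NGroups}$ and extend to the boundary of $\RNN^{\NGroups}$ by continuity), which you already flag implicitly.
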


In other words, the power-mean $\Mean_{p}(\lv; \wv)$ exhibits opposite curvature in $\lv$ and $\wv$.
\todo{
NB strict and/or strong convexity and/or concavity may arise if $\lv$ is bounded away from $0$ and/or bounded above, and/or if $\wv$ is bounded away from $0$. 
\\
TODO cite advanced adversarial optimization methods,
discuss smoothness problems.
}%
We will first consider a few trivial cases, where \cref{lemma:pmc} suffices, as it is easy to convert the space of feasible allocation to the space of feasible utility values.
In such settings, it is straightforward to apply standard maximin-optimization algorithms.

In general, it is not always so easy to convert between the parameter space space of feasible allocations $\Theta$ and the space of feasible utilities.
In these more general settings, we need to consider the optimization problem directly as a function of the parameters space $\Theta$.
We thus require another technical lemma.

\begin{lemma}[Power-Mean Composition Curvature]
\label{lemma:pmc-comp}
Suppose some
per-group (dis)utility
function $\lv: \Theta \to \RNN^{\NGroups}$.
Compositions $\Mean_{p}(\lv(\theta); \wv): (\Theta \times \triangle_{\NGroups}) \to \RNN$ of power-means
with $\lv$ exhibit the following curvature.
\begin{enumerate}
\item For any $p \geq 1$, if $\lv: \Theta \to \R^{\NGroups}$ is convex, then $\Mean_{p}(\lv(\theta); \wv): (\Theta \times \triangle_{\NGroups}) \to \RNN$ is convex in $\theta$ and concave in $\wv$.

\item For any $p \leq 1$, if $\lv: \Theta \to \R^{\NGroups}$ is concave, then $\Mean_{p}(\lv(\theta); \wv): (\Theta \times \triangle_{\NGroups}) \to \RNN$ is concave in $\theta$ and convex in $\wv$.
\end{enumerate}

\todo{Robust ops}
\end{lemma}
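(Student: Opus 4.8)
The plan is to reduce the composition statement to the curvature facts already established in Lemma~\ref{lemma:pmc} together with a standard composition rule for convex/concave functions. First I would recall the relevant composition rule: if $h: \R^{\NGroups} \to \R$ is convex and, in each argument, \emph{nondecreasing}, and each component $g_i: \Theta \to \R$ is convex, then $h(g_1(\theta),\dots,g_{\NGroups}(\theta))$ is convex in $\theta$; dually, if $h$ is concave and nondecreasing in each argument and each $g_i$ is concave, then the composition is concave. The role of the monotonicity hypothesis is crucial, so the first substantive step is to observe that $\Mean_{p}(\cdot;\wv)$ is nondecreasing in each coordinate $\lv_i$ on $\RNN^{\NGroups}$ — this is immediate from the defining formula, since $\partial \Mean_p / \partial \lv_i = \wv_i (\lv_i/\Mean_p)^{p-1} \geq 0$ for $p \neq 0$, and similarly in the $p=0$ and $p=\pm\infty$ cases; it is also why we restrict $\lv$ to the nonnegative orthant.

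Next I would combine monotonicity with Lemma~\ref{lemma:pmc}. For item~1, fix $p \geq 1$ and $\wv \in \triangle_{\NGroups}$: Lemma~\ref{lemma:pmc} gives that $\Mean_{p}(\cdot;\wv)$ is convex, and we have just shown it is coordinatewise nondecreasing, so for convex $\lv(\cdot)$ the composition $\theta \mapsto \Mean_p(\lv(\theta);\wv)$ is convex in $\theta$ by the composition rule. For the $\wv$-argument, $\lv(\theta)$ is fixed, and Lemma~\ref{lemma:pmc} states $\Mean_p(\lv;\cdot)$ is concave; hence $\Mean_p(\lv(\theta);\wv)$ is concave in $\wv$. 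Item~2 is the mirror image: for $p \leq 1$, $\Mean_p(\cdot;\wv)$ is concave and nondecreasing, so concavity of $\lv(\cdot)$ yields concavity of the composition in $\theta$, while $\Mean_p(\lv;\cdot)$ is convex in $\wv$. One should also check the codomain claim: since $\lv(\theta) \in \RNN^{\NGroups}$ and the weighted power-mean of nonnegative reals is nonnegative, the composition indeed maps into $\RNN$.

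The main obstacle — really the only delicate point — is justifying the composition rule at the boundary and in the limiting cases $p = 0$ and $p = \pm\infty$, where $\Mean_p$ is defined by a limit and its partial derivatives may blow up (e.g.\ when some $\lv_i = 0$). I would handle $p = \pm\infty$ directly: $\Mean_{\infty}(\lv;\wv) = \max_i \lv_i$ is a pointwise maximum of the nondecreasing convex (indeed linear) coordinate maps $\lv \mapsto \lv_i$, so composed with convex $\lv(\cdot)$ it is a max of convex functions, hence convex; and it is independent of (hence trivially concave in) $\wv$ — similarly $\Mean_{-\infty}$ is a min, handled dually. For $p = 0$ and for interior $p$, rather than differentiating I would invoke the monotone-composition rule in its general form (valid for extended-real-valued convex functions, with the convention that a convex nondecreasing $h$ is extended to $+\infty$ outside its domain), which does not require differentiability; alternatively one can approximate $\lv(\theta)$ by $\lv(\theta) + \epsilon\bm{1}$ to stay in the interior, apply the smooth argument, and pass to the limit $\epsilon \downarrow 0$ using that pointwise limits of convex (resp.\ concave) functions are convex (resp.\ concave). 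Either route closes the argument without new computation.
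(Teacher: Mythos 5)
Your proposal is correct and follows the same route the paper intends: combine the curvature of $\Mean_{p}$ from Lemma~\ref{lemma:pmc} with its coordinatewise monotonicity on $\RNN^{\NGroups}$ and the standard rule that a nondecreasing convex (resp.\ concave) function composed with componentwise convex (resp.\ concave) maps is convex (resp.\ concave), with the $\wv$-curvature following immediately since $\theta$ is held fixed. Your extra care at the boundary and at $p \in \{0, \pm\infty\}$ is welcome but not strictly necessary, since the hypothesis $\lv:\Theta\to\RNN^{\NGroups}$ keeps both $\lv(\lambda\theta_{1}+(1-\lambda)\theta_{2})$ and $\lambda\lv(\theta_{1})+(1-\lambda)\lv(\theta_{2})$ inside the orthant where monotonicity and Lemma~\ref{lemma:pmc} apply directly.
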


\subsection{Simple Applications in Fair Allocation Problems}

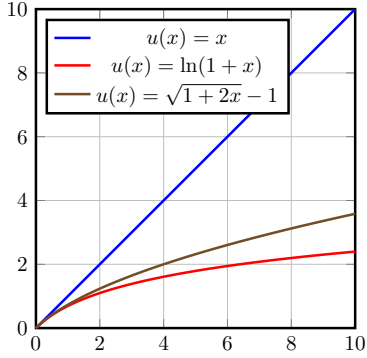
\begin{figure}
    \centering

\begin{minipage}{0.349\textwidth}

\centering

\scalebox{0.8}{%
\begin{tikzpicture}

\ifforc
\newcommand{\figsize}{1.01\textwidth}\hspace{-0.2cm}
\else
\newcommand{\figsize}{0.88\textwidth}
\fi
\begin{axis}[
    scale only axis=true,
    width=\figsize,
    height=\figsize,
    domain=0:10,
    xmin=0,xmax=10,
    ymin=0,ymax=10,
    samples=64,
    smooth,very thick,
    no markers,
    grid,
    legend pos=north west,
  ]

\plot {x};
\addlegendentry{$u(x) = x$}

\plot {ln(1 + x)};
\addlegendentry{$u(x) = \ln(1 + x)$}

\plot {sqrt(1 + 2 * x) - 1};
\addlegendentry{$u(x) = \sqrt{1 + 2x} - 1$}

\end{axis}

\end{tikzpicture}%
}

\end{minipage}%
\hfill%
\begin{minipage}{0.649\textwidth}

\caption{Plots of various nonlinear utility transform functions referenced in the text, as a compared to linear utility $u(x) = x$. The logarithmic transform $u(x) = \ln(1 + x)$ and square root transform $u(x) = \sqrt{1 + 2x} - 1$ on utility values are shown.
Note that both are smooth strictly-increasing strictly concave utility transforms that are tangent to the linear utility $u(x) = x$ at $x = 0$, thus they obey $0 \leq u(x) \leq x$, $\lim_{x \to \infty} u(x) = \infty$, $\lim_{x \to \infty} \frac{u(x)}{x} = 0$, and $\lim_{x \to 0^{+}} \frac{u(x)}{x} = 1$, i.e., they lie strictly below linear utility, and behave asymptotically as 
sublinear but superconstant (unbounded).%
\todo{Also do the power-mean transforms in a subfigure? Range 0--2?}%
\todo{Legend on side?}
}
\label{fig:concave-utility}

\end{minipage}
\end{figure}

We assume here that
$\NGroups$ agents are being allocated
$k$ divisible goods.
Each good $i$ has
\emph{capacity} $\bm{c}_{i}$, thus allocations are \emph{matrices}\todo{bm theta?} $\theta \in \RNN^{\NGroups \times k}$,
where
each column (good allocation) $i$
sum is bounded by
$\bm{c}_{i}$.
We let $\Theta$ denote the set of \emph{feasible allocations}, and $\lv(\theta) \in \RNN^{\NGroups}$ is the \emph{utility vector} given some feasible allocation $\theta \in \Theta$.
We thus have
\[
\theta \ = \, \begin{array}{r}
\rotatebox{45}{\text{\small\!\! Good $1$\!}} \quad \quad \ \ \rotatebox{45}{\text{\small\!\! Good $k$\!} \ \ \  } \!\!\!\!\! \null \\
\begin{blockarray}{c(ccc)}
\text{\small Agent 1} & \theta_{1,1} & \cdots & \theta_{1,k} \\
 & \vdots & \ddots & \vdots \\
\text{\small Agent $\NGroups$} & \theta_{\NGroups,1} & \cdots & \theta_{\NGroups,k} \\
\end{blockarray} \enspace.
\end{array}
\]
\todo{Transposition is:
\[
\theta \ = \, \begin{array}{r}
\rotatebox{45}{\text{\small\!\! Agent 1}} \quad \quad \ \rotatebox{45}{\text{\small\!\! Agent $\NGroups$} \ \ \, } \!\!\!\!\!\!\! \null \\
\begin{blockarray}{c(ccc)}
\text{\small Good 1} & \theta_{1,1} & \cdots & \theta_{1,\NGroups} \\
 & \vdots & \ddots & \vdots \\
\text{\small Good $k$} & \theta_{k,1} & \cdots & \theta_{k,\NGroups} \\
\end{blockarray} \enspace.
\end{array}
\]
}%
We now study the optimization problems that result from applying robust fair objectives to such tasks.

\begin{example}[Single Good with Linear Utility]
Suppose we have $k=1$ goods with capacity $c$, and linear utility $\bm{p}_{i}$ for agent $i$ in their share, i.e., $\lv_{i}(\theta) = \theta_{i,1} \bm{p}_{i}
$.
Then
\begin{align*}
\max_{\theta \in \Theta} \min_{\wv  \in \Wv} \Mean(\lv(\theta); \wv)
    &= \max_{\substack{\theta \in \RNN^{\NGroups} : \\ \sum_{i=1}^{\NGroups} \theta_{i,1} \leq c }} \min_{\wv \in \Wv} \Mean( i \mapsto \bm{p}_{i} \theta_{i,1}; \wv) & \\
    &= \max_{\substack{\lv \in \RNN^{\NGroups} : \\ \sum_{i=1}^{\NGroups} \frac{1}{\bm{p}_{i}} \theta_{i,1} \leq c }} \min_{\wv \in \Wv} \Mean(\lv; \wv) \enspace.
\end{align*}
We may approximately solve the convex-concave optimization problem of the RHS via standard maximin programming techniques, and
given a maximal $\lv$ in the RHS objective, we can trivially identify some $\theta \in \Theta$ that gives rise to it in the LHS via 
linear programming, 
or in closed form by 
inversion to get $\theta_{i,1} = \frac{\lv_{i}(\theta)}{\bm{p}_{i}}$.
\end{example}

\begin{example}[Single Good with Nonlinear Utility]
Suppose \emph{nonlinear utility} $\lv_{i}(\theta) = (\sqrt{1 + 2\theta_{i,1}} - 1) \bm{p}_{i}$ (see~\cref{fig:concave-utility}).
With this very special choice, we can similarly pose
\begin{align*}
\max_{\theta \in \Theta} \min_{\wv \in \Wv} \Mean(\lv(\theta); \wv)
    &= \max_{\substack{\theta \in \RNN^{\NGroups} : \\ \sum_{i=1}^{\NGroups} \theta_{i,1} \leq c }} \min_{\wv \in \Wv} \Mean\left( i \mapsto (\sqrt{1 + 2\theta_{i,1}} - 1) \bm{p}_{i}; \wv\right) & \\
    &= \max_{\substack{\lv \in \RNN^{\NGroups} : \\ \frac{1}{2}\sum_{i=1}^{\NGroups} (\frac{\lv_{i}}{\bm{p}_{i}} + 1)^2 - 1  \leq c }} \min_{\wv \in \Wv} \Mean(\lv; \wv) \enspace.
\end{align*}
Again we may approximately solve the convex-concave optimization problem of the RHS via standard maximin programming techniques.
The feasible set of utility values in the RHS is the nonnegative 
portion of an axis-aligned ellipsoid, and converting some optimal $\lv$ to the $\theta$ that gives rise to it is trivial via 
convex quadratic programming, 
or via inversion through the quadratic formula to get $\theta_{i,1} = \frac{1}{2}(\frac{\lv_{i}}{\bm{p}_{i}} + 1)^{2} - \frac{1}{2} = \frac{\lv_{i}}{\bm{p}_{i}} + \frac{\lv_{i}^{2}}{2\bm{p}_{i}^{2}}$.
\end{example}
\begin{example}[
Multiple Goods with Linear Utility]
Now suppose $k$ goods with \emph{linear utility}
$\lv_{i}(\theta) = \theta_{i} \cdot \bm{P}_{i} = \sum_{j=1}^{k} \theta_{i,j} \bm{P}_{i,j}$. Suppose also arbitrary linear equality and inequality constraints on $\Theta$, with $\Theta \neq \emptyset$.
Via similar techniques, we can convert the space of feasible $\theta \in \Theta$ to some $\lv \in \Lv$, where both $\Theta$ and $\Lv$ are polytopes.
We can then optimize over $\lv \in \Lv$, and finally select some $\theta \in \Theta$ that gives rise to the optimal $\lv$ via 
linear programming. 
\todo{display math}
\end{example}

\todo{New subsection around here?}

In each of these
examples,
we optimize a robust fair objective over utility values, and then
invert
to obtain an allocation $\theta \in \Theta$ that produces utility values that optimize the robust objective.
%
%
%
In many applications,
the feasible space of (dis)utility vectors $\Lv$ is not directly known.
Instead, we now assume that
we have
some \emph{parameter} $\theta \in \Theta$,
which yields a utility vector $\lv(\theta)$.
Robust fair objectives of $\lv(\theta)$ can then be optimized.
This is directly relevant to many fair ML applications, but we note now that fair allocation of divisible goods (
or chores) with nonlinear utility can also be handled in this way, so long as utilities are \emph{concave} in $\theta$ and disutilities are \emph{convex} in $\theta$.

For example, suppose store (agent) $i$ will sell up to $\bm{C}_{i,j}$ units of items $j$ for $\bm{P}_{i,j} \$$/unit of profit, and the \emph{utility} derived by the store is given by the logarithmic utility transform $\ln(1 + \theta_{i} \cdot \bm{P}_{i})$ (see~\cref{fig:concave-utility}). 
Then, for $\NGroups$ stores and $k$ items, the utility vector of the allocation $\theta \in \RNN^{\NGroups \times k}$ can be expressed as
\[
\lv_{i}(\theta)  = \ln\left(1 + \vphantom{\frac{1}{2}}\smash{\sum_{i=1}^{k}} \bm{P}_{i,j}\min(\bm{C}_{i,j}, \theta_{i,j}) \right) \enspace.
\]

The final task is then to compute
\[
\argmax_{\theta \in \Theta} \min_{\wv \in \Wv} \Mean\bigl( \lv(\theta) ; \wv \bigr) \enspace,
\]
which by \cref{lemma:pmc} is
tractable for power-mean welfare objectives. 


This idea immediately extends to ML settings, where the (empirical) loss or utility derived by each agent or group is also a complicated function of some parameter $\theta \in \Theta$.
\ifextensions
The optimization of such objectives is straightforward via standard first-order methods, e.g., subgradient descent for Lipschitz continuous malfare functions as proposed by \citet{cousins2021axiomatic}.
\Cref{sec:stat-fml} shows that robust power-mean malfare functions are indeed Lipschitz continuous, and 
then derives generalization bounds for fair learning with such robust objectives.
\fi

\todo{optimization problems draft.}

\ifdraft

\paragraph{Power Mean Optimization}

todo {add this LATER}

Simple optimization theorem:
single $p$

Discrete set of $p$, each with projectable $\wv$?

TODO: coros for Gini?

Vanilla subgradient descent theorem:

\begin{algorithm}
\caption{Projected subgradient for adversarial power-mean optimization.}
\label{alg:proj-sub}
\begin{algorithmic}[1]
\Procedure{ProjectedSubgradient}{$p$, $\Wv$, $\Theta$, $\theta^{(0)}$, $\lv(\cdot)$, 
    $\bm{\eta}$, $\varepsilon$}

\State \Input Power-mean power $p$, Angel weights space$\Wv$, model parameter space $\Theta$, initial parameter $\theta^{(0)}$, sentiment function $\lv(\cdot)$, 
    learning rate $\bm{\eta}$, error tolerance $\varepsilon$

\State \Output abc


\State $\lv(\theta) \gets \left\langle i \mapsto ??? \right\rangle$ TODO construct

\For{$t \in 1, 2, \dots$}

\State $\displaystyle \wv^{(t-1)} \gets \argmaxmin_{\wv \in \Wv} \Mean_{p} \left(i \mapsto \lv_{i}(\theta^{(t-1)}); \wv \right)$ \Comment{Compute Angel weights}\label{alg:proj-sub:angel}

\State $\displaystyle g^{(t-1)} \gets \grad_{\theta^{(t-1)}} \Mean_{p} \left(i \mapsto \lv_{i}(\theta^{(t-1)}); \wv \right)$ \Comment{Compute an adversarial objective subgradient}\label{alg:proj-sub:subgrad}

\State $\displaystyle \theta^{(t)} \gets \argmin_{\theta \in \Theta} \norm{ \theta - ( \theta^{(t-1)} \mp \bm{\eta}_{t-1} g^{(t-1)} ) }_{2}$ \Comment{$\mathcal{L}_{2}$ projected gradient update}\label{alg:proj-sub:update}

\State END LOOP / check cond

\If{$... \leq \varepsilon$} \Comment{Check termination condition}\label{alg:proj-sub:term}

\State \Return $\theta^{(t)}$

\EndIf

\EndFor


\EndProcedure

\Procedure{ProjectedSubgradientLipschitz}{???}

\State $\bm{\eta} \gets ???$ \Comment{Constant learning rate schedule}

\State \Return \textsc{ProjectedSubgradient}($p$, $\Wv$, $\Theta$, $\theta^{(0)}$, $\lv(\cdot)$, 
    $\bm{\eta}$, $\varepsilon$)

\EndProcedure

\Procedure{ProjectedSubgradientGeneral}{???}

TODO

\EndProcedure

\end{algorithmic}
\end{algorithm}

\begin{restatable}[Subgradient Descent]{theorem}{thmsubgrad}
\label{thm:subgrad}

Suppose ...
Then
\[
\abs{\Mean_{p}(\lv; \wv)} \leq \frac{.}{.} \leq [sub Lipschitz constant]
\]

Moreover,
Todo ITERATIONS

\end{restatable}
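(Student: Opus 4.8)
The plan is to read the quantity produced by \cref{alg:proj-sub} as the value function $F(\theta) \doteq \minmax_{\wv \in \Wv} \Mean_{p}(\lv(\theta); \wv)$ and to establish convergence in three stages: (i) certify that the update direction computed by the algorithm is a genuine subgradient of $F$; (ii) bound its norm to extract a global Lipschitz constant $L$; and (iii) substitute $L$ and the domain diameter into the textbook projected-subgradient rate. For stage (i), recall from \cref{lemma:pmc-comp} that under the stated curvature of $\lv(\cdot)$ the map $\theta \mapsto \Mean_{p}(\lv(\theta); \wv)$ is convex (malfare, $p \geq 1$) or concave (welfare, $p \leq 1$) for each fixed $\wv$, while $\Wv$ is compact and convex. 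Hence $F$, a pointwise max of convex functions (resp.\ min of concave functions), inherits convexity (resp.\ concavity), and by Danskin's theorem (the envelope theorem for parametric optimization) the vector $\grad_{\theta} \Mean_{p}(\lv(\theta); \wv^{\star})$ evaluated at any inner optimizer $\wv^{\star}$ — exactly the quantity computed in the best-response and subgradient steps of \cref{alg:proj-sub} — lies in $\partial F(\theta)$ (resp.\ the superdifferential).

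For the Lipschitz bound I would differentiate the power-mean in closed form: writing $S \doteq \sum_{i} \wv_{i} \lv_{i}^{p}$, one has $\grad_{\lv} \Mean_{p}(\lv; \wv) = \Mean_{p}^{1-p}(\lv; \wv)\,\bigl( \wv_{i} \lv_{i}^{p-1} \bigr)_{i=1}^{\NGroups}$, with the analogue $\grad_{\lv}\Mean_{0} = \Mean_{0}\,(\wv_{i}/\lv_{i})_{i}$ for the geometric mean. Assuming sentiments are confined to $[\lv_{\min}, \lv_{\max}]^{\NGroups}$ with $\lv_{\min} > 0$, internality of the power-mean gives $\lv_{\min} \leq \Mean_{p} \leq \lv_{\max}$, so using $\sum_{i} \wv_{i} = 1$ and splitting on the sign of $1-p$ yields the clean estimate $\norm{\grad_{\lv} \Mean_{p}(\lv; \wv)}_{1} \leq \kappa^{\abs{1-p}}$, where $\kappa \doteq \lv_{\max}/\lv_{\min}$ is the condition number of the sentiment range (the $p=0$ case matches this with exponent $1$); this is the middle term that belongs in the displayed chain, which should be read as a bound on the subgradient norm rather than on $\abs{\Mean_{p}}$ itself. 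Composing through $\lv(\cdot)$ by the chain rule, if each coordinate $\lv_{i}(\cdot)$ is $L_{\lv}$-Lipschitz in $\mathcal{L}_{2}$, then $\norm{\grad_{\theta}\Mean_{p}(\lv(\theta); \wv)}_{2} \leq L_{\lv}\,\kappa^{\abs{1-p}} \eqqcolon L$, the advertised Lipschitz constant.

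With $F$ convex (concave) and $L$-Lipschitz on the convex feasible set $\Theta$ of $\mathcal{L}_{2}$-diameter $D \doteq \Diam_{2}(\Theta)$, stage (iii) is the standard analysis of projected subgradient descent: with the constant schedule $\bm{\eta}_{t} = \frac{D}{L\sqrt{T}}$, the best iterate satisfies $\abs{F(\theta^{(t)}) - F^{\star}} \leq \frac{LD}{\sqrt{T}}$, so $T \geq \frac{L^{2} D^{2}}{\varepsilon^{2}} = \LandauO\!\bigl( L_{\lv}^{2}\,\kappa^{2\abs{1-p}} D^{2} / \varepsilon^{2} \bigr)$ iterations suffice for an $\varepsilon$-optimal parameter, filling in the missing iteration count and supplying the constant-rate schedule used by \textsc{ProjectedSubgradientLipschitz}.

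The main obstacle is the blow-up of $\grad_{\lv}\Mean_{p}$ as any $\lv_{i} \to 0^{+}$ when $p < 1$: the factor $\lv_{i}^{p-1}$ diverges, so without the hypothesis $\lv_{\min} > 0$ the objective is merely continuous, not Lipschitz, and the quantitative rate collapses to asymptotic convergence only. I would therefore foreground $\lv_{\min} > 0$ as a standing assumption, and, where the allocation or learning model cannot guarantee it a priori, observe that a tiny additive floor on utilities restores it at negligible distortion. A secondary subtlety is that the envelope step returns only \emph{a} subgradient when the inner optimizer $\wv^{\star}$ is non-unique; this is harmless, since the descent analysis requires only membership in $\partial F(\theta)$, which any returned optimizer provides.
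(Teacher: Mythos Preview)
The theorem you are proving is an unfinished placeholder in the paper: its hypothesis is literally ``Suppose \ldots'', the displayed chain contains ``$\frac{.}{.}$'' and ``[sub Lipschitz constant]'', the conclusion reads ``Todo ITERATIONS'', and the entire block sits inside an \texttt{\textbackslash ifdraft \ldots\ \textbackslash fi} guard with \texttt{\textbackslash draftfalse} set, so it is never compiled into the document. There is accordingly no proof in the paper against which to compare your attempt.

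Your reconstruction is nonetheless sound and almost certainly matches what the authors had sketched: Danskin's theorem to certify that the inner-optimizer gradient computed in \cref{alg:proj-sub} lies in $\partial F(\theta)$, a closed-form bound on $\norm{\grad_{\lv}\Mean_{p}}_{1}$, composition through $\lv(\cdot)$ via the chain rule, and the textbook $LD/\sqrt{T}$ projected-subgradient rate. One tightening worth importing from elsewhere in the paper: for the malfare branch $p \geq 1$, \cref{coro:pmean-holder} (item~1) records that $\Mean_{p}$ is $1$-Lipschitz in $\norm{\cdot}_{\infty}$ with no positive floor on $\lv$ required --- indeed $\Mean_{p}^{1-p}\sum_{i}\wv_{i}\lv_{i}^{p-1} = (\Mean_{p-1}/\Mean_{p})^{p-1} \leq 1$ by the power-mean inequality --- so your $\kappa^{\abs{1-p}}$ constant is valid but loose there, and the hypothesis $\lv_{\min} > 0$ is only genuinely needed for welfare ($p < 1$), exactly as you diagnose in your final paragraph.
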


TODO: exploit curvature in w of pmeans to get adversary!

TODO Write Gini theorem / algo?

How to take adversarial Gini? Just like pmeans, linear optimization. It's actually a special case for the right $\Wv$ and $p=1$.

TODO connect to \citep{dong2022decentering}, why is this easier?

\subsection{Bilevel Optimization Problems for Constraint-Based Fairness}

TODO

\citep{hu2020fair,heidari2018fairness,speicher2018unified}

As well as Seldonian-style guarantees.

We want:
\[
\argminmax_{ \substack{ \theta \in \Theta : \null \\ \forall \wv \in \Wv : \Mean_{p} ( \lv(\theta); \wv ) } } F \left( \lv(\theta), \theta \right)
 =
\argminmax_{ \substack{ \theta \in \Theta : \null \\ \Mean_{p} ( \lv(\theta); \Wv ) } } F \left( \lv(\theta), \theta \right)
\]

Not quite, I think it's
\[
\argmin_{ \substack{ \theta \in \Theta : \null \\ \pm \Mean_{p} ( \lv(\theta); \Wv ) \leq \pm \gamma } } F \left( \lv(\theta), \theta \right)
\]

\citep{stackelberg1934marktform}

\todo{Gini / umswf lin prog strat for inner max? \# of variables? Cite cousins24? + procaccia}

\fi

\ifextensions

\section{Statistical Generalization Bounds for Robust Fair Objectives}
\label{sec:stat-fml}

We now show that
our robust objectives preserve the Lipschitz or H\"older continuity of their underlying non-robust counterparts.
In particular, robust power-mean malfare functions are Lipschitz continuous, and robust power-mean welfare functions are almost always Lipschitz or H\"older continuous. 
We then translate these results into generalization bounds for fair learning with such robust objectives.

It is known 
\citep{cousins2022uncertainty}
that for any monotonic aggregator function $\Mean(\cdot)$, if the gap between true and empirical risk or utility values $\lv_{i}$ and $\hat{\lv}_{i}$ of each group $i$ is 
no more than $\epsv_{i}$, 
i.e., if we have $\abs{ \lv_{i} - \hat{\lv}_{i} } \leq \epsv_{i}$, then
we may bound the generalization error as
\begin{equation}
\label{eq:genbound}
\Mean(\hat{\lv} - \epsv) \leq \Mean(\lv) \leq \Mean(\hat{\lv} - \epsv)
\enspace.
\end{equation}
Bounding the estimation error $\epsv_{i}$ for each group is a nontrivial matter, but standard techniques in statistical learning theory suffice.
\citet{cousins2023revisiting} uses Rademacher averages and other statistical methods to bound the supremum deviation over each group, thus deriving values for $\epsv$, and \citet{cousins2024pool} show that sharper generalization bounds can be achieved by explicitly considering the effect of fair training on generalization error, and in particular that from the perspective of each group i, the fair model is effectively learned over a class that is biased towards strong performance on the remaining groups.

From \eqref{eq:genbound}, Lipschitz or H\"older continuity properties of the aggregator function
yield
loose but algebraically convenient bounds, as well as bounds on the sample complexity of PAC-learning such objectives.
In particular, a function $\Mean(\cdot)$ is $\lambda$-$\alpha$-$\norm{\cdot}_{\Mean}$ H\"older continuous \wrt\ some norm $\norm{\cdot}_{\Mean}$ if for all $\lv, \lv'$ in its domain, it holds
\begin{equation}
\label{eq:hcont}
\abs{\Mean(\lv) - \Mean(\lv')} \leq \lambda \norm{\lv - \lv'}_{\Mean}^{\alpha} \enspace.
\end{equation}
Moreover, if $\alpha = 1$, then $\Mean(\cdot)$ is $\lambda$-$\norm{\cdot}_{\Mean}$ Lipschitz continuous.
In concert with \cref{eq:genbound}, under H\"older continuity, we may thus bound generalization error as\todo{talk about uniform convergence?}
\begin{equation}
\label{eq:hgenbound}
\abs{ \Mean(\hat{\lv} ) - \Mean(\lv) } \leq \lambda \norm{\epsv}_{\Mean}^{\alpha}
\enspace,
\end{equation}
and plugging in a specific expression for per-group generalization error $\epsv$ allows us to solve for sample complexity (sufficient sample size) bounds.
Using these results, we need only show that fair robust objectives exhibit similar Lipschitz and H\"older continuity properties to their non-robust counterparts.

\subsection{Lipschitz and H\"older Continuity of Robust Fair Objectives}
\label{sec:stat-fml:cont}


To streamline the analysis of our robust fair objectives, we introduce the notation
\[
\Mean(\lv; \Wv) \doteq \supinf_{\wv \in \Wv} \Mean(\lv; \wv)
\enspace.
\]
We now show that these objectives have similar continuity properties to their non-robust counterparts, in line with lemmata 3.12 and 3.13 of \citet{cousins2023revisiting}.

\begin{restatable}[H\"older Continuity of Robust Fair Objectives]{lemma}{lemmarobustholder}
\label{lemma:robust-holder}
Suppose an $\lambda$-$\alpha$-$\norm{\cdot}_{\Mean}$ H\"older continuous weighted aggregator function $\Mean(\lv; \wv)$ over feasible weights space $\Wv \subseteq \triangle_{\NGroups}$.
Then the robust aggregator function $\Mean(\lv; \Wv) = \maxmin_{\wv \in \Wv} \Mean(\lv; \wv)$ is $\lambda$-$\alpha$-$\norm{\cdot}_{\Mean}$ H\"older continuous,
 i.e., for all $\lv,\lv'$, it holds
\[
\abs{\Mean(\lv; \Wv) - \Mean(\lv'; \Wv)} \leq \lambda \norm{\lv - \lv'}_{\Mean}^{\alpha}
\enspace.
\]
%
\end{restatable}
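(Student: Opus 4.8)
The plan is to show that taking a pointwise supremum (or infimum) over the weights parameter $\wv \in \Wv$ preserves Hölder continuity in $\lv$, with the \emph{same} constants $\lambda$ and $\alpha$. The key observation is that Hölder continuity is a ``uniform over $\wv$'' property: by hypothesis, for every fixed $\wv \in \Wv$ and every pair $\lv, \lv'$ we have $\abs{\Mean(\lv; \wv) - \Mean(\lv'; \wv)} \leq \lambda \norm{\lv - \lv'}_{\Mean}^{\alpha}$, and the bound on the right-hand side does not depend on $\wv$. So the envelope inherits the bound.

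Concretely, I would treat the welfare (supremum) case; the malfare (infimum) case is symmetric, or follows by negation. Fix $\lv, \lv'$ and write $\delta \doteq \lambda \norm{\lv - \lv'}_{\Mean}^{\alpha}$. For every $\wv \in \Wv$ the hypothesis gives $\Mean(\lv; \wv) \leq \Mean(\lv'; \wv) + \delta \leq \sup_{\wv' \in \Wv} \Mean(\lv'; \wv') + \delta = \Mean(\lv'; \Wv) + \delta$. Taking the supremum over $\wv \in \Wv$ on the left yields $\Mean(\lv; \Wv) \leq \Mean(\lv'; \Wv) + \delta$. Swapping the roles of $\lv$ and $\lv'$ gives the reverse inequality, and combining the two produces $\abs{\Mean(\lv; \Wv) - \Mean(\lv'; \Wv)} \leq \delta = \lambda \norm{\lv - \lv'}_{\Mean}^{\alpha}$, which is exactly the claim. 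When $\alpha = 1$ this specializes to Lipschitz continuity with the same constant.

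I do not anticipate a real obstacle here — the argument is the standard fact that suprema (and infima) of a family of functions that are uniformly Hölder with common modulus are again Hölder with that modulus, and no compactness, convexity, or closedness of $\Wv$ is needed (so finiteness of the envelope is the only implicit assumption, which holds since $\Wv \subseteq \triangle_{\NGroups}$ and $\Mean(\lv; \cdot)$ is continuous on a bounded set). The one point worth stating carefully is the direction of the chained inequalities so that the ``same $\delta$ on both sides'' bookkeeping is transparent; this is what makes the constants transfer verbatim rather than degrading. If one wanted the alternate framing noted in the statement (``$\sup_{\wv} \abs{\Mean(\lv;\wv) - \Mean(\lv';\wv)} \leq \delta$ implies the envelope bound''), it is literally the same two lines, since $\abs{\Mean(\lv;\wv) - \Mean(\lv';\wv)} \leq \delta$ for all $\wv$ is exactly the input we used above.
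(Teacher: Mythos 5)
Your proposal is correct and follows what is essentially the canonical argument for this lemma: the envelope (sup or inf over $\wv \in \Wv$) of a family of functions that are uniformly $\lambda$-$\alpha$-$\norm{\cdot}_{\Mean}$ H\"older in $\lv$ inherits the same modulus, via the two-sided chaining $\Mean(\lv; \wv) \leq \Mean(\lv'; \wv) + \lambda\norm{\lv - \lv'}_{\Mean}^{\alpha}$ followed by taking the supremum (respectively infimum) over $\wv$ and symmetrizing. This matches the paper's treatment, including handling the welfare and malfare branches symmetrically, so no gap to report.
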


\begin{restatable}[H\"older Continuity of Robust Power-Means]{corollary}{coropmeanholder}
\label{coro:pmean-holder}
Suppose a robust power-mean operator $\Mean_{p}(\lv; \Wv)$,
sentiment
range $\frange$,
and 
arbitrary $\lv,\lv' \in [0, \frange]^{\NGroups}$.
Let \linebreak[3]$\wv_{\min} \doteq \inf_{\wv \in \Wv} \min_{i \in 1,\dots,\NGroups} \wv_{i}$, and $\wv_{\max} \doteq \sup_{\wv \in \Wv} \max_{i \in 1,\dots,\NGroups} \wv_{i}$.
Then $\Mean_{p}(\lv; \Wv)$ exhibits the following Lipschitz and H\"older continuity properties.
\begin{enumerate}
\item
For all $p \geq 1$: $\Mean_{p}(\lv; \Wv)$ is $1$ Lipschitz continuous \wrt\ itself, i.e.,
\[
\abs{ \Mean_{p}(\lv; \Wv) - \Mean_{p}(\lv'; \Wv) } \leq \Mean_{p}( \abs{\lv - \lv'}; \Wv) \leq
\norm{\lv - \lv'}_{\infty}
\enspace.
\]
Thus 
$\Mean_{1}(\lv; \Wv)$ is 
$\wv_{\max}$-$\norm{\cdot}_{1}$ Lipschitz continuous, and for $p = \infty$, 
it is $1$-$\norm{\cdot}_{\infty}$ Lispchitz continuous, and both of these constants are optimal.\todo{skip infty?}

\item For all $p < 0$, if $\wv_{\min} > 0$, then 
$\Mean_{p}(\lv; \Wv)$ is
$\frac{1}{\sqrt[\abs{p}]{\wv_{\min}}}$%
-$\norm{\cdot}_{\infty}$ Lipschitz continuous.

\item For all $p \in (0, 1)$,
$\Mean_{p}(\lv; \Wv)$ is $\frange^{1 - p}\frac{1}{p}$-$p$-$\norm{\cdot}_{\infty}$ H\"older continuous.\todo{Other norm cases.}

\item For all $p \leq 1$, 
if $\wv_{\min} > 0$, then
$\Mean_{p}(\lv; \Wv)$ is $\frange^{1 - \wv_{\min}}$-$\wv_{\min}$-$\norm{\cdot}_{\infty}$ H\"older continuous.

\end{enumerate}
\end{restatable}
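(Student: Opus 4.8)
The plan is to reduce each of the four claims to the corresponding Lipschitz or H\"older bound for the \emph{non-robust} weighted power-mean $\Mean_{p}(\cdot;\wv)$, established uniformly over the feasible weight space, and then invoke \cref{lemma:robust-holder} to lift it to $\Mean_{p}(\cdot;\Wv)$. Throughout, note that $\wv_{\min}=\inf_{\wv\in\Wv}\min_{i}\wv_{i}$ and $\wv_{\max}=\sup_{\wv\in\Wv}\max_{i}\wv_{i}$, so any per-$\wv$ constant written through these two quantities is automatically uniform in $\wv\in\Wv$; the per-$\wv$ statements are in the spirit of lemmata~3.12 and~3.13 of \citet{cousins2023revisiting}. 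I would establish the four cases as follows.

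For $p\geq 1$ (part~1): on $\RNN^{\NGroups}$ the map $\lv\mapsto\Mean_{p}(\lv;\wv)=\sqrt[p]{\sum_{i}\wv_{i}\abs{\lv_{i}}^{p}}$ is a weighted $\ell_{p}$ (semi)norm, so Minkowski's inequality gives $\abs{\Mean_{p}(\lv;\wv)-\Mean_{p}(\lv';\wv)}\leq\Mean_{p}(\abs{\lv-\lv'};\wv)$, and monotonicity of $p\mapsto\Mean_{p}$ together with $\Mean_{\infty}(\bm{\delta})=\norm{\bm{\delta}}_{\infty}$ yields $\Mean_{p}(\abs{\lv-\lv'};\wv)\leq\norm{\lv-\lv'}_{\infty}$; \cref{lemma:robust-holder} then gives the ``Lipschitz w.r.t.\ itself'' bound, and $\Mean_{p}(\abs{\lv-\lv'};\Wv)\le\wv_{\max}\norm{\lv-\lv'}_{1}$ for $p=1$ (from $\wv\cdot\bm{\delta}\le(\max_{i}\wv_{i})\norm{\bm{\delta}}_{1}$) and $=\norm{\lv-\lv'}_{\infty}$ for $p=\infty$. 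Optimality in the two special cases follows by testing $\lv=c\bm{1}_{i^{\ast}}$ (with $i^{\ast}$ a coordinate at which some $\wv\in\Wv$ approaches $\wv_{\max}$) and $\lv\propto\bm{1}$ against $\lv'=\bm{0}$. For $p<0$ with $\wv_{\min}>0$ (part~2), I would bound $\norm{\grad_{\lv}\Mean_{p}(\lv;\wv)}_{1}$ on $(0,\infty)^{\NGroups}$ (extending to the boundary, where $\Mean_{p}=0$, by continuity). A direct computation gives $\partial_{j}\Mean_{p}(\lv;\wv)=S^{1/p-1}\wv_{j}\lv_{j}^{p-1}$ with $S=\sum_{i}\wv_{i}\lv_{i}^{p}$; after the substitution $x_{i}=\lv_{i}^{p}$ and $q=1-\tfrac{1}{p}>1$ this collapses to $\sum_{j}\partial_{j}\Mean_{p}(\lv;\wv)=\frac{\sum_{j}\wv_{j}x_{j}^{q}}{(\sum_{i}\wv_{i}x_{i})^{q}}=\frac{\Expect_{j\sim\wv}[x_{j}^{q}]}{(\Expect_{\wv}[x])^{q}}$. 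By homogeneity we may fix $\Expect_{\wv}[x]=1$, and since $x\mapsto x^{q}$ is convex the numerator is maximized at a vertex, $x_{j^{\ast}}=1/\wv_{j^{\ast}}$ and the rest zero, giving value $\wv_{j^{\ast}}^{1-q}=\wv_{j^{\ast}}^{1/p}$, which is largest at $\wv_{j^{\ast}}=\wv_{\min}$, i.e.\ $\wv_{\min}^{1/p}=1/\sqrt[\abs{p}]{\wv_{\min}}$; hence $\Mean_{p}(\cdot;\wv)$ is $\tfrac{1}{\sqrt[\abs{p}]{\wv_{\min}}}$-$\norm{\cdot}_{\infty}$ Lipschitz.

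For $p\in(0,1)$ (part~3), subadditivity of $t\mapsto t^{p}$ on $\RNN$ makes $\lv\mapsto\Mean_{p}(\lv;\wv)^{p}=\sum_{i}\wv_{i}\lv_{i}^{p}$ a $1$-$p$-$\norm{\cdot}_{\infty}$ H\"older function; composing with $u\mapsto u^{1/p}$, whose derivative $\tfrac{1}{p}u^{1/p-1}$ is increasing on $[0,\frange^{p}]$ and bounded there by $\tfrac{1}{p}\frange^{1-p}$ (so the map is $\tfrac{\frange^{1-p}}{p}$-Lipschitz on that interval, using $\Mean_{p}\le\frange$), gives the claimed $\frange^{1-p}\tfrac{1}{p}$-$p$-$\norm{\cdot}_{\infty}$ H\"older constant. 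For part~4 ($p\leq1$, $\wv_{\min}>0$), I would argue by monotonicity: writing $\delta=\norm{\lv-\lv'}_{\infty}$ we have $\lv'\succeq(\lv-\delta\bm{1})\vee\bm{0}$ componentwise, so (WLOG $\Mean_{p}(\lv)\geq\Mean_{p}(\lv')$) $\abs{\Mean_{p}(\lv;\wv)-\Mean_{p}(\lv';\wv)}\leq\Mean_{p}(\lv;\wv)-\Mean_{p}((\lv-\delta\bm{1})\vee\bm{0};\wv)$, and it remains to bound this last ``drop'' by $\frange^{1-\wv_{\min}}\delta^{\wv_{\min}}$: by concavity and monotonicity of $\Mean_{p}$ the extremal configuration has one coordinate (the one carrying weight $\wv_{\min}$) at value $\delta$ and all others at $\frange$, where the drop is largest for the geometric-mean case $p=0$ and evaluates to exactly $\frange^{1-\wv_{\min}}\delta^{\wv_{\min}}$. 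In every case the per-$\wv$ constant is uniform over $\wv\in\Wv$, and \cref{lemma:robust-holder} transfers it to $\Mean_{p}(\cdot;\Wv)$. The main obstacle is part~4: unlike the clean gradient bound of part~2 or the subadditivity--composition bound of part~3, it requires verifying that the proposed worst-case configuration is genuinely extremal for \emph{all} $p\leq1$ (not just $p=0$) and that the constant $\frange^{1-\wv_{\min}}$ is attained, which needs a careful concavity/monotonicity argument.
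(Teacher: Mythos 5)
Your overall route is the paper's route: establish the Lipschitz/H\"older constants for the non-robust $\Mean_{p}(\cdot;\wv)$ \emph{uniformly} over $\wv\in\Wv$ (which is exactly what expressing them through $\wv_{\min}$ and $\wv_{\max}$ accomplishes), then lift them to $\Mean_{p}(\cdot;\Wv)$ via \cref{lemma:robust-holder}; the paper simply imports the per-$\wv$ constants from lemmata~3.12--3.13 of \citet{cousins2023revisiting}, whereas you re-derive them. Your derivations for parts~1--3 are correct and complete: Minkowski plus $\Mean_{p}(\bm{\delta};\wv)\leq\norm{\bm{\delta}}_{\infty}$ for $p\geq1$, the dual-norm gradient bound $\sum_{j}\partial_{j}\Mean_{p}=\Expect_{j\sim\wv}[x_{j}^{q}]/(\Expect_{\wv}[x])^{q}\leq\wv_{\min}^{1/p}$ for $p<0$ (the vertex maximization over the compact set $\{x\succeq\bm{0}:\Expect_{\wv}[x]=1\}$ is legitimate since the objective is convex there), and the subadditivity-plus-composition argument for $p\in(0,1)$ reproducing $\frange^{1-p}\tfrac{1}{p}$.

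The genuine gap is part~4, and it is exactly where you flag it. Two steps are asserted rather than proved. First, the claim that the worst-case configuration for the drop $\Mean_{p}(\lv;\wv)-\Mean_{p}((\lv-\delta\bm{1})\vee\bm{0};\wv)$ is ``one coordinate at $\delta$ (carrying weight $\wv_{\min}$), all others at $\frange$'' does not follow from ``concavity and monotonicity'': the drop is a \emph{difference} of two concave functions of $\lv$ (and the clipped argument destroys joint concavity anyway), so it is not maximized at extreme points by any standard argument, and locating its maximizer requires a genuine calculation. Second, the claim that at this configuration ``the drop is largest for $p=0$'' is only immediate for $p\leq 0$, where $\Mean_{p}(\lv;\wv)\leq\Mean_{0}(\lv;\wv)$ and the subtracted term vanishes (one coordinate of the clipped vector is $0$); for $p\in(0,1]$ the power-mean inequality runs the other way ($\Mean_{0}\leq\Mean_{p}$) and both terms of the difference grow with $p$, so the comparison needs a separate monotonicity-in-$p$ or direct-estimate argument. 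As written, part~4 is therefore a restatement of the target inequality $\abs{\Mean_{p}(\lv;\wv)-\Mean_{p}(\lv';\wv)}\leq\frange^{1-\wv_{\min}}\norm{\lv-\lv'}_{\infty}^{\wv_{\min}}$ together with a plausible but unverified extremality heuristic; this is precisely the content the paper outsources to lemma~3.13 of \citet{cousins2023revisiting}, so either cite that result (as the paper does) or supply the missing per-$\wv$ proof before invoking \cref{lemma:robust-holder}.
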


\todo{talk about gini pmean class? what is wmin/wmax then?}

\todo{gini coro?
\begin{restatable}[Lipschitz Continuity of Gini Aggregator Functions]{corollary}{corolipgini}
\label{coro:lip-gini}
???
\end{restatable}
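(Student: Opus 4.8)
The plan is to fill the placeholder corollary with the statement that every Gini welfare or malfare function, together with its robust variant $\Mean(\lv;\Wv) = \maxmin_{\wv \in \Wv}\wv\cdot\lv^{\downarrow}$, is $1$-$\norm{\cdot}_{\infty}$ Lipschitz continuous and $\wv_{\max}$-$\norm{\cdot}_{1}$ Lipschitz continuous, where $\wv_{\max} \doteq \sup_{\wv\in\Wv}\max_{i}\wv_{i}$. These constants exactly match the $p=1$ (utilitarian) and $p=\infty$ (egalitarian) endpoints of \cref{coro:pmean-holder}, which is reassuring since both extreme aggregators are themselves Gini members; the only novelty is that the \emph{entire} Gini class inherits the egalitarian $\norm{\cdot}_{\infty}$ constant. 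First I would reduce the Gini functional to a family of linear forms, and then lift the per-form bound to the $\maxmin$ using the robust H\"older lemma.

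For the reduction, I would invoke the rearrangement inequality to write the Gini malfare as $\Mean_{\wv^{\downarrow}}(\lv) = \wv^{\downarrow}\cdot\lv^{\downarrow} = \max_{\pi\in\Pi_{\NGroups}}\pi(\wv^{\downarrow})\cdot\lv$ and the Gini welfare as $\Mean_{\wv^{\uparrow}}(\lv) = \min_{\pi\in\Pi_{\NGroups}}\pi(\wv^{\uparrow})\cdot\lv$ — precisely the permutation representation already used in \cref{thm:special-cases-classical}~\cref{thm:special-cases-classical:ggswf}. Each term $\wv\cdot\lv$ is linear in $\lv$, so H\"older's inequality gives, for conjugate exponents $1/p + 1/q = 1$, the bound $\abs{\wv\cdot\lv - \wv\cdot\lv'} \leq \norm{\wv}_{q}\norm{\lv-\lv'}_{p}$. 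Taking $(p,q) = (\infty,1)$ and using $\norm{\wv}_{1} = 1$ for $\wv\in\triangle_{\NGroups}$ yields the constant $1$ against $\norm{\cdot}_{\infty}$, while $(p,q) = (1,\infty)$ yields the constant $\norm{\wv}_{\infty} \leq \wv_{\max}$ against $\norm{\cdot}_{1}$. Hence each linear form is Lipschitz with the claimed constants, uniformly over the relevant weight vectors.

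The lifting step observes that the (robust) Gini function is precisely a $\maxmin$ of such uniformly Lipschitz linear forms: for the basic Gini the index set is the permutation orbit $\{\pi(\wv^{\downarrow})\mid\pi\in\Pi_{\NGroups}\}$, and for the robust Gini of \cref{thm:special-cases-robust}~\cref{thm:special-cases-robust:ggswf} it is the larger set $\{\pi(\wv')\mid \wv'\in(\wv^{\downarrow}+\mathcal{R})\cap\triangle_{\NGroups},\,\pi\in\Pi_{\NGroups}\}$. Since every member of this set lies in $\triangle_{\NGroups}$ (so has unit $\ell_{1}$ norm) and has $\ell_{\infty}$ norm at most $\wv_{\max}$, the per-form Lipschitz constants hold uniformly over the whole index set. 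I would then apply \cref{lemma:robust-holder} with $\alpha = 1$, treating the linear form $\Mean_{1}(\lv;\wv) = \wv\cdot\lv$ as the underlying weighted aggregator, to conclude that taking $\maxmin$ over this set preserves the constants, so the robust Gini inherits exactly $1$ and $\wv_{\max}$.

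The main obstacle is bookkeeping rather than analysis: I must confirm that the outer $\maxmin$ and the implicit inner permutation-optimization can be absorbed into a single feasible weight set to which \cref{lemma:robust-holder} applies, and that perturbing within $\mathcal{R}$ does not inflate $\wv_{\max}$ beyond $\sup_{\wv\in\Wv}\max_{i}\wv_{i}$ (handled by defining $\wv_{\max}$ as that supremum, mirroring \cref{coro:pmean-holder}). A secondary point is tightness, which I would verify directly: a uniform coordinate shift attains the $\norm{\cdot}_{\infty}$ constant $1$, and shifting only the coordinate carrying weight $\wv_{\max}$ attains the $\norm{\cdot}_{1}$ constant, so both are optimal. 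Should the permutation representation prove awkward, I would keep as a fallback the route of bounding the sort map $\lv\mapsto\lv^{\downarrow}$ as nonexpansive in every $\ell_{p}$ norm and applying H\"older directly to $\wv^{\downarrow}\cdot(\lv^{\downarrow}-{\lv'}^{\downarrow})$.
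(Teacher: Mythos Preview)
The paper does not actually contain a statement or proof for this corollary: the block you were given sits inside a suppressed \texttt{\textbackslash todo\{\ldots\}} macro, and its body is literally ``???''. It is an author's note-to-self that a Gini Lipschitz corollary \emph{might} be worth adding, not a result the paper states or proves. Consequently there is no paper proof to compare your proposal against.

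That said, your proposed statement and argument are sensible and would be a natural way to fill the placeholder. The reduction to a $\maxmin$ over linear forms via the rearrangement inequality (already used in \cref{thm:special-cases-classical}~\cref{thm:special-cases-classical:ggswf}), followed by H\"older's inequality on each linear form and then \cref{lemma:robust-holder} to lift to the envelope, is exactly the pattern the paper uses elsewhere, and your constants ($1$ against $\norm{\cdot}_{\infty}$, $\wv_{\max}$ against $\norm{\cdot}_{1}$) line up with the $p=1$ case of \cref{coro:pmean-holder}. The tightness examples are correct. The only caveat is that since the paper never committed to a statement, you are proposing content rather than reproducing it; any comparison of ``approaches'' is moot.
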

}

\todo{canibalize logarithmic mean?}
\todo{averaging over Angel space, or taking any weighted average or pmean malfare, also preserves H\"older?}

\todo{Lipschitz optimization theorem? Ref axiomatic paper?}

\todo{Explain why we care.}

\begin{restatable}[A Codex of Sample Complexity]{theorem}{thmcodexsc}
\label{thm:codex-sc}

Suppose that $\Mean(\lv)$ is $\lambda$-$\alpha$-$\norm{\cdot}_{\Mean}$ H\"older continuous, and for any \emph{sample size} $m \geq m_{0}$ and \emph{failure probability} $\delta \in (0, 1)$, the empirical and true per-group risk values $\hat{\lv}$ and $\lv$ of the empirical malfare minimizer
on $m$ samples obey
$\Prob \left( \abs{ \hat{\lv_{i}} - \lv_{i} } > \sqrt{\frac{\bm{v}_{i}\ln \frac{t}{\delta}}{m}} \right) < \delta$\todo{explain v, t here}.
Then the \emph{sample complexity} of empirical malfare minimization
(or welfare maximization)
obeys
\[
m^{*}(\varepsilon, \delta) \leq \max\left( m_{0}, \ceil*{\left(\frac{\lambda}{\varepsilon}\right)^{2/\alpha} \norm{ i \mapsto  \sqrt{\bm{v}_{i}} }_{\Mean}^{2} \ln \frac{t\NGroups}{\delta}} \right)
\enspace,
\]
i.e., for any sample size $m \geq m^{*}(\varepsilon, \delta)$, it holds that
\[
\Prob\left( \Mean(\lv) \leq \Mean(\lv^{*}) + \varepsilon \right) \geq 1 - \delta
\enspace \text{for malfare}
\enspace,
\quad
\text{or} \enspace
\Prob\left( \Mean(\lv) \geq \Mean(\lv^{*}) - \varepsilon \right) \geq 1 - \delta
\enspace \text{for welfare}
\enspace.
\]
\todo{W/M order?}
\todo{prob over training data and thus $\hat{h}$ and empirical risk}
where
$\lv$ is the true risk (utility) vector of the learned model, and $\Mean(\lv^{*})$ is the infimum of true malfare (welfare) over the hypothesis class.
\todo{Generalize to mixed rates?}
\cyrus
{Proof sketch:
Union bound into
\[
\varepsilon
= \lambda \norm*{ i \mapsto  \sqrt{\frac{\bm{v}_{i}\ln \frac{t\NGroups}{\delta}}{m}} }_{\Mean}^{\alpha}
= \lambda \norm{ i \mapsto  \sqrt{\bm{v}_{i}} }_{\Mean}^{\alpha} \left( \frac{\ln \frac{t\NGroups}{\delta}}{m} \right)^{\alpha/2}
\implies m = \left(\frac{\lambda}{\varepsilon}\right)^{2/\alpha} \norm{ i \mapsto  \sqrt{\bm{v}_{i}} }_{\Mean}^{2} \ln \frac{t\NGroups}{\delta}
\]
samples suffice.
}
\end{restatable}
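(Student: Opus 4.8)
The plan is to assemble three ingredients that are already available: a union bound over the $\NGroups$ groups that promotes the per-group concentration hypothesis into a simultaneous statement; the H\"older-continuity sandwich \eqref{eq:hgenbound}, which converts per-group estimation error into aggregator error; and elementary algebra to invert the resulting bound for the sample size $m$. Since \cref{lemma:robust-holder} and \cref{coro:pmean-holder} guarantee that a robust aggregator $\Mean(\cdot;\Wv)$ inherits the H\"older data $(\lambda,\alpha,\norm{\cdot}_{\Mean})$ of its non-robust counterpart, the argument applies to robust fair objectives with no extra work, which is why the theorem is stated at the level of a generic H\"older-continuous $\Mean$.

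First I would apply the hypothesis with $\delta/\NGroups$ in place of $\delta$ for each group $i \in \{1,\dots,\NGroups\}$ and union bound: whenever $m \geq m_{0}$, with probability at least $1-\delta$ we have $\abs{\hat{\lv}_{i} - \lv_{i}} \leq \epsv_{i}$ for all $i$ simultaneously, where $\epsv_{i} \doteq \sqrt{\bm{v}_{i}\ln(t\NGroups/\delta)/m}$. Conditioning on this event, monotonicity of $\Mean$ together with \eqref{eq:genbound} and then \eqref{eq:hgenbound} gives $\abs{\Mean(\hat{\lv}) - \Mean(\lv)} \leq \lambda\norm{\epsv}_{\Mean}^{\alpha}$. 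Chaining this with optimality of the empirical malfare minimizer and a standard ERM argument at a near-optimal model (one whose true malfare is arbitrarily close to $\Mean(\lv^{*})$) yields $\Mean(\lv) \leq \Mean(\lv^{*}) + \lambda\norm{\epsv}_{\Mean}^{\alpha}$, where matching the stated constant uses the group-wise refinement of \citet{cousins2024pool} to absorb the factor $2$ produced by the naive two-sided comparison; the welfare case is identical with the inequalities reversed.

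It then remains to pick $m$ making the error at most $\varepsilon$. Factoring $\norm{\epsv}_{\Mean}^{\alpha} = \norm{i \mapsto \sqrt{\bm{v}_{i}}}_{\Mean}^{\alpha}\bigl(\ln(t\NGroups/\delta)/m\bigr)^{\alpha/2}$ and setting $\lambda\norm{\epsv}_{\Mean}^{\alpha} = \varepsilon$ gives $m = (\lambda/\varepsilon)^{2/\alpha}\norm{i \mapsto \sqrt{\bm{v}_{i}}}_{\Mean}^{2}\ln(t\NGroups/\delta)$; taking the ceiling makes $m$ integral and the bound an overestimate, and the maximum with $m_{0}$ keeps the concentration hypothesis in force, so this is exactly $m^{*}(\varepsilon,\delta)$. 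Because the error term is decreasing in $m$, the conclusion holds for every $m \geq m^{*}(\varepsilon,\delta)$.

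The main obstacle is the empirical-risk-minimization step — bounding the true malfare of the learned model by $\Mean(\lv^{*})$ rather than merely by its own empirical malfare $\Mean(\hat{\lv})$. The straightforward argument compares empirical malfares at the empirical minimizer and at the true optimum and pays a factor of $2$ (equivalently $2^{2/\alpha}$ inside $m^{*}$); obtaining the clean constant in the statement requires reading the concentration hypothesis as holding uniformly over the hypothesis class and invoking the biased-subclass perspective of \citet{cousins2024pool}. Everything else — the union bound, the H\"older sandwich, the inversion for $m$, and the handling of the $m_{0}$ floor and the ceiling — is routine.
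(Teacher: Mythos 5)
Your proposal follows the paper's own (very terse) argument exactly: union bound the per-group concentration hypothesis with $\delta/\NGroups$, pass the resulting error vector through the H\"older bound \eqref{eq:hgenbound}, and solve $\lambda \norm{ i \mapsto \sqrt{\bm{v}_{i}\ln\frac{t\NGroups}{\delta}/m} }_{\Mean}^{\alpha} = \varepsilon$ for $m$, with the ceiling and the $m_{0}$ floor handled just as you describe. The ERM-chaining/factor-of-two issue you flag is genuine, but the paper's proof sketch does not address it either---its hypothesis is phrased directly at the empirical malfare minimizer and the sketch jumps straight from the union bound to the inversion for $m$---so on that step you are, if anything, more explicit than the source.
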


\todo{2 factor stuff?}

From \cref{thm:codex-sc}, it immediately follows that if each group's generalization error behaves as $\sqrt{\frac{v\ln \frac{t}{\delta}}{m}}$ (i.e., variance proxy $v$, tail count $t$; such bounds arise frequently via Rademacher averages and other statistical techniques \citep{cousins2022uncertainty,cousins2024pool}), the sample complexity of optimizing any robust power-mean malfare function is bounded by\todo{2v?}
\[
m^{*}(\varepsilon, \delta) \leq
\ceil*{ \frac{v\lambda^{2}\ln \frac{t\NGroups}{\delta}}{\varepsilon^{2}} }
\enspace.
\]

\fi

\section{Conclusion}
\label{sec:conc}

We find that fair robust 
learning and optimization tasks
can be expressed as maximin or minimax optimization problems, and efficiently solved via standard convex optimization methodology.
In continuous optimization settings, as arise in machine learning and fair allocation with divisible goods, standard techniques from the adversarial optimization literature are appropriate.
Furthermore, the inner maximization is often representable in closed form, and is thus amenable to general optimization settings, e.g., in sequential allocation \citep{viswanathan2023general,cousins2023dividing,cousins2023good
} tasks.

\todo{emph pmeans / others arise. util to egal is obvious, others are surprising?}

\todo{: ADDRESS THESE
- The conceptual contribution of the paper is somewhat limited. For example, the relationship between the fairness definitions considered in the paper and robustness is perhaps less surprising, because such a min-max perspective is fundamental to maximin fairness.
- The technical contribution of the paper is also a bit limited. For example, the results in Section 6 about the optimizing these fairness objectives seem to follow directly from standard optimization tools, and the results in Section 7 seem to similarly follow from known results. Moreover, although the paper includes several theorems, many of these results seem to follow in a fairly straightforward manner from definition.
- The presentation of the paper is difficult to follow. For example, I found it a bit difficult to extract the key points of the paper.
}

\ifextensions
We show that the concept of robust fair objectives that arise from our constrained Rawlsian games preserve the Lipschitz and/or H\"older continuity properties of their underlying welfare or malfare concepts.
This has implications to their optimization via convex optimization convergence rates, and is also directly relevant to the generalization error and sample complexity of fair learning these objectives.
\fi

We stress that while uncertainty about the world, in particular about the weights $\wv$ (or relative sizes) of each group, does give rise to 
many standard fairness concepts,
\todo{ l1 is umswf }%
these fairness concepts are also inherently motivated through the lens of fairness itself.
In other words, we help needy groups because we feel that it is the right thing to do, not because we labor under the delusion that we may someday \emph{literally become them}.
It is a deep philosophical 
question of human nature as to whether our capacity for empathy inherently predisposes us to think in this way, though
such philosophical quandaries are
well beyond the scope of
this
work.

{
\def\bibfont{\normalfont}
\setlength{\bibsep}{4pt}
%
%
%
%
%
%
\bibliographystyle{plainnat}
\bibliography{bibliography,cyrus}
}

\end{document}